\documentclass{article} 
\usepackage{iclr2027_conference,times}

\usepackage{amsmath,amsfonts,bm}

\def\eqref#1{equation~\ref{#1}}

\def\1{\bm{1}}

\DeclareMathAlphabet{\mathsfit}{\encodingdefault}{\sfdefault}{m}{sl}
\SetMathAlphabet{\mathsfit}{bold}{\encodingdefault}{\sfdefault}{bx}{n}

\def\gM{{\mathcal{M}}}

\def\gT{{\mathcal{T}}}

\def\gV{{\mathcal{V}}}

\usepackage[utf8]{inputenc}
\usepackage[T1]{fontenc}
\usepackage{microtype}
\usepackage{amsmath,amssymb,amsthm,mathtools}
\usepackage{graphicx}
\usepackage{booktabs}
\usepackage{longtable}
\usepackage{array}
\usepackage{multirow}
\usepackage{tabularx}
\usepackage{xcolor}
\usepackage{tikz}
\usetikzlibrary{arrows.meta,positioning,fit,shapes.geometric}
\usepackage{enumitem}
\usepackage{algorithm}
\usepackage{algpseudocode}
\usepackage{hyperref}
\usepackage[capitalize,noabbrev]{cleveref}
\usepackage{url}
\usepackage{wrapfig}
\usepackage{subcaption}
\usepackage{dsfont}
\usepackage{etoc}
\usepackage{enumitem}
\usepackage{diagbox}
\usepackage{xspace}
\usepackage{adjustbox}
\usepackage{caption}

\usepackage{algorithmicx}
\usepackage{algpseudocodex}

\usepackage{setspace}

\usepackage{float}
\usepackage[most]{tcolorbox}
\definecolor{pbFrame}{HTML}{1A1AD2}   
\definecolor{pbTitle}{HTML}{0B0BA8}   
\definecolor{pbBack} {HTML}{F6F6FF}   
\definecolor{citeblue}{RGB}{0, 32, 128}
\usepackage{colortbl} 

\definecolor{guardblue}{HTML}{285C8E}
\definecolor{guardteal}{HTML}{2B7A78}
\definecolor{guardred}{HTML}{A23B3B}
\definecolor{guardgray}{HTML}{F1F3F5}
\definecolor{guardgold}{HTML}{B07A16}

\newcommand{\method}{\texttt{\textbf{PACE}}\xspace}
\newcommand{\systemhook}[1]{\texttt{\small #1}}
\newcommand{\Ctx}{\mathsf{Ctx}}
\newcommand{\Traj}{\mathsf{Traj}}
\newcommand{\Egress}{\mathsf{egress}}
\newcommand{\Cert}{\mathsf{cert}}

\newcommand{\terminalbad}{T_{\mathrm{bad}}}
\newcommand{\basefamily}{\mathsf{Beh}}
\newcommand{\tmpl}{\omega}
\newcommand{\gate}{\Gamma}
\newcommand{\cut}{\mathcal{C}}
\newcommand{\src}{\mathsf{src}}
\newcommand{\snk}{\mathsf{snk}}
\newcommand{\bind}{\beta}
\newcommand{\Sev}{\mathsf{Sev}}
\newcommand{\Impl}{\mathsf{Impl}}
\newcommand{\tightparagraph}[1]{\vspace{0.2em}\noindent\textbf{#1}}
\providecommand{\manifest}{\gM}
\definecolor{ceruleanblue}{rgb}{0.16, 0.32, 0.75}
\definecolor{kleinblue}{rgb}{0,0.18,0.65}
\hypersetup{
 colorlinks=true,
 citecolor=ceruleanblue,
 linkcolor=red,
 urlcolor=black}

\makeatletter
\DeclareRobustCommand{\Rmnum}[1]{\expandafter\@slowromancap\romannumeral #1@}  
\makeatother

\newtheorem{theorem}{Theorem}
\newtheorem{proposition}{Proposition}
\newtheorem{corollary}{Corollary}
\newtheorem{lemma}{Lemma}
\theoremstyle{definition}
\newtheorem{definition}{Definition}

\theoremstyle{remark}
\newtheorem{remark}{Remark}
\crefname{assumption}{assumption}{assumptions}
\Crefname{assumption}{Assumption}{Assumptions}
\crefname{definition}{definition}{definitions}
\Crefname{definition}{Definition}{Definitions}
\crefname{theorem}{theorem}{theorems}
\Crefname{theorem}{Theorem}{Theorems}
\crefname{proposition}{proposition}{propositions}
\Crefname{proposition}{Proposition}{Propositions}
\crefname{corollary}{corollary}{corollaries}
\Crefname{corollary}{Corollary}{Corollaries}

\title{\method{}: Provenance-Aware Capability Enforcement for Tool-Using LLM Agents}

\author{Fengpeng Li$^{1}$ \hspace{0.25cm} Qizhou Wang$^{2}$ \hspace{0.25cm}Yuke Hu$^{1}$ \hspace{0.25cm} Kemou Li$^3$ \hspace{0.25cm} \textbf{Jun Liu}$^4$ \hspace{0.25cm} \textbf{Haiwei Wu}$^5$ \hspace{0.25cm} \\ \textbf{Jiantao Zhou}$^3$\hspace{0.25cm} \textbf{Di Wang}$^1$\thanks{Corresponding author: Di Wang (di.wang@kaust.edu.sa)}\vspace{1.5mm}\\
$^1$PRADA Lab, King Abdullah University of Science and Technology \\
$^2$Imperfect Information Learning Team, RIKEN Center for Advanced Intelligence Project\\
$^3$State Key Laboratory of Internet of Things for Smart City, University of Macau \\
$^4$National Institute of Informatics\\
$^5$School of Computer Science and Engineering, University of Electronic Science and Technology of China
}

\iclrfinalcopy
\begin{document}

\maketitle
\lhead{Preprint} 
\begin{abstract}
Tool-using large language model (LLM) agents turn generated text into real side
effects, so poisoned tool metadata, retrieved pages, memory, and reusable skills
can steer the next call. Vetting an artifact before admission does not settle
this. A safe variant and a leaking variant can produce the same admission
evidence, and a sound gate then cannot relax that site for either. We make that
condition precise, which leaves the last boundary a deployment can still act on.
We present Provenance-Aware Capability Enforcement (\method{}), which mediates every tool
call immediately before it executes. Path confinement proposes an executable cut of represented
influence paths, while capability and effect verification checks schema-defined
effects against authority compiled from the authenticated request. We distinguish
the certified execution contract from the evaluated configuration, which can
restore an authorized call after a proposed block or apply a declared repair.
Confinement requires the final action to preserve the certified cut. On
eight executable agent-security benchmarks with three target-model families,
the evaluated configuration gives strictly lowest attack success in
\textbf{62 of 79 eligible attack columns} and ties in 14; full-benchmark native
utility loses at most three points relative to the undefended agent. A complete ablation
over 1167 paired cases attributes most security gains to effect verification
and refusal control to boundary adaptation. A reduced-scale adaptive search
succeeds on 0/30 out-of-authority targets against the defense. Source code is in the supplemental material.
\end{abstract}

\etocdepthtag.toc{mtchapter}

\section{Introduction}
\label{sec:introduction}

Tool-using LLM agents plan, call tools, and act inside one loop
\citep{react,toolformer,toolllm,gorilla}. The same loop reads retrieved pages,
tool returns, files, and third-party skill descriptions, so instructions and data
share the context that picks the next action \citep{zverev2025sep}. Prompt
injection exploits this across tool suites, web tasks, and multi-step workflows
\citep{perez2022,greshake2023,injecagent,agentdojo,asb,wasp,liu2024formalizing,yi2025bipia,toolemu},
with payloads arriving through MCP schemas \citep{mcptox,mcpsecbench,hou2025mcp},
reusable skills \citep{skillattack}, or poisoned memory and knowledge bases
\citep{agentpoison,poisonedrag,pasb}. Once the agent can send mail, write files,
or spend money, that text can cause a real effect. We target
deployments where it also reaches the planning context, as MCP tool descriptions
and marketplace skills do today. \Cref{app:related-work} surveys related work.

Defenses fall into four families: flow control anchored in a trusted plan or a
reconstructed trace \citep{camel,fides,rtbas,agentarmor}, declarative privilege
and task alignment \citep{progent,taskshield}, detection and re-execution
\citep{melon,datasentinel,llamafirewall,nemoguardrails,llamaguard}, and
instruction/data separation
\citep{wallace2024hierarchy,struq,secalign,hines2024spotlighting}. Their trusted inputs differ: plans, request-derived policies,
instrumented provenance, or learned distinctions between instructions and data
(\Cref{app:related-work}). \emph{Trust inversion} arises when the authority for admission is
itself attacker-controlled. An admission gate may review a skill by reading its specification
and the agent later follows that same specification, so the object under review
and the authority used to review it are one attacker-written string. Published
attacks on tool metadata exhibit this confused deputy \citep{mcptox};
adversarial prompting of reusable skills exposes a related surface
\citep{skillattack}. Dynamic labels track instrumented flows but do not
certify behavior over attacker-controlled contexts
\citep{taintdroid,libdft}. Underneath sits a mismatch of evidence:
confidentiality compares two runs and needs the semantics of both
\citep{clarkson2010,terauchi2005,barthe2004}, while a reviewer holds evidence
about one \citep{necula1997,sekar2003}. \Cref{sec:motivation} makes this precise
over the \emph{behavior family}, the trajectory distributions an artifact induces
across attacker-supplied contexts, and gives the condition under which no
admission-time evidence licenses relaxing a site.

We therefore decide at the last tool boundary before an external effect. Four
things meet there: the concrete proposal, the authenticated request, the observed
provenance, and a hook that can still change the outcome. \method{} pairs path
confinement with capability and effect verification. The first proposes actions
that sever represented influence paths; the second checks request authority over
concrete effects. Their composition depends on the final action: restoring a
blocked call can discard separation, and rewriting an authorized call can change
its effects. The certified contract preserves the cut and rechecks the final
call; the evaluated configuration permits restoration and declared repairs.
Recent provenance and execution monitors already check argument provenance
and fine-grained authority \citep{pact2026,authgraph2026,cxi2026}. Our analysis identifies the
graph-separation condition an intervention must retain alongside effect authority
(\Cref{sec:guarantee}), within the flows and effects the schemas represent.

We make the following contributions.
\begin{itemize}[leftmargin=*,itemsep=0pt,topsep=2pt,parsep=0pt]
  \item The security-relevant object is an artifact's behavior family over
attacker-controlled contexts. We show when admission-time evidence cannot
license relaxing an egress site.

  \item We propose \method{}, pairing path confinement with capability and effect verification
at the last enforceable tool boundary, and distinguish its certified contract
from the restoration and repair behavior of the evaluated configuration.

  \item We prove
  represented-path separation, authorized-effect soundness, and a conditional
  grounded-execution theorem, and specify which final interventions retain its
path-separation premise. These guarantees concern represented chains.

  \item On eight executable benchmarks with three target-model families,
  the evaluated configuration achieves strictly lowest ASR in 62 of 79 eligible
columns and ties in 14, with a paired component study and a reduced-scale
adaptive evaluation.
\end{itemize}

\section{Preliminaries}
We define the agent execution model and the security property used throughout the analysis.
\subsection{Agent Execution and Threat Model}
\label{sec:execution-model}
A frozen policy $M$ alternates free text with tool calls. An \emph{artifact} $K$
is a skill, a tool description, or an MCP schema; paired with a context
$c\in\Ctx$ carrying the request, injected text, tool returns, state, and possibly
a secret $s$, it induces a trajectory measure $\mu_{K,c}$ (\Cref{app:measure}).
The security-relevant object is therefore not $K$ but the \emph{behavior family}
$\basefamily(K)=\{\mu_{K,c}:c\in\Ctx\}$, which no audit of $K$ enumerates. The
attacker authors $K$ and any untrusted span of $c$, adapts to each defense
outcome, and succeeds on an unrequested irreversible effect or on a disclosure of
$s$. The defender acts only at the tool boundary, reading proposals and returns
and weakening or blocking a call, and it cannot retrain $M$, instrument a tool's
interior, or enumerate $\Ctx$.

\subsection{The Security Property and Why It Is Hard}
\label{sec:security-property}
Probabilistic noninterference \citep{goguen1982,sabelfeld2003} is the yardstick
we measure against, not a property we claim. \Cref{sec:motivation} uses it to
state what admission-time evidence cannot license, and \Cref{sec:guarantee} says
how far short of it \method{} stops. Let $\Egress(\tau)$ list the egress events of
a trajectory in order over a finite site set $\Omega$, with the answer $r$ as one
site, and let $\mu^{\gate}_{K,c}$ be the measure induced when $M$ runs behind a
mediator $\gate$. Two contexts differing only in $s$ should then give egress no
observer can tell apart beyond $\delta$:
\begin{equation}
  \Phi_{\mathrm{NI}}(K,\gate;\delta):\quad
  \sup_{\bar c\in\overline{\Ctx}}\ \sup_{s_0,s_1}\
  d\bigl(\Egress_{\#}\mu^{\gate}_{K,\bar c[s\leftarrow s_0]},\;
         \Egress_{\#}\mu^{\gate}_{K,\bar c[s\leftarrow s_1]}\bigr)
  \;\leq\;\delta.
  \label{eq:phi-ni}
\end{equation}
The projection keeps raw bytes, so base64, hashes, paraphrase, and fragmentation
sit inside the comparison and need no separate leak predicate. Finite observations alone do not discharge \Cref{eq:phi-ni} over an
unrestricted behavior family. It compares induced measures rather than isolated
traces, so a finite sample is not a conclusive witness of a distributional
violation. Hyperproperty reasoning \citep{clarkson2010} and self-composition
\citep{terauchi2005,barthe2004} require an adequate semantic model of the paired
executions; an artifact's code slice need not model the agent's choice of inputs.

\section{Limits of Admission-Time Certification}
\label{sec:motivation}
Consider a mail skill whose code always calls
$\texttt{send\_email}(\mathit{body})$, with a natural-language specification that
tells the model how to build $\mathit{body}$. A benign specification uses a fixed
template, a malicious one asks the model to copy a credential out of context, and
the code and interface are identical either way. That isolates the question:
\emph{when does evidence collected before execution justify removing mediation
from an egress site?}

Fix an egress site $\tmpl\in\Omega$ and let $O_{\tmpl}(\tau)$ be the subsequence
of $\Egress(\tau)$ tagged $\tmpl$, or $\bot$ if the site is never reached. It is
measurable (\Cref{lem:site-necessary}), so restricting \Cref{eq:phi-ni} to it
gives a local criterion:
\begin{equation}
  \Phi_{\mathrm{site}}(K,\gate,\tmpl;\delta):\quad
  \sup_{\bar c\in\overline{\Ctx}}\ \sup_{s_0,s_1}\
  d\bigl((O_{\tmpl})_{\#}\mu^{\gate}_{K,\bar c[s\leftarrow s_0]},\;
         (O_{\tmpl})_{\#}\mu^{\gate}_{K,\bar c[s\leftarrow s_1]}\bigr)
  \ \leq\ \delta.
  \label{eq:phi-site}
\end{equation}
Total variation contracts under a measurable map, so $\Phi_{\mathrm{NI}}$ forces
$\Phi_{\mathrm{site}}$ everywhere, and relaxing a violating site already forfeits
the global criterion. A \emph{gate} sends admission-time evidence about $K$ to
$\{\mathsf{mediate},\mathsf{relax}\}$, and is \emph{$\delta$-sound} when
\Cref{eq:phi-site} holds at every site for every artifact in its declared domain,
a class the always-mediate gate shows is nonempty. Write $\gV(K,\tmpl)$ for the
variants in that domain sharing code, interface, and site but carrying different
attacker-written specifications; \Cref{app:proof-theorem1} fixes the reference
profile and the conservative transformation it applies.

Admission decisions may use three kinds of artifact evidence: facts established
over code, as in proof-carrying code \citep{necula1997}; a sound abstraction of the artifact against its
interface \citep{sekar2003} or of agent behavior reconstructed from traces
\citep{agentarmor}; and dependencies read off the specification, which is what a
registry reviewer must do when an artifact ships as prose
\citep{mcptox,mcpsecbench,skillattack}. Two conditions on such a route carry the
argument. C1 says the route assigns one judgment to a safe and a leaking member
of $\gV(K,\tmpl)$. C2 says the attacker can exhibit a context in which the
leaking member violates \Cref{eq:phi-site}. Neither is a claim about every
admission system; \Cref{app:proof-theorem1} states both formally.
\begin{theorem}[No expansion under judgment ambiguity]
\label{thm:empty-unsound}
Fix $\delta$, a site $\tmpl$, a route $i$, and a profile $\gate_0$, and suppose
C1 and C2 hold at $(K,\tmpl)$. Let $g$ be a $\delta$-sound gate whose decision
depends on $K$ only through $\mathcal{J}_i(K,\tmpl)$ and whose induced mediator
agrees with $\gate_0$ off $\tmpl$ at every artifact in its domain. Then $g$
returns $\mathsf{mediate}$ for both variants of C1.
\end{theorem}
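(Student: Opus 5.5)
The argument is by contradiction and uses only the two conditions and the definition of $\delta$-soundness. By C1 there are two members $K_{\mathrm{safe}},K_{\mathrm{leak}}\in\gV(K,\tmpl)$ with $\mathcal{J}_i(K_{\mathrm{safe}},\tmpl)=\mathcal{J}_i(K_{\mathrm{leak}},\tmpl)$. Since $g$ sees $K$ only through $\mathcal{J}_i(K,\tmpl)$, we have $g(K_{\mathrm{safe}})=g(K_{\mathrm{leak}})$. It therefore suffices to show $g(K_{\mathrm{leak}})\neq\mathsf{relax}$. Both variants then receive $\mathsf{mediate}$, because the gate's codomain is $\{\mathsf{mediate},\mathsf{relax}\}$.

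Suppose instead that $g(K_{\mathrm{leak}})=\mathsf{relax}$ at $\tmpl$, and let $\gate_g$ be the induced mediator. By hypothesis $\gate_g$ agrees with $\gate_0$ at every site other than $\tmpl$. At $\tmpl$ it applies the relaxed behavior, which I take to be the unmediated pass-through fixed by the reference profile in \Cref{app:proof-theorem1}. So on $K_{\mathrm{leak}}$, $\gate_g$ coincides with the profile $\gate_0$ in which $\tmpl$ is relaxed. C2 supplies a context $\bar c$ and secrets $s_0,s_1$ for exactly this relaxed profile. With them, the total variation distance between $(O_{\tmpl})_{\#}\mu^{\gate_g}_{K_{\mathrm{leak}},\bar c[s\leftarrow s_0]}$ and $(O_{\tmpl})_{\#}\mu^{\gate_g}_{K_{\mathrm{leak}},\bar c[s\leftarrow s_1]}$ exceeds $\delta$. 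Measurability of $O_{\tmpl}$ (\Cref{lem:site-necessary}) makes these pushforwards well defined. Hence \Cref{eq:phi-site} fails at $\tmpl$ for $K_{\mathrm{leak}}$, which lies in $g$'s declared domain. This contradicts $\delta$-soundness, so $g(K_{\mathrm{leak}})=\mathsf{mediate}$, and by the previous paragraph $g(K_{\mathrm{safe}})=\mathsf{mediate}$ as well.

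The main obstacle is the identification step: showing that the measure induced by $\gate_g$ on $K_{\mathrm{leak}}$ is exactly the measure C2 refers to. This requires the agreement off $\tmpl$ to determine the trajectory measure everywhere except at $\tmpl$. It also requires that interventions at other sites cannot mask the leak at $\tmpl$, and this is why C2 must be stated relative to $\gate_0$ with its conservative transformation at the other sites. I would handle it by a coupling or induction over trajectory prefixes. Two mediators that agree on every tool call not tagged $\tmpl$, and act identically at $\tmpl$, induce the same kernel at each step. By the construction of $\mu$ in \Cref{app:measure}, they therefore induce the same measure, and the TV gap from C2 transfers to $\gate_g$ unchanged.
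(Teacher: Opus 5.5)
Your proof is correct and follows the paper's argument. Equal judgments force equal decisions, and assuming $\mathsf{relax}$ makes $\gate_g$ coincide with $\gate_0^{\mathsf{r}}$ at the leaking variant, so C2 contradicts $\delta$-soundness. The identification step you flag as the main obstacle is immediate in the paper: $\mu^{\gate}_{K,c}$ is determined by the mediator and $\gate_g=\gate_0^{\mathsf{r}}$ at that variant, so no coupling or prefix induction is needed.
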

\vspace{-5pt}
Equal judgments force equal decisions, and C2's context then makes the dependent
variant violate \Cref{eq:phi-site}. The work sits in C1, which for code and
code-interface evidence reduces to the variant class holding one safe and one
leaking member, and for text extraction to a semantics-changing rewrite the
extractor cannot see (\Cref{app:proof-theorem1}). Evidence that separates the
variants is untouched, including the code-resolved case
(\Cref{prop:code-resolved}). The theorem also locates where the missing evidence
first appears. Immediately before a tool runs, the concrete arguments, the
resources already read, the authenticated request, and a hook that can still
change the effect coexist, none of it available at admission time. Reachability
alone is not an authorization decision there, since a represented path says only
that an effect may occur and cannot tell a requested, taint-free call from an
attacker-induced one. The boundary therefore needs two checks, each landing on an
action the runtime implements. The next section therefore has three requirements:
represented unsafe paths must be severable, allowed effects must be authorized,
and every abstract verdict must correspond to a real intervention.

\section{\method{}: Provenance-Aware Capability
Enforcement}
\label{sec:method}
\begin{figure}[t]
    \centering
    \includegraphics[width=0.76\linewidth]{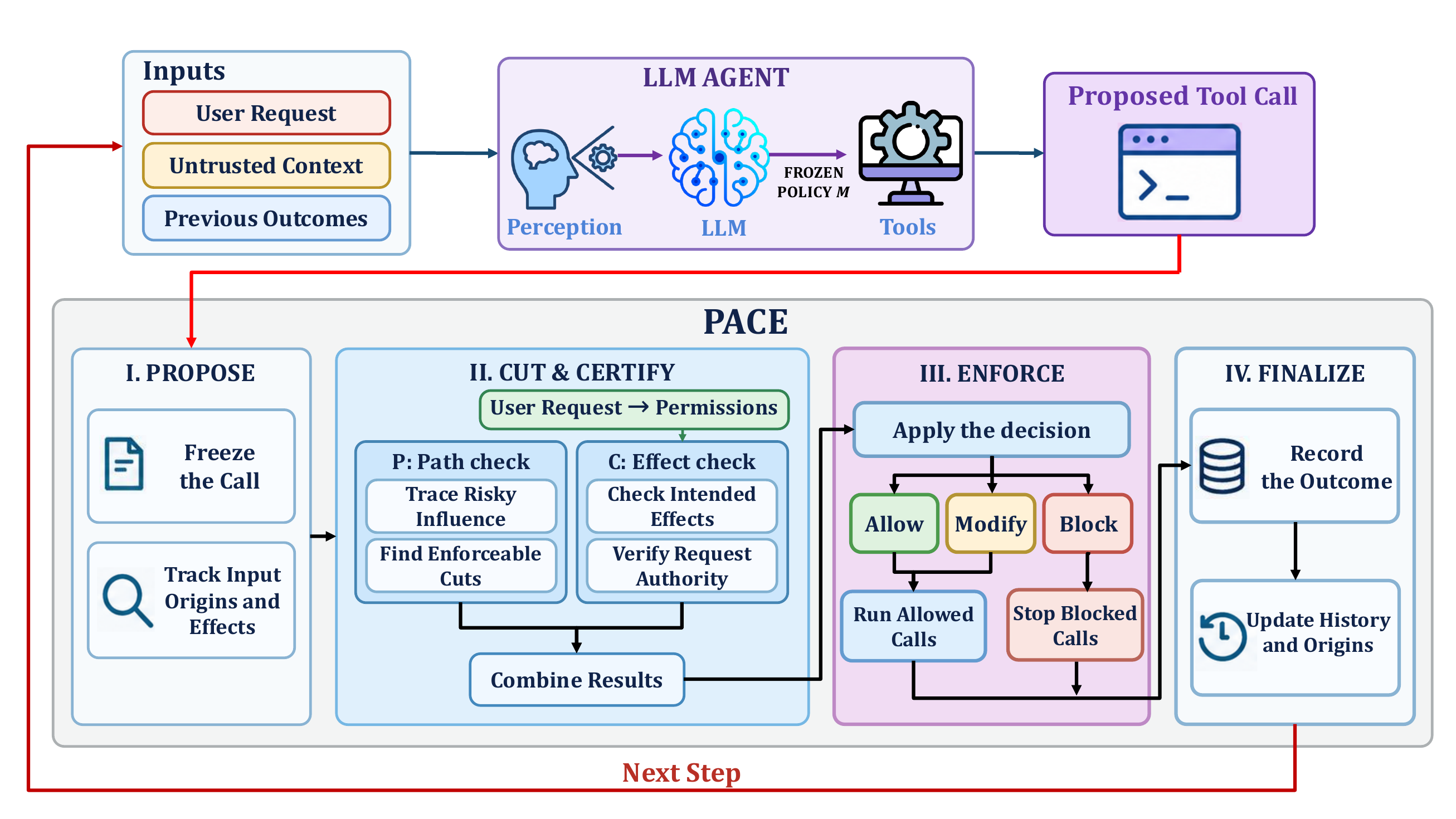}
    \caption{One mediated step of \method{}. Phase \Rmnum{1}--PROPOSE (\cref{sec:graph}) freezes the call, extends the graph, and
  expands the effect atoms; phase \Rmnum{2}--CUT \& CERTIFY (\cref{sec:cut}) cuts and certifies against committed evidence
  alone; phase~\Rmnum{3}--ENFORCE (\cref{sec:enforce-phases}) installs the manifest and dispatches; phase \Rmnum{4}--FINALIZE (\cref{sec:enforce-phases}) commits the
  outcome. }
    \label{fig:system-overview}
    \vspace{-5pt}
\end{figure}

As shown in \Cref{fig:system-overview}, \method{} handles each mediated call
in four phases. \emph{Propose} freezes its arguments, extends the episode
provenance graph, and expands schema-defined effects. \emph{Cut and Certify}
proposes a cut of represented influence paths and checks effects against
request-derived capabilities; verifiers recompute both certificates.
\emph{Enforce} installs the selected action before dispatch, and
\emph{Finalize} records the outcome for the next call. We distinguish two
configurations. \method{}-\textsc{c} is the certified contract: the installed
action preserves the cut and the final call passes effect verification.
All experiments evaluate \method{}-\textsc{p}, whose combiner can restore
C-certified calls blocked by P and dispatch declared repairs without final-call
recertification. Its fast paths and the conditions for retaining the certified
guarantee are specified below.

\subsection{Phase \Rmnum{1}: PROPOSE}
\label{sec:graph}
Phase \Rmnum{1} constructs the evidence for each boundary decision. Dynamic
tainting would need every transformation to carry a label
\citep{taintdroid,libdft}, which tool-using agents defeat by crossing opaque
boundaries. \method{} records declared structure instead. Each proposal yields a
path view over the episode graph and an effect view over the effects the schema
declares. Both are folded from an append-only audit record, so a verifier can
recheck a decision without rerunning the agent.

\begin{definition}[Evidence and Active Graphs]
\label{def:graph}
The \emph{evidence graph} $H_t=(V_t,E_t)$ is a finite directed multigraph over
data, operation, resource, and terminal nodes, folded from the episode record.
Proposals, outcomes, blocks, cancellations, and uncertain dispatches are appended
and never rewritten, so the committed part of $H_t$ only grows within an
episode. Reachability uses only the flow subgraph
\begin{equation}
\begin{aligned}
  E^{\mathrm{fl}}_t=\bigl\{\,e\in E_t\ :\ \mathrm{type}(e)\in\{\,
  &\mathtt{resource\_access},\ \mathtt{data\_flow},\ \mathtt{output\_flow},\\
  &\mathtt{resource\_write},\ \mathtt{terminal}\,\}\,\bigr\}.
\end{aligned}
\label{eq:flow}
\end{equation}
The \emph{active graph} $G^{-}_t$ is what one decision reads: the committed edges
of $H_t$ at $\sigma_t$ together with the proposed edges of $a_t$, excluding
refused and cancelled proposals. Two counters index that history. The execution version $\sigma_t$ advances on any
change to committed edges, labels, or live values. The policy epoch
$\varepsilon_t$ advances on any change to schemas, effect rules, costs,
contracts, capability grants, or adapter code. Each record advancing $\sigma_t$
stores the live-value set with an origin label per entry.
\end{definition}

\tightparagraph{Snapshot and Provenance.}
Against that state, the host first freezes the call under a fresh attempt
identifier $q_t$, storing each argument by value or as a versioned resource
handle. Later phases read the snapshot, never a value the agent controls. A
resource that is neither snapshottable nor versioned is schema-undefined, so rule
R1 refuses the call. The host then matches each frozen argument against a
provenance map $\nu$ of live values. A hit reuses the existing node, linking this
call to committed history and to any protected value behind it. A miss mints a
new data node. Provenance is kept per slot, so arguments, memory records,
artifacts, and output fields carry separate origins.

\tightparagraph{Emission and Authority.}
With every argument now a node, the host wires the proposal into the path view.
One edge runs from each schema-defined argument node to a fresh operation node
$\mathrm{op}(q_t)$. A fresh terminal $T(q_t)$ closes it, and the versioned
unsafe-effect rule labels $\mathrm{op}(q_t)\to T(q_t)$. Only these
\emph{proposed} edges remain severable by the executor. A position with no schema
gets no edge, and an operation outside the abstraction becomes an unknown
transition, wired to an unsafe terminal if it may be irreversible. The effect
view expands the same frozen call through the effect schema into \emph{effect
atoms}. Each names an operation, a target resource, a persistence class, and the
argument slots that determine it. The authenticated request compiles separately
into a capability bounded by purpose, resource, arguments, budget, expiry, and
delegation depth. It reads only the trusted request channel, so model-generated
text can neither mint nor widen a capability.

Phase \Rmnum{2} reads $G^{-}_t$, the effect atoms, the capability set, and the pinned
counters without modifying the frozen decision state.

\subsection{Phase \Rmnum{2}: Cut and Certify}
\label{sec:cut}

Phase \Rmnum{2} runs two checks over that frozen material. The path chain decides
which proposed edges to sever and binds each to an executor action; the effect
chain decides whether the concrete effects are authorized. Their certificates answer different questions; whether both guarantees survive
depends on the action the combiner actually installs.

\tightparagraph{Sources and the unsafe terminal.}
The path chain starts from two source sets. $P_t$ holds only \texttt{Data}
and \texttt{Resource} nodes labeled as secrets, personal data, credentials,
source code, or protected payloads. A derivation check also scans proposals and
observations for registered values over a finite family $\Psi$ of deterministic
re-encodings, each canonicalizing in its own alphabet:
\begin{equation}
  \mathrm{match}(y,x)=1
  \iff \exists\psi\in\Psi:\ \psi(x)\ \text{occurs in}\
  \mathrm{canon}_{\psi}(y).
  \label{eq:oracle}
\end{equation}
A match adds that argument's data node to $P_t$ for this step, as a label and not
an edge, so encodings are caught without treating substring matching as taint.
The check is a bounded approximation: paraphrase, translation, and fragmentation
lie outside $\Psi$, and \Cref{app:scope} lists the misses and false positives.
The other set, $B_t$, over-approximates where attacker influence may sit. Every
committed node carries a deployment-assigned origin, untrusted covers anything an
artifact, tool return, retrieved page, or memory record supplied, and a derived
node inherits its inputs'. The verifier closes the untrusted-origin live values
at $\sigma_t$ over committed flow edges only (\Cref{app:coverage}), so a call
built from the user request and the host alone needs no cut. At the far end, the
registry runs $\mathrm{unsafe}_{\varepsilon_t}(a_t)$ on the canonical proposal,
so $U_t=\{T(q_t)\}\cap\terminalbad$ holds the terminal of an unsafe call and
nothing else. It is disjoint from $P_t$, so the cut problem is well posed and a
safe call takes the empty cut.

\tightparagraph{Contracts and binding.}
With the endpoints fixed, binding runs before optimization, so the cut ranges
only over edges some adapter can sever. For each proposed edge $e$ the trusted
adapter declares the actions it implements, $\Impl(e)$, and the deployment
declares the subset $\Sev(e)$ whose warrants rule out every real link through
$e$ once the action is installed. \Cref{app:deferred} states those warrants and
shows why they leave redaction rarely licensed for model-chosen calls, so an
argument edge with no severing action pushes the cut back to the terminal, which
admits \texttt{block\_call}. Both lists are finite and canonically ordered, so
the binding takes the least declared cost among the survivors,
\begin{equation}
  \bind_t(e)\in\arg\min_{a\in\Sev(e)\cap\Impl(e)}\mathrm{cost}(a,e),
  \qquad \bind_t(e)=\bot\ \text{if the intersection is empty.}
  \label{eq:binding}
\end{equation}
\tightparagraph{The cut and the manifest.}
Those bindings become capacities on the flow subgraph of $G^{-}_t$:
\begin{equation}
  \kappa(e)=
  \begin{cases}
    \mathrm{cost}\bigl(\bind_t(e),e\bigr), &
      \bind_t(e)\neq\bot\ \text{and}\ e\ \text{is \texttt{proposed} for}\ a_t,\\
    \infty, & \text{otherwise},
  \end{cases}
  \qquad e\in E^{\mathrm{fl}}_t.
  \label{eq:capacity}
\end{equation}
A super-source joins $B_t\cup P_t$ and a super-sink follows $U_t$, and $\cut_t$
is the canonical minimum cut \citep{fordfulkerson}: cheapest, lexicographically
least at equal cost. Committed edges are therefore uncuttable, since the present
executor cannot change a past call. Control and intent edges never enter the
network, and strictly positive rational costs bar padding with free edges. Under
the emission rules the optimum reduces to a two-way choice that one reachability
pass settles in linear time (\Cref{prop:cut-closed-form}).
What the executor consumes is not that cut but the edge-indexed \emph{manifest}
$\manifest_t=\{(e,\bind_t(e),\mathrm{arg}(e),\mathrm{hook}(e)):e\in\cut_t\}$,
where $\mathrm{arg}(e)$ and $\mathrm{hook}(e)$ name the registry-bound argument
position and hook, one per cut-edge instance. It is \emph{admissible} only if its
entries do not conflict, every hook runs in time, and the deployment declares
that each severance contract survives under the complete set (\Cref{app:verify}).
Only that declaration is checked here, and its truth is a premise no automatic
check reaches. A conflict triggers refusal, not a search for a costlier cut.

\tightparagraph{The effect chain and the decision rule.}
The effect chain checks nine obligations over every schema-defined call:
authorization coverage, typed binding, operation and principal control,
confidentiality, persistence, delegation attenuation, budget, freshness against
request and event state, and hook availability. Write $\mathrm{eff}(a)=1$ when
all atoms of $a$ pass. For a nonblocking manifest, let $a'_t$ be its resulting call,
and $\mathrm{low}(a_t)=1$ when every effect is schema-labelled reversible, has
an empty sink, and has no persistent side effect. The influence test
$\mathrm{infl}(P_t,a_t)$ is one when a protected node reaches
$\mathrm{op}(a_t)$ or the derivation check matches a protected form;
$\mathrm{auth}(a_t)$ comes from a channel the artifact cannot write.
The certified decision contract uses the following ordered rules with R2 and R3
disabled; the evaluated fast paths are shown explicitly:
\begin{equation}
  \mathrm{dec}(a_t)=
  \begin{cases}
    \texttt{refuse} & \text{\textbf{R1} } \mathrm{schema}(\mathrm{op}(a_t),j)
                     =\bot\ \text{for the operation or some argument}\ j,\\
    \texttt{allow} & \text{\textbf{R2} } \mathrm{low}(a_t)=1,\\
    \texttt{allow} & \text{\textbf{R3} } \mathrm{auth}(a_t)=1
                     \wedge\mathrm{infl}(P_t,a_t)=0\wedge\mathrm{eff}(a_t)=1,\\
    \texttt{refuse} & \text{\textbf{R4} } \mathrm{eff}(a_t)=0,\\
    \texttt{refuse} & \text{\textbf{R5} } \text{no finite cut, or}\
                     \manifest_t\ \text{inadmissible},\\
    \texttt{refuse} & \text{\textbf{R6} } \manifest_t\ \text{nonblocking}
                     \wedge\mathrm{eff}(a'_t)=0,\\
    \manifest_t & \text{otherwise, with}\ \cut_t\ \text{possibly empty.}
  \end{cases}
  \label{eq:decision}
\end{equation}
R1 rejects undefined schemas. In \method{}-\textsc{c}, R4 checks the original
call, R5 requires an executable cut, and R6 checks the final effects of a
nonblocking manifest; a terminal block dispatches no call. In
\method{}-\textsc{p}, R3 requires all nine obligations to pass, whereas R2
records their results and can release a low-impact call despite failure of AUTH
or BUDGET. Beyond these fast paths, its combiner can restore the original call
when P selects a terminal block and C certifies it, or apply a registry-declared
repair to a repairable verdict without rerunning R6. Thus this equation specifies
the certified contract, not a claim that every evaluated dispatch satisfies it.
With sound request compilation, effect labels, validators, and complete mediation, an
out-of-authority irreversible or outward effect can execute only through a
declared repair: R2 excludes it and unrepaired release requires C's acceptance.
\Cref{app:scope,app:proof-episode} give the premises and argument.

\tightparagraph{Two certificates.}
Following proof-carrying code \citep{necula1997,sekar2003}, \method{} ships that
candidate decision as evidence rather than as a trusted verdict. The path certificate
$\Cert_t=(q_t,L^{\mathrm{p}}_t,\sigma_t,\varepsilon_t,B_t\cup P_t,U_t,\cut_t,
\manifest_t)$ carries the state it was computed against and every input to the
decision. Its verifier rebuilds the graph, the sources, and the terminal from the
episode record and the versioned registry, then rechecks separation and every
binding; \Cref{app:verify} lists the ten conditions in implementation order. The
effect certificate binds the atoms to the request, the schema, the capability
witnesses, and the provenance slice a check needs, and a deterministic verifier
reruns the nine obligations over it. The generator solves the cut and the
verifier checks the one it is handed, so we claim recomputed separation and
enforceability, not recomputed optimality. Acceptance certifies the candidate;
it does not show that the combiner executed its manifest.

\subsection{Phases \Rmnum{3} and \Rmnum{4}: Enforce and Finalize}
\label{sec:enforce-phases}

\begin{definition}[Grounded action]
\label{def:grounded}
A cut edge's bound action is \emph{grounded} at step $t$ when the adapter exposes
the corresponding hook, the executor installs it before the represented
dependency can be traversed, and its severance contract holds under the complete
manifest, not in isolation. A manifest is grounded when all entries are. An
ungrounded entry makes the certified branch refuse or cancel.
\end{definition}

\tightparagraph{Enforce.}
The executor consumes the combined decision before it invokes the tool. Three
actions suppress an effect, blocking, refusal, and a dry run that commits
nothing; four constrain one, a read-only view, an argument rewrite, sanitization
at a mediated boundary, and a budget cap; and allow passes the call through. A
severance contract may name only these, which places the mediator among the
editing enforcement mechanisms rather than the truncating ones
\citep{ligatti2005,hamlen2006}. In the certified contract, installation precedes dispatch and the call carries
exactly the manifest-produced bytes, with R6 checked on those bytes. Restored or
repaired calls in \method{}-\textsc{p} follow the combiner's selected action.
A recheck of $\sigma_t$ and $\varepsilon_t$ cancels a stale attempt. That freshness is process-local. Durable reservation,
one-shot token spending, and crash reconciliation are deployment hardening,
specified in \Cref{app:protocol} and outside the artifact we evaluate.

\tightparagraph{Finalize.}
After execution \method{} records a \texttt{ToolEvent}: whether the action ran,
the resolved resource, the observable output, artifact and network effects, the
executor mode, and the guard's reason. A blocked or dry-run call is recorded as
not executed, an allowed or rewritten one as the effect observed. That event
updates the graph, the provenance map, and the taint state the next proposal
reads. A confirmed outcome commits its output edges and mints a fresh provenance
identifier per returned value, unless the adapter certifies a pass-through; a
dispatch with no observed output commits the proposal alone, since the call may
still have taken effect. Either way the returned value first appears at step
$t+1$ and never enters its own certificate. The final response crosses the same
boundary under the same state, and its terminal is registered unsafe whenever the
influence test fires.

\subsection{Security Guarantees}
\label{sec:guarantee}

The certificates concern represented-path separation and authority over
schema-exposed effect atoms; executor grounding connects cuts to real links. Neither establishes
noninterference. A \method{}-\textsc{p} step inherits the guarantees below only
when its final action meets their premises (proofs in \Cref{app:proofs}).

\begin{proposition}[Certificate soundness]
\label{thm:cert-separation}\label{prop:effect-sound}
\emph{(a) Represented-path separation.} If the path verifier accepts $\Cert_t$,
removing $\cut_t$ makes $U_t$ unreachable from $B_t\cup P_t$ in $G^{-}_t$.
Each cut edge is a proposed flow-edge instance of $q_t$ with a re-derived
severing action; $\manifest_t$ binds each exactly once and is \emph{declared}
jointly admissible; $\kappa(\cut_t)$ equals the stated cost; and the named
state and epoch are current at acceptance.
\emph{(b) Authorized-effect soundness.} Assume capabilities issue only from
authenticated requests, the effect schema and slot provenance cover the concrete
effect, declared validators and declassifiers meet their contracts, and every
required hook is applied. If the effect verifier certifies $a_t$, every effect
atom is covered by a valid capability and satisfies the nine obligations
of \Cref{sec:cut}.
\end{proposition}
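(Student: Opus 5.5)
Both parts are soundness-of-verifier statements, so the plan is to read each conclusion off the checks the verifier performs, together with the determinism of reconstruction from the append-only record. For (a), I would first fix what acceptance means. The verifier rebuilds $G^{-}_t$, $B_t\cup P_t$ and $U_t$ from the episode record at the named $(\sigma_t,\varepsilon_t)$ and the versioned registry, then runs the ten conditions of \Cref{app:verify}. The first step is a reconstruction lemma: because committed edges are append-only (\Cref{def:graph}) and every input is pinned by $\sigma_t$ and $\varepsilon_t$, the rebuilt graph, sources, and terminal equal the ones the generator used. The freshness condition says these counters are current at acceptance, which gives the last clause of (a) directly.

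Separation then follows from the reachability check. The verifier computes the set $R$ of nodes reachable from $B_t\cup P_t$ in the flow subgraph $E^{\mathrm{fl}}_t\setminus\cut_t$ and checks $R\cap U_t=\emptyset$. I would state this as a one-line lemma: a search over a finite graph computes reachability exactly, so acceptance implies no path survives the cut.

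The remaining clauses of (a) are membership and bookkeeping checks, each discharged by the condition that tests it:
\begin{itemize}
\item each $e\in\cut_t$ is a flow edge tagged \texttt{proposed} for $q_t$;
\item $\bind_t(e)$ is re-derived by \Cref{eq:binding} from the registry at $\varepsilon_t$ and lies in $\Sev(e)\cap\Impl(e)$;
\item $\manifest_t$ and $\cut_t$ are in bijection on edge instances;
\item the admissibility declaration is present;
\item $\sum_{e\in\cut_t}\kappa(e)$ equals the stated cost.
\end{itemize}
Because every cut edge is proposed with a non-$\bot$ binding, each has finite capacity by \Cref{eq:capacity}. The stated cost therefore equals $\kappa(\cut_t)$ and is not infinite. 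I would note explicitly that the verifier does not check minimality and that (a) does not claim it.

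For (b), the plan is conditional soundness. Certification means a deterministic rerun of the nine obligations succeeds on every atom in the expansion of the frozen call. The key step is completeness of that expansion: by the schema and slot-provenance coverage premise, the atoms the verifier checks are all the concrete effects of $a_t$. The capability premise then says every capability witness came from an authenticated request, so "covered by a valid capability" refers to genuine authority. The validator, declassifier, and hook premises let the confidentiality, persistence, and hook-availability obligations be read as facts about the effect rather than about labels. With these in place, (b) is a case analysis over the nine obligations, each holding because its check passed on the certificate's bound inputs.

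The main obstacle is the reconstruction lemma in (a). It requires that the active graph exclude refused and cancelled proposals consistently, and that live-value origin labels stored with each $\sigma_t$ record determine $B_t$ identically on both sides. I would handle this by appealing to the per-record live-value sets in \Cref{def:graph} and the closure over committed flow edges described in \Cref{app:coverage}.
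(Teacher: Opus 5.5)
Your route matches the paper's. You read each clause of (a) off the ten conditions of \Cref{app:verify}, and you obtain (b) from the rerun of the nine obligations together with the coverage, validator, declassifier, and hook premises. Your bookkeeping list, the finiteness argument for $\kappa(\cut_t)$, the disclaimer on minimality, and your treatment of (b) all agree with the paper's proof. Two points need correcting.

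First, the reconstruction lemma you call the main obstacle targets the wrong equality. The statement concerns $G^{-}_t$, $B_t\cup P_t$, and $U_t$ at the named state. The verifier reads no graph, capacity, or verdict from the certificate producer, so whether the generator used the same graph does not matter. That equality also need not hold when the producer is faulty. Resting (a) on it would make certificate soundness depend on trusting the generator, which is exactly what recomputation is meant to avoid.

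What the paper uses instead is that conditions 1--3 fix the objects. They require $\sigma_t$ and $\varepsilon_t$ to be current and $L^{\mathrm p}_t$ to be the complete immutable \texttt{PROPOSE} record of $q_t$ with no decision record. Then the verifier's own fold \emph{is} the active graph named, and the current epoch fixes $\Sev$, $\Impl$, $\terminalbad$, and the schema. Binding to the proposal rather than to a decision field is what breaks the circularity. The ``no decision'' clause, together with the fold's instance-status rules, handles the exclusion of refused and cancelled proposals that worried you.

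Second, ``current at acceptance'' does not follow from the freshness tests by themselves. The paper relies on acceptance being atomic: the freshness tests, the condition checks, and the store advance share one linearization point inside $\textsc{AcceptIfFresh}$. Without that, a state checked as current could already be stale when the decision is recorded.
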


Minimality remains declared, and omissions shared with the builder remain
possible (\Cref{app:scope}). Rewrites can change effects, so
\method{}-\textsc{c} rechecks (b) on $a'_t$ under R6.

For a real causal chain $\gamma=x_0\to\cdots\to x_{m-1}\to\zeta_t$ ending at
the pending effect, a link is \emph{covered} if its adapter schema exposes the
relevant value position and a timely hook, and \emph{emitted} if its flow edges
appear before dispatch. Under maps $\alpha_{\mathrm D}$ to data nodes and
$\alpha_{\mathrm T}$ to a terminal, $\gamma$ is \emph{simulated} if its
endpoints match, its per-link segments concatenate into one path in
$E^{\mathrm{fl}}_t$, and each real link requires traversing every edge of its
segment.

\begin{theorem}[Conditional grounded execution]
\label{thm:confinement}
Let both verifiers accept with grounded $\manifest_t$ (\Cref{def:grounded}).
Let $\gamma$ connect a registered protected source to the pending unsafe effect,
with every link covered, emitted in time, and simulated, and complete source and
terminal labels (H3a and H3c of \Cref{app:proofs}). If the proposal, state, and
epoch remain current through dispatch, $\gamma$ cannot complete at this call.
Replacing H3a with H3b gives integrity (\Cref{cor:integrity}).
\end{theorem}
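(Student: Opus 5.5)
The proof lifts the real chain $\gamma$ into the active graph and then uses the path certificate. By Proposition~\ref{thm:cert-separation}(a), acceptance of $\Cert_t$ means that removing $\cut_t$ disconnects $U_t$ from $B_t\cup P_t$ in the flow subgraph of $G^{-}_t$. So it suffices to show two things. First, if $\gamma$ completed, its image would be a path from $P_t$ to $U_t$ in $E^{\mathrm{fl}}_t$. Second, grounding turns the abstract removal of a cut edge into the real failure of some link of $\gamma$.

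\textbf{Lifting $\gamma$ to a represented path.} By H3a, the protected source $x_0$ is registered, so $\alpha_{\mathrm D}(x_0)\in P_t$. By H3c, the pending effect is labeled unsafe, so $\alpha_{\mathrm T}(\zeta_t)=T(q_t)\in U_t$. Since every link is covered and emitted in time, its flow edges are present in $G^{-}_t$ before dispatch. Since $\gamma$ is simulated, the per-link segments concatenate into a single path $\pi$ in $E^{\mathrm{fl}}_t$ from $\alpha_{\mathrm D}(x_0)$ to $T(q_t)$. Currency of the proposal, $\sigma_t$, and $\varepsilon_t$ through dispatch ensures that this $G^{-}_t$ is the graph the certificate was computed against. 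If it were not, the stale attempt would already have been cancelled (\Cref{sec:enforce-phases}), and the call would not dispatch at all.

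\textbf{Separation and grounding.} By Proposition~\ref{thm:cert-separation}(a), $\pi$ contains some edge $e\in\cut_t$. Let $\ell$ be the real link of $\gamma$ whose segment contains $e$. The cut contains only proposed edges of $q_t$, since committed edges have infinite capacity. Hence $e$ is an edge into $\mathrm{op}(q_t)$ or the edge $\mathrm{op}(q_t)\to T(q_t)$, and its bound action $\bind_t(e)\in\Sev(e)\cap\Impl(e)$ appears in $\manifest_t$. By Definition~\ref{def:grounded}, the hook is exposed and installed before the dependency can be traversed. Its severance contract also holds under the complete manifest, so every real link through $e$ is ruled out. The simulation clause says that traversing $\ell$ requires traversing every edge of its segment, including $e$. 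Hence $\ell$ cannot be traversed at this call, and $\gamma$ cannot complete. If the bound action is \texttt{block\_call}, no call is dispatched and the conclusion holds trivially. The effect verifier's acceptance is not needed for separation itself. It only ensures that a nonblocking manifest's rewritten call $a'_t$ is still dispatched rather than refused, so that the case analysis above covers exactly the dispatched action.

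\textbf{Main obstacle and integrity.} The delicate step is matching the endpoint labels with the certificate's sets. The path lives in $E^{\mathrm{fl}}_t$ of the current active graph, but $P_t$ may also contain derivation-check labels. I would argue that H3a alone puts $x_0$ in $P_t$, because registered protected nodes are sources regardless of the oracle, so the match oracle is never relied on. Likewise, $T(q_t)\in U_t$ follows from H3c, because $\mathrm{unsafe}_{\varepsilon_t}$ is evaluated under the current epoch. For Corollary~\ref{cor:integrity}, replacing H3a by H3b should place $\alpha_{\mathrm D}(x_0)$ in $B_t$ through the untrusted-origin closure over committed flow edges. The rest of the argument then goes through unchanged, because the cut separates all of $B_t\cup P_t$ from $U_t$.
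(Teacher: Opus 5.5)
Your proposal is correct and follows the paper's proof essentially step for step. H3a, H3c, and the emission and simulation premises lift $\gamma$ to a flow path from $P_t$ to $T(q_t)$ in $G^{-}_t$, and certificate separation forces an edge of that path into the cut. Grounding (the severance contract holding under the complete manifest), simulation entailment, and freshness then make the corresponding real link impossible to take, with integrity following by placing the source in $B_t$ via H3b.
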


Simulation maps $\gamma$ to a $P_t$--$U_t$ path crossing the cut. Grounded
severance prevents the corresponding real link; freshness aligns certified and
dispatched proposals. \Cref{prop:transparent} gives sufficient adapter conditions
for emission and simulation.

\begin{proposition}[Transparent chains are simulated]
\label{prop:transparent}
Let $\gamma$ have $m\geq1$ and start at a registered value with a host provenance
identifier. Suppose each call reads, writes, and returns only at declared
positions, issues no hidden secondary effect, and preserves pass-through value
identifiers; each $x_{i+1}$ reaches the next call as a host-held handle or bytes
$\nu$ matches to its own node; and every earlier call producing a value or
resource version in $\gamma$ was finalized before $q_t$ was proposed. Under the
host emission rules of \Cref{sec:graph,app:coverage}, $\gamma$ is emitted and
simulated in $G^{-}_t$.
\end{proposition}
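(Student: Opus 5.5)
The plan is to build the simulating maps $\alpha_{\mathrm D},\alpha_{\mathrm T}$ explicitly and show, by induction along $\gamma=x_0\to\cdots\to x_{m-1}\to\zeta_t$, that each real link is matched by a segment of flow edges in $G^{-}_t$. We take $\alpha_{\mathrm D}(x_i)$ to be the data (or resource-version) node that the provenance map $\nu$ associates with $x_i$, and $\alpha_{\mathrm T}(\zeta_t)=T(q_t)$.

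\emph{Base case.} The hypothesis that $x_0$ is a registered value with a host provenance identifier gives a unique node. Registration places that node in $P_t$, so the start endpoint matches.

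\emph{Inductive step.} Fix an intermediate link $x_i\to x_{i+1}$ realized by an earlier call. By hypothesis that call was finalized before $q_t$ was proposed, so its Finalize record is committed at $\sigma_t$.
\begin{itemize}
\item Its \texttt{ToolEvent} contributes $\mathtt{data\_flow}$ or $\mathtt{resource\_access}$ edges from the argument nodes into its operation node.
\item It also contributes $\mathtt{output\_flow}$ or $\mathtt{resource\_write}$ edges from the operation node to the returned value or resource version.
\item Declared positions and the absence of hidden secondary effects guarantee that the real read and write happen only at the slots for which edges were emitted.
\item Pass-through preservation, or else a freshly minted identifier recorded at Finalize, makes the output node the one $\nu$ later resolves.
\end{itemize}
The hypothesis that $x_{i+1}$ reaches the next call as a host-held handle or as bytes that $\nu$ matches to its own node then gives a $\nu$-hit rather than a miss. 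The next operation's incoming edge therefore starts at $\alpha_{\mathrm D}(x_{i+1})$, and consecutive segments share endpoints and concatenate.

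\emph{Final link.} For $x_{m-1}\to\zeta_t$, the Emission rules of \Cref{sec:graph} add a proposed edge from the argument node $\alpha_{\mathrm D}(x_{m-1})$ to $\mathrm{op}(q_t)$, together with $\mathrm{op}(q_t)\to T(q_t)$. Both lie in $E^{\mathrm{fl}}_t$ and in $G^{-}_t$ before dispatch, which gives emission and the terminal endpoint.

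The main obstacle is the third clause of simulation: every real link must require traversing every edge of its segment, not merely be accompanied by a path. We would argue this link-locally. Because a call reads only at declared positions, any real dependence of $x_{i+1}$ on $x_i$ must pass through the declared argument slot, and hence through the unique argument-to-op edge for that slot. Because the call writes and returns only at declared positions, the dependence must also leave through the unique op-to-output edge.

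Two details need care here. First, \emph{per-slot provenance}: since slots carry separate origins, no edge of the segment can be shared with an unrelated argument, so each edge of the segment is necessary. Second, the \emph{finalization-ordering hypothesis}: it rules out a value that appears in $\gamma$ while its producing edges are still only proposed by another in-flight attempt, edges that $G^{-}_t$ excludes. If this necessity argument cannot be made fully formal, the fallback is to read ``requires traversing'' as a consequence of the no-hidden-effect assumption, stated as the adapter transparency premise.
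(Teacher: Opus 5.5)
Your plan follows the paper's proof in nearly every step. You define $\alpha_{\mathrm D}$ via $\nu$ with $\alpha_{\mathrm T}(\zeta_t)=T(q_t)$, induct along the links using the emission rules, and use the finalization hypothesis to make earlier edges committed in $G^{-}_t$. You use the transport hypothesis to force a $\nu$-hit so segments share endpoints, apply (E1) and (E4) to the final link, and get entailment from declared positions plus the absence of secondary effects.

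\textbf{The gap is the resource-mediated link.} Your inductive step assumes each link $x_i\to x_{i+1}$ is realized by a single call. To make the write case fit, you let $\alpha_{\mathrm D}$ take values at resource-version nodes. But simulation is defined with $\alpha_{\mathrm D}$ mapping chain values to \emph{data} nodes. A link may also be realized by a declared write $w_i$ followed by a declared read $w_i'$ that returns $x_{i+1}$; the hypothesis about calls producing a resource version exists for this case. Then your segment $\alpha_{\mathrm D}(x_i)\to\mathrm{op}(w_i)\to R_{v+1}$ stops at $R_{v+1}\neq\alpha_{\mathrm D}(x_{i+1})$. Your claim that the next incoming edge starts where this segment ended therefore fails.

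What is missing is the paper's four-edge segment:
\begin{itemize}
\item (E5) on the confirmed write mints $R_{v+1}$ and commits $\mathrm{op}(w_i)\to R_{v+1}$;
\item (E1) for the reader gives $R_{v+1}\to\mathrm{op}(w_i')$;
\item (E2) gives $\mathrm{op}(w_i')\to\alpha_{\mathrm D}(x_{i+1})$.
\end{itemize}
The finalization hypothesis must then be applied to both $w_i$ and $w_i'$. Entailment must show that taking the link requires both the declared write and the declared read, hence all four edges. If you instead treat the resource content as an extra chain element, you need a separate argument that simulating the refined chain simulates the given one.

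Two smaller points. First, your base case says registration places $\alpha_{\mathrm D}(x_0)$ in $P_t$. That is hypothesis H3a of \Cref{thm:confinement}; this proposition neither assumes nor needs it, since simulation only asks for a concatenated path from $\alpha_{\mathrm D}(x_0)$ to $T(q_t)$. Second, your necessity argument treats only value links. For the final link you also need that taking $x_{m-1}\to\zeta_t$ traverses both the argument edge into $\mathrm{op}(q_t)$ and $\mathrm{op}(q_t)\to T(q_t)$. The paper gets this from (E4) firing before dispatch and the schema assigning the terminal edge to the mediated effect boundary.
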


The transparent fragment covers typed handles, host-managed values,
pass-through chains, and host-versioned resources, including internal computation
(\Cref{rem:transparent-scope}). It excludes model-recomposed strings without
host provenance, unexposed effects, tool-internal secondary calls, and writes
bypassing versioned adapters. Observable refusal remains a channel, with no bound of the form
\Cref{eq:phi-site} proved at any $\delta$.
Final actions must preserve separation: in \Cref{prop:cut-closed-form}, removing
$S\subseteq A\cup\{e_{\mathrm T}\}$ separates sources from the unsafe terminal
exactly when $e_{\mathrm T}\in S$ or $A^\star\subseteq S$, by each unsafe
path's last two edges (\Cref{app:proof-cut}). Restoring a blocked call removes
neither, losing separation when $A^\star\ne\emptyset$ even if C authorizes it.
Complete mediation and the remaining theorem premises extend the per-call result
to an episode invariant for \method{}-\textsc{c} (\Cref{cor:episode}), with no
such general claim for the evaluated configuration.

\begin{table}[tbp]
\centering
\caption{Main results on benchmarks with target-model families. ASR
is the worst attack group for that benchmark and model. Each block
shows the undefended agent, \method{}, and the three strongest published
defenses, ranked by mean worst-group ASR. Util.\ is the benchmark's own score and
is higher better except on MCPTox, where it is the refusal rate. Best ASR per block and model in bold.}
\label{tab:main}
\small
\setlength{\tabcolsep}{2.5pt}
\begin{minipage}[t]{0.5\linewidth}\centering
\scalebox{0.70}{
\begin{tabular}{@{}lcccccc@{}}
\toprule
& \multicolumn{2}{c}{\textbf{Qwen}} & \multicolumn{2}{c}{\textbf{DeepSeek}} & \multicolumn{2}{c}{\textbf{ChatGPT}}\\
\cmidrule(lr){2-3}\cmidrule(lr){4-5}\cmidrule(lr){6-7}
Method & ASR $\downarrow$ & Util. & ASR $\downarrow$ & Util. & ASR $\downarrow$ & Util. \\
\midrule
\multicolumn{7}{@{}l}{\emph{AgentDojo}\ \ }\\[1pt]
Baseline & 48.0 & 62.59 & 99.84 & 75.68 & 100.0 & 73.77 \\
CaMel & 0.32 & 35.14 & 28.14 & 60.04 & 36.09 & 57.02 \\
DTA & 0.95 & 29.84 & 33.33 & 54.0 & 55.01 & 56.01 \\
DataSent. & 35.29 & 73.77 & 78.06 & 85.0 & 84.1 & 82.95 \\
\method{} & \textbf{0.0} & 59.67 & \textbf{5.09} & 75.25 & \textbf{8.11} & 72.05 \\
\midrule
\multicolumn{7}{@{}l}{\emph{AgentDyn}\ \ }\\[1pt]
Baseline & 32.5 & 63.39 & 19.5 & 65.0 & 19.5 & 74.82 \\
Repeat Prompt & 34.5 & 61.25 & 31.5 & 61.43 & 31.5 & 70.54 \\
CaMel & \textbf{0.0} & 0.0 & 5.5 & 3.75 & \textbf{0.0} & 0.0 \\
DRIFT & 0.56 & 29.82 & 2.78 & 13.39 & 4.5 & 14.64 \\
\method{} & \textbf{0.0} & 60.54 & \textbf{0.06} & 63.57 & \textbf{0.0} & 73.57 \\
\midrule
\multicolumn{7}{@{}l}{\emph{ASB}\ \ }\\[1pt]
Baseline & 88.5 & 30.29 & 74.08 & 64.0 & 81.08 & 80.25 \\
Sandwich & 88.5 & 32.5 & 81.83 & 69.07 & 71.21 & 87.86 \\
Direct Para. & 71.5 & 62.0 & 79.0 & 81.64 & 81.75 & 81.57 \\
PoT Shuffle & 99.0 & 28.57 & 65.42 & 57.43 & 59.25 & 67.04 \\
\method{} & \textbf{0.42} & 29.57 & \textbf{10.5} & 64.64 & \textbf{8.75} & 81.63 \\
\midrule
\multicolumn{7}{@{}l}{\emph{WASP}\ \ }\\[1pt]
Baseline & 19.05 & 78.57 & 9.52 & 79.76 & 4.76 & 86.9 \\
Prompt Filter & 14.29 & 82.14 & 7.14 & 75.0 & 11.9 & 83.33 \\
Simple Static & 20.37 & 72.64 & 16.67 & 64.29 & 19.04 & 84.52 \\
FIDES & 40.48 & 0.0 & 30.95 & 32.14 & 23.81 & 59.52 \\
\method{} & \textbf{7.14} & 92.86 & \textbf{0.0} & 91.67 & \textbf{0.0} & 95.24 \\
\bottomrule
\end{tabular}}
\end{minipage}%
\begin{minipage}[t]{0.5\linewidth}\centering
\scalebox{0.70}{
\begin{tabular}{@{}lcccccc@{}}
\toprule
& \multicolumn{2}{c}{\textbf{Qwen}} & \multicolumn{2}{c}{\textbf{DeepSeek}} & \multicolumn{2}{c}{\textbf{ChatGPT}}\\
\cmidrule(lr){2-3}\cmidrule(lr){4-5}\cmidrule(lr){6-7}
Method & ASR $\downarrow$ & Util. & ASR $\downarrow$ & Util. & ASR $\downarrow$ & Util. \\
\midrule
\multicolumn{7}{@{}l}{\emph{InjecAgent-E}\ \ (no util.)}\\[1pt]
Baseline & 79.07 & -- & 52.29 & -- & 50.5 & -- \\
Sandwich & 0.59 & -- & 50.62 & -- & 73.12 & -- \\
Instr. Prevention & 58.82 & -- & 67.29 & -- & 56.63 & -- \\
PI-Det. & 20.0 & -- & 56.08 & -- & 57.36 & -- \\
\method{} & \textbf{0.0} & -- & \textbf{3.12} & -- & \textbf{30.77} & -- \\
\midrule
\multicolumn{7}{@{}l}{\emph{MCPTox}\ \ (Util.: refusal $\downarrow$)}\\[1pt]
Baseline & 32.52 & 0.16 & 56.54 & 0.42 & 86.93 & 27.96 \\
Allowlist & 0.72 & 2.46 & 15.18 & 7.66 & 16.29 & 55.78 \\
MCIP & 0.51 & 54.23 & 9.95 & 76.72 & 10.04 & 81.22 \\
Progent & 8.8 & 45.88 & 20.42 & 66.85 & 46.4 & 78.43 \\
\method{} & \textbf{0.0} & 1.97 & \textbf{2.62} & 0.58 & \textbf{2.65} & 28.55 \\
\midrule
\multicolumn{7}{@{}l}{\emph{PASB}\ \ (IPI util.)}\\[1pt]
Baseline & 85.0 & 98.47 & 92.5 & 99.24 & 97.5 & 100.0 \\
Delimiters & 45.0 & 99.23 & 55.0 & 98.47 & 35.0 & 99.24 \\
Instr. Prevention & 17.5 & 97.67 & 27.5 & 96.95 & 27.5 & 98.47 \\
Progent & 80.0 & 100.0 & 77.5 & 100.0 & 62.5 & 98.47 \\
\method{} & \textbf{0.0} & 97.69 & \textbf{5.0} & 99.24 & \textbf{2.5} & 100.0 \\
\midrule
\multicolumn{7}{@{}l}{\emph{MSB}\ \ (Util.: PUA)}\\[1pt]
Baseline & 38.75 & 46.34 & 100.0 & 91.24 & 98.75 & 89.08 \\
MCIP & 50.0 & 59.15 & 77.5 & 96.55 & 76.25 & 92.34 \\
PI-Guard & 37.5 & 56.42 & 57.5 & 91.98 & 66.25 & 89.97 \\
Tool Filter & 30.81 & 47.34 & 59.35 & 93.65 & 52.5 & 91.04 \\
\method{} & \textbf{2.5} & 60.14 & \textbf{3.75} & 98.83 & \textbf{3.75} & 94.87 \\
\bottomrule
\end{tabular}}
\end{minipage}
\end{table}
\section{Experiments}
\label{sec:experiments}

In this section, we give the experimental settings and show some compared results with more detailed in \cref{app:full-tables}. All experiments evaluate \method{}-\textsc{p}, denoted \method{} in the tables.

\subsection{Experimental Setup}
\label{sec:setup}
\tightparagraph{Models and measurement.}
Three frozen target families, \texttt{gpt-5.6-luna},
\texttt{DeepSeek-V4-Flash}, and local vLLM-served \texttt{Qwen-3.8-27B},
decode deterministically at temperature 0 with a 1024-token completion limit. Target, attack source, and judge roles
are separate; a model from the target's family never judges alone. Native metric
point estimates are unpooled across benchmarks. Rates record numerators, valid
denominators, invalid and technical-failure counts; missing required defense-hook
invocations invalidate defended episodes
(More information can be found in \Cref{app:hyperparameters,app:benchmark-details,app:statistics}).

\tightparagraph{Benchmarks and baselines.}
Eight executable benchmarks retain native attack construction: AgentDojo
\citep{agentdojo}, AgentDyn \citep{agentdyn}, WASP \citep{wasp}, InjecAgent
\citep{injecagent}, ASB \citep{asb}, PASB \citep{pasb}, MCPTox
\citep{mcptox}, and MSB \citep{msb}. Baselines span five families:
trusted flow and information-flow control (CaMeL \citep{camel}, FIDES
\citep{fides}, DRIFT \citep{drift}); tool-use safeguards (Progent
\citep{progent}, MCIP Guardian \citep{mcip}, ToolShield \citep{toolshield},
AgentDojo tool filter, MCPTox allowlist); re-execution (MELON \citep{melon});
detection (DataSentinel \citep{datasentinel}, PIGuard \citep{piguard},
PromptGuard-2 \citep{promptguard2}, AgentDojo injection detector, MCPTox metadata
sanitizer); and ASB and WASP prompt-level defenses, including spotlighting
\citep{hines2024spotlighting}. Fully instrumented DTA \citep{taintdroid,libdft}
is a diagnostic taint-tracking reference excluded from main-table comparator selection
(\Cref{app:port-fidelity}). Comparisons run locally at pinned revisions.
Capabilities compile deterministically from the trusted request channel, excluding attack
labels and judge outputs. Benchmark-specific policies and schemas were frozen
before final runs (\Cref{app:benchmark-details}).

\subsection{Main Results}
\label{sec:main-results}

Within benchmarks, \Cref{tab:main} compare the worst eligible
attack group per model; \Cref{app:full-tables} gives all groups and the
comparator-selection rule. Among 79 eligible per-group ASR columns,
\method{}-\textsc{p} is strictly lowest in 62, tied lowest in 14, and higher in
three, excluding the derived InjecAgent data-stealing aggregate, its three
conditional DS-S2 columns, and two columns with zero undefended ASR. On \texttt{Qwen-3.8-27B}, four of five ASB groups have zero ASR. Residuals
are 4.76\% and 7.14\% on WASP, 0.42\% on ASB-DPI, and 2.50\% on MSB-Call.
On AgentDojo, \method{} attains lower worst-group ASR and higher utility than
CaMeL, MELON, and Progent on all three models. Relative to undefended agents,
full-benchmark utility drops at most 2.92 points, gains 14.29, 11.91, and
8.34 points on WASP, and improves in MSB PUA and NRP; MCPTox refusal differs
by at most 1.81 points. The three lower-ASR comparator columns have much lower
companion utility (\Cref{app:full-tables}). Reduced-scale adaptive search
on 30 out-of-authority AgentDojo pairs yields 21/30 (70.0\%) undefended
success, including nine first successes from search-generated payloads,
versus 0/30 for \method{}-\textsc{p} (two-sided 95\% Wilson interval
[0, 11.4\%]). All 157 target calls at its boundary were blocked before
dispatch without R2 release. \Cref{app:benchmark-details,app:additional-results}
detail the one-model, three-suite protocol and controlled checks.
\begin{wraptable}[17]{r}{0.45\textwidth}
\centering
\caption{Ablation study with the local \texttt{Qwen-3.8-27B} endpoint. }
\label{tab:ablation}
\small
\setlength{\tabcolsep}{2.5pt}
\resizebox{\linewidth}{!}{
\begin{tabular}{@{}lcccccccc@{}}
\toprule
& A0 & A1 & A2 & A3 & A4 & A5 & A6 & A7 \\
\cmidrule(lr){2-9}
Path confinement (P)            & & \checkmark & & & \checkmark & \checkmark & & \checkmark \\
Capability and effect (C)       & & & \checkmark & & \checkmark & & \checkmark & \checkmark \\
Boundary adaptation (B)         & & & & \checkmark & & \checkmark & \checkmark & \checkmark \\
\midrule
\multicolumn{9}{@{}l}{\emph{Attack success (\%), lower better}}\\[1pt]
WASP, plain text                & 8.3 & 8.3 & 0.0 & 0.0 & 8.3 & 0.0 & 8.3 & \textbf{0.0} \\
WASP, URL injection             & 8.3 & 8.3 & 8.3 & 0.0 & 0.0 & 0.0 & 16.7 & \textbf{0.0} \\
AgentDojo, overall              & 37.4 & 0.0 & 0.0 & 33.3 & 0.0 & 0.0 & 0.0 & \textbf{0.0} \\
InjecAgent-Base, valid          & 14.7 & 0.0 & 0.0 & 14.6 & 0.0 & 1.0 & 0.0 & \textbf{0.0} \\
InjecAgent-Enhanced, valid      & 0.0 & 0.0 & 0.0 & 0.0 & 0.0 & 0.0 & 0.0 & \textbf{0.0} \\
ASB, overall                    & 0.0 & 0.0 & 0.0 & 0.0 & 0.0 & 0.0 & 0.0 & \textbf{0.0} \\
AgentDyn, overall               & 25.0 & 0.0 & 1.7 & 23.3 & 1.7 & 0.0 & 0.0 & \textbf{0.0} \\
PASB, overall                   & 41.4 & 31.0 & 0.0 & 41.4 & 0.0 & 39.7 & 0.0 & \textbf{0.0} \\
MCPTox, valid                   & 5.2 & 0.8 & 0.0 & 5.4 & 0.0 & 0.8 & 0.0 & \textbf{0.0} \\
MSB, official macro             & 24.8 & 7.1 & 4.5 & 0.0 & 5.8 & 0.0 & 0.0 & \textbf{0.0} \\
\midrule
\multicolumn{9}{@{}l}{\emph{Native utility and refusal (\%)}}\\[1pt]
WASP, utility $\uparrow$        & 83.3 & 83.3 & 83.3 & 95.8 & 79.2 & 66.7 & 87.5 & 83.3 \\
AgentDojo, utility $\uparrow$   & 76.8 & 13.6 & 64.6 & 76.8 & 65.2 & 13.6 & 66.2 & 67.2 \\
AgentDyn, utility $\uparrow$    & 66.7 & 0.0 & 36.7 & 66.7 & 36.7 & 0.0 & 31.7 & 33.3 \\
ASB, utility $\uparrow$         & 0.0 & 0.0 & 0.0 & 0.0 & 30.0 & 0.0 & 0.0 & 30.0 \\
ASB, refusal $\downarrow$       & 39.3 & 41.8 & 38.6 & 39.6 & 37.1 & 43.2 & 41.8 & 40.7 \\
PASB, IPI utility $\uparrow$    & 100.0 & 0.0 & 100.0 & 100.0 & 100.0 & 0.0 & 100.0 & 100.0 \\
MCPTox, refusal $\downarrow$    & 42.2 & 78.4 & 99.3 & 0.8 & 98.5 & 46.5 & 1.6 & 2.3 \\
MSB, PUA $\uparrow$             & 48.3 & 52.5 & 55.8 & 58.2 & 56.6 & 54.6 & 59.0 & \textbf{62.0} \\
MSB, NRP $\uparrow$             & 36.3 & 48.8 & 53.3 & 58.2 & 53.3 & 54.6 & 59.0 & \textbf{62.0} \\
\bottomrule
\end{tabular}}
\end{wraptable}
\subsection{Ablation Study}
\label{sec:ablations}

A complete $2\times2\times2$ factorial on 1167 frozen cases isolates path
confinement (P), capability and effect verification (C), and execution-boundary
adaptation (B). Eight arms share logical case identifiers, decoding, and judge,
with metric-specific eligibility: 9336 scored outcomes and no technical failure.
A0 is the wrapper-matched no-defense arm; A7 re-aggregates complete
\method{}-\textsc{p} on these cases. C jointly enables capability installation,
effect expansion, verification, and the combiner; its estimate measures that
bundle. The subset's task mixture differs from benchmarks.

In the ablation study, every C-containing arm closes PASB, unlike arms
without C. B alone leaves residuals on AgentDojo, AgentDyn, InjecAgent-Base,
and MCPTox, but every B-containing arm closes MSB-Call, against 53.8\% to
61.5\% without B. P alone reduces ASRs through suppression but leaves PASB and
MSB macro residuals. ASB and InjecAgent-Enhanced are zero across arms and
identify no layer. Several reduced arms tie A7's floor and none beats it in any
security row; the full combination has no demonstrated unique security
benefit. Adding B to C or P{+}C lowers MCPTox refusal at unchanged zero
valid-case ASR. C-containing arms partly recover P-only utility losses on AgentDojo,
AgentDyn, and PASB IPI. Paired A0-to-A7 AgentDyn utility falls
from 66.7\% to 33.3\%; the full-benchmark three-point bound does not hold on
this subset. ASB utility is implementation-sensitive
(\Cref{app:additional-results}). AgentDyn has 5842 nonempty candidate cuts
in 6647 guarded-call records but no executed nonterminal rewrite; utility
recovers through C restoring P's terminal blocks.

\section{Conclusion}

When admission evidence cannot distinguish behavior-changing specifications,
reviewing an agent artifact before execution cannot justify relaxing mediation
at the affected egress site. The calls it induces still require a runtime
decision. \method{} moves the
decision to the last enforceable tool boundary, where a concrete proposal,
authenticated authority, runtime provenance, and a working hook are all available
at once. Its certified contract combines represented-path separation, final-effect
authorization, and executor grounding. The evaluated configuration retains
effect authorization on restored calls, while path separation depends on the final intervention. On eight full benchmarks it lowers attack success while
losing at most three points of native utility relative to the undefended agent, and the paired ablation
shows the two checks are not interchangeable: effect verification carries most of
the security, boundary adaptation is what prevents indiscriminate refusal. The reduced-scale adaptive search obtains no success on 30
out-of-authority targets, within its 50-episode budget per case. Extending coverage to semantic restatement, fragmented flows, hidden
secondary effects, and durable distributed execution is future work.

\newpage

\subsection*{AI use statement}
We used generative AI tools to check grammar and spelling, revise wording and organization, and examine the manuscript's logical consistency. These tools also assisted in reviewing mathematical derivations, proof arguments, and theoretical analyses for internal consistency. No generative-AI system is a component of \method{}'s mediation logic; the evaluated agents and several baselines use models. We take responsibility for the final text, mathematical derivations, theoretical claims, experimental results, and artifacts.

\subsection*{Ethics statement}
This work studies attacks on deployed LLM agents and therefore carries dual-use
risk. The attacks use published benchmarks and search-generated variants, executed
only inside the corresponding benchmark sandboxes. The released artifact replaces harmful
payloads with inert local sinks and deterministic fixtures that preserve path
structure without providing working exfiltration targets
(\Cref{app:reproducibility}). No human subjects, no personal data, and no
production system were involved: every ``secret'' in our evaluation is a
synthetic credential minted for the run. The graph construction makes the
verifier's represented channels explicit; \Cref{sec:guarantee} identifies the
dependencies it does not cover. This distinction matters when interpreting the
security claims because a path outside the schema cannot enter a certified cut.
We report the coverage boundary alongside the formal guarantees, and the released
artifact uses inert local sinks and deterministic fixtures so readers can inspect
the graph construction without access to working exfiltration targets or
production systems.

\subsection*{Reproducibility statement}
The security model, the mediator, and the guarantee are specified so that a
reader can check them independently. \Cref{sec:execution-model,sec:security-property}
fix the execution model, the attacker, and the target property; \Cref{sec:method}
gives the four phases, the binding rule, the capacity function, and the decision
rule. The runtime results carry numbered assumptions: H1 to H5 open
\Cref{app:proofs}, and the protocol assumptions A1 to A9 sit in
\Cref{app:protocol}. The admission-time results of \Cref{sec:motivation} instead
carry named conditions: the $\delta$-achieving conservative transformation, C1,
and C2 in \Cref{sec:motivation}; almost-sure halting in \Cref{app:measure}; and
the hypotheses of \Cref{prop:code-resolved} with that proposition.
\Cref{prop:effect-sound} carries its own four assumptions in its statement.
\Cref{app:verify} lists the ten verifier conditions in the order the
implementation checks them, and \Cref{app:proofs} contains proofs of the formal claims
in the main text. \Cref{app:scope} states the premises no automatic check
reaches, and \Cref{app:protocol} marks the hardening protocol that lies outside
the evaluated artifact. For the empirical side, \Cref{app:hyperparameters} fixes
every configuration value before the primary run,
\Cref{app:benchmark-details,app:ablation-design} record scaffolds, evaluators,
denominators, and the ablation arms, \Cref{app:port-fidelity} records baseline
port fidelity, \Cref{app:statistics} fixes the statistical protocol and the
multiple-comparison policy, and \Cref{app:reproducibility} describes the
machine-readable manifest, the deterministic replay procedure, and the release
checklist. Anonymized source code
and the frozen manifest accompany the submission as supplementary material.



\bibliography{iclr2027_conference}
\bibliographystyle{iclr2027_conference}

\appendix
\newpage
\etocdepthtag.toc{mtappendix}

\vspace*{1mm}
\begin{center}
    \LARGE \bfseries Appendix of \method{}
\end{center}

\etocsettagdepth{mtchapter}{none}
\etocsettagdepth{mtappendix}{subsection}

{
  \hypersetup{
    colorlinks=true,
    linkcolor=kleinblue
  }

  \begingroup
  \makeatletter

  \let\oldl@section\l@section
  \renewcommand{\l@section}[2]{%
    \vspace{-5pt}%
    \oldl@section{#1}{#2}%
    \vspace{5pt}%
  }

  \let\oldl@subsection\l@subsection
  \renewcommand{\l@subsection}[2]{%
    \oldl@subsection{#1}{#2}%
    \vspace{5pt}%
  }

  \makeatother

  \tableofcontents
  \endgroup
}
\newpage

\numberwithin{theorem}{section}
\numberwithin{proposition}{section}
\numberwithin{lemma}{section}
\numberwithin{corollary}{section}
\numberwithin{definition}{section}
\numberwithin{remark}{section}

\section{Notations}

\begin{longtable}{@{}p{0.235\linewidth}p{0.725\linewidth}@{}}
\caption{List of notations and their descriptions used in the paper.}
\label{tab:notation}\\
\toprule
Notation & Description \\
\midrule
\endfirsthead
\multicolumn{2}{@{}l}{\emph{Table \thetable\ (continued)}}\\
\toprule
Notation & Description \\
\midrule
\endhead
\bottomrule
\endlastfoot

\multicolumn{2}{@{}l}{\emph{Agent execution and threat model (\Cref{sec:execution-model})}}\\[1mm]
$M$ & Frozen policy that alternates free text with tool calls \\
$\gT$ & Registered tool set \\
$a_t$ & Tool call proposed at step $t$ \\
$o_t$ & Observation returned by $a_t$ \\
$r$ & Answer returned to the user, itself an egress site \\
$\tau=\bigl((a_t,o_t)_t,r\bigr)$ & Trajectory of one episode \\
$\mathcal X$ & Standard Borel space of trajectories \\
$K=(\mathit{NL},\mathit{code})$ & Artifact: a skill, tool description, or MCP schema \\
$c\in\Ctx$ & Context: request, injected text, tool returns, state, secret \\
$s$ & Secret carried by the context \\
$\bar c\in\overline{\Ctx}$ & Public context with a designated secret slot \\
$\mu_{K,c}=\Traj(M,K,c)$ & Trajectory measure induced by $K$ under context $c$ \\
$\mu^{\gate}_{K,c}$ & Deployed trajectory measure, with $M$ run behind mediator $\gate$ \\
$\basefamily(K)=\{\mu_{K,c}\}$ & Behavior family: the security-relevant object, not $K$ itself \\
\midrule

\multicolumn{2}{@{}l}{\emph{Security property (\Cref{sec:security-property}, \Cref{app:measure})}}\\[1mm]
$\Egress(\tau)$ & Egress projection: ordered pairs $(\tmpl,\mathrm{bytes})$ leaving the deployment \\
$\Omega$ & Finite set of egress sites \\
$\mathcal B$, $\mathcal B^{*}$ & Standard Borel byte space; $\mathcal B^{*}=\bigsqcup_{n\geq0}\mathcal B^{n}$ \\
$\mathcal E$ & Codomain of $\Egress$, $\bigsqcup_{n\geq0}(\Omega\times\mathcal B)^{n}$ \\
$\tmpl$ & A single egress site \\
$\mathrm{pr}_{\tmpl}$ & Subsequence extractor selecting the entries tagged $\tmpl$ \\
$O_{\tmpl}=\mathrm{pr}_{\tmpl}\!\circ\!\Egress$ & Egress observed at site $\tmpl$, or $\bot$ if never reached \\
$\hat O_{\tmpl}$ & Value the code emits at $\tmpl$ before the mediator acts \\
$d(P,Q)$ & Total variation, $\sup_A|P(A)-Q(A)|$ over measurable $A$ \\
$\delta$ & Indistinguishability budget \\
$\Phi_{\mathrm{NI}}(K,\gate;\delta)$ & Global probabilistic-noninterference criterion (\Cref{eq:phi-ni}) \\
$\Phi_{\mathrm{site}}(K,\gate,\tmpl;\delta)$ & Per-site restriction of $\Phi_{\mathrm{NI}}$ (\Cref{eq:phi-site}) \\
$\Delta_{\gate}(K,\bar c,\tmpl)$ & Secret-induced distinguishability at $\tmpl$ under context $\bar c$ \\
\midrule

\multicolumn{2}{@{}l}{\emph{Gates and admission-time evidence (\Cref{sec:motivation})}}\\[1mm]
$\gate$ & Mediator: the deployed transformation applied at each egress site \\
$g$, $\gate_g$ & Gate mapping a judgment to $\{\mathsf{mediate},\mathsf{relax}\}$; the mediator it induces \\
$\gate_0$, $\gate_0^{\mathsf{r}}$ & Reference profile mediating every site but $\tmpl$; its relaxing extension \\
$\mathcal J_i(K,\tmpl)$ & Judgment held by admission route $i\in\{\mathrm{I},\mathrm{II},\mathrm{III}\}$ \\
$\gV(K,\tmpl)$ & Variant class: same code, interface, and site; different specifications \\
C1, C2 & Judgment-ambiguity and attacker-witness conditions of \Cref{thm:empty-unsound} \\
$\phi$ & Feature map through which a Route-III text extractor factors \\
$v$ & Slice inputs supplied to site $\tmpl$ along a trajectory \\
$F$, $F_0$ & Code-slice map $\hat O_{\tmpl}=F(v,s)$; its secret-independent common value \\
$\mathrm{med}$ & Mediator action at $\tmpl$, a measurable map of the call alone \\
\midrule

\multicolumn{2}{@{}l}{\emph{Evidence graph and phase 1 (\Cref{sec:graph}, \Cref{app:coverage})}}\\[1mm]
$H_t$ & Evidence graph: multigraph folded from the append-only log \\
$G^{-}_t$ & Active graph: committed edges at $\sigma_t$ plus the proposed edges of $a_t$ \\
$E^{\mathrm{fl}}$ & Flow subgraph; the only edge types reachability uses \\
E1--E5 & The five flow-edge emission rules \\
$q_t$ & Fresh, globally unique attempt identifier for $a_t$ \\
$\iota_t$ & External idempotency key, preserved across retries \\
$\mathrm{op}(q_t)$, $T(q_t)$ & Operation node and call-specific terminal of attempt $q_t$ \\
$\terminalbad$ & Set of terminals labeled unsafe by the versioned effect rule \\
$R_{v+1}$ & Resource version node minted by a confirmed declared write \\
$\nu$ & Provenance map over live values \\
$\sigma_t$ & Execution version: committed edges, labels, and live-value set \\
$\varepsilon_t$ & Policy epoch: schemas, effect rules, costs, contracts, adapter code \\
$L^{\mathrm{p}}_t$ & Canonical \texttt{PROPOSE} record of attempt $q_t$ \\
\midrule

\multicolumn{2}{@{}l}{\emph{Cut and certification, phase 2 (\Cref{sec:cut}, \Cref{app:verify})}}\\[1mm]
$P_t$ & Protected sources: data and resource nodes with protection labels \\
$\Psi$ & Finite family of deterministic re-encodings used by the derivation oracle \\
$\mathrm{match}(y,x)$, $\mathrm{canon}_{\psi}$ & Oracle match predicate (\Cref{eq:oracle}) and per-form canonicalization \\
$S_t$ & Untrusted-origin entries of the live-value set at $\sigma_t$ \\
$B_t$ & Untrusted frontier: closure of $S_t$ over committed flow edges \\
$U_t=\{T(q_t)\}\cap\terminalbad$ & Unsafe terminal of the pending call, empty for a safe call \\
$\mathrm{unsafe}_{\varepsilon_t}(a_t)$ & Registry predicate placing $T(q_t)$ in $\terminalbad$ \\
$\Impl(e)$, $\Sev(e)$ & Actions the adapter implements on $e$; the severing subset \\
$\mathrm{cost}(a,e)$ & Declared per-edge cost of action $a$, a strictly positive rational \\
$\bind_t(e)$ & Canonical binding: least-cost action in $\Sev(e)\cap\Impl(e)$, or $\bot$ \\
$\kappa(e)$ & Edge capacity (\Cref{eq:capacity}); finite only on bound proposed edges \\
$A$, $e_{\mathrm T}$ & Argument edges of $a_t$; its terminal edge $\mathrm{op}(q_t)\!\to\!T(q_t)$ \\
$R$, $A^{\star}$ & Nodes reachable from $B_t\cup P_t$ in the committed flow subgraph; $\{e\in A:\src(e)\in R\}$ \\
$\cut_t$ & Canonical minimum cut: cheapest, then lexicographically least \\
$\manifest_t$ & Enforcement manifest, one bound hook per cut-edge instance \\
$\mathrm{arg}(e)$, $\mathrm{hook}(e)$ & Registry-bound argument position and executor hook for edge $e$ \\
$\mathrm{infl}(P_t,a_t)$ & Influence test: a protected node reaches $\mathrm{op}(a_t)$, or the oracle matches \\
$\mathrm{rev}(a_t)$, $\mathrm{auth}(a_t)$ & Adapter reversibility flag; externally authenticated authorization flag; $\mathrm{low}(a_t)$ additionally requires empty sinks and no persistent effects \\
$\mathrm{dec}(a_t)$ & Decision rule (\Cref{eq:decision}) \\
R1--R6 & Refuse on undefined schema; R2 low-impact release; R3 authenticated, uninfluenced, C-certified release; refuse unauthorized original effects; refuse uncuttable or inadmissible manifests; refuse unauthorized final effects in the certified contract \\
$\mathrm{eff}(a_t)$ & Effect check: one when every effect atom of $a_t$ is certified \\
$\Cert_t$ & Path certificate $(q_t,L^{\mathrm{p}}_t,\sigma_t,\varepsilon_t,B_t\cup P_t,U_t,\cut_t,\manifest_t)$ \\
$n_t$, $n^{\ast}$, $h^{\mathrm{auth}}$ & Certificate sequence number; verifier counter and authorization head \\
$\mathsf{tok}_t$ & One-shot authorization token of the hardened protocol (\Cref{eq:token}) \\
$\mathsf{MAC}$, $\mathsf{h}$, $k$, $k'$ & MAC and hash; log key and token key, held by disjoint component pairs \\
\midrule

\multicolumn{2}{@{}l}{\emph{Enforcement protocol, phases 3--4 (\Cref{sec:enforce-phases}, \Cref{app:protocol})}}\\[1mm]
$z_j$ & Tag of log record $j$ in the append-only MAC chain (\Cref{eq:log}) \\
$\mathrm{res}\in\{\bot\}\cup\mathcal Q$ & Global single-flight reservation; $\mathcal Q$ is the space of attempt identifiers \\
$\mathrm{Apply}_{\varepsilon_t}$ & Deterministic canonical rewrite of the proposal under $\manifest_t$ \\
$E^{\mathrm{o}}(q_t,o_t)$ & Canonical output edge set bound into an outcome receipt \\
A1--A9 & Protocol assumptions: cryptography, atomicity, durability, state currency \\
\midrule

\multicolumn{2}{@{}l}{\emph{Causal chains and guarantees (\Cref{sec:guarantee}, \Cref{app:proofs})}}\\[1mm]
$\gamma=x_0\!\to\!\cdots\!\to\!\zeta_t$ & Real causal chain ending at the pending effect \\
$x_i$, $\zeta_t$ & Values along the chain; the pending unsafe effect \\
$w_i$, $w_i'$ & Calls realizing link $x_i\!\to\!x_{i+1}$ (writer and reader in the resource case) \\
$\alpha_{\mathrm D}$, $\alpha_{\mathrm T}$ & Simulation maps from chain values to data nodes and from the effect to a terminal \\
$\pi_i$, $\Pi$ & Emitted segment of link $i$; their concatenation, a flow path in $G^{-}_t$ \\
H1--H5 & Implementation conditions; H3 splits into H3a, H3b, H3c by label kind \\
\midrule

\multicolumn{2}{@{}l}{\emph{Configurations}}\\[1mm]
\method{}-\textsc{c} & Certified contract: no R2/R3 releases or restoration of blocks; the installed action preserves the cut and every dispatched final call passes R6 \\
\method{}-\textsc{p} & Evaluated configuration: evaluates all nine obligations; R3 requires all to pass, while R2 can release effects labelled reversible with empty sinks and no persistence despite failures. C can restore P's blocks, and declared repairs are dispatched without R6 \\
\end{longtable}

\section{Related Work}
\label{app:related-work}

\subsection{Agent Attack Surfaces}

Tool-integrated agents process external text in the same autoregressive context
that produces actions, so an observation can redirect execution.  InjecAgent
separates direct user harm from private-data exfiltration \citep{injecagent},
AgentDojo supplies executable user tasks with security targets and adaptive
interfaces \citep{agentdojo}, ASB broadens the surface to system, user, tool, and
memory stages \citep{asb}, and WASP separates partial compromise from terminal
attacker success in multi-step web tasks \citep{wasp}.  Those distinctions are
why we report native metrics instead of one pooled attack-success label.

Reusable artifacts differ from a single malicious page: the artifact persists,
can carry code, and defines a workflow.  MCPTox poisons the tool metadata of real
MCP servers and reports high attack success with near-zero refusal
\citep{mcptox}, MCPSecBench organizes the protocol's attack surface into a
reproducible benchmark \citep{mcpsecbench}, the MCP lifecycle survey describes
post-trust server changes
\citep{hou2025mcp}, while SkillAttack probes reusable skills through adversarial
prompts \citep{skillattack}. Trust inversion arises when an attacker-authored
description informs admission and shapes later execution. We therefore evaluate
prose, code-carried, and cross-turn payloads
separately.

Persistent state defeats a single-turn monitor, because source and sink can fall
in different turns.  AgentPoison optimizes triggers that retrieve poisoned
memories \citep{agentpoison}, ASB includes a memory-poisoning stage \citep{asb},
and PASB exercises prompt, tool, and memory-retrieval attacks in a personalized
local agent \citep{pasb}.  Our reduced-scale adaptive search tests out-of-authority AgentDojo goals
(\Cref{app:benchmark-details,app:additional-results}). Adaptive evaluation can
change conclusions drawn from native attacks \citep{zhan2025adaptive,nasr2026attacker};
out-of-band defenses also have a dedicated protocol and a small-scale adaptive
evaluation of Progent \citep{narisetty2026adaptive}.

\subsection{Agent Defense Paradigms}

CaMeL extracts control and data flow from a trusted request and wraps a system
layer around a susceptible model \citep{camel}.  FIDES formalizes the properties
dynamic taint tracking can enforce and builds an IFC planner with confidentiality
and integrity labels \citep{fides}.  RTBAS adapts IFC to tool-based agents and
falls back on user confirmation when automated screening cannot establish
integrity or confidentiality \citep{rtbas}. CaMeL derives its plan from a trusted
request, while FIDES uses trusted wrappers and flow labels. The deployments we
target also expose attacker-authored artifact descriptions to the planner. Our
comparison uses instrumented IFC implementations where their required hooks
are available, preserving the scope of their flow-control mechanisms.

Progent expresses least privilege as symbolic rules over tool names and arguments
\citep{progent}, Task Shield checks whether instructions and calls serve the
user's stated task \citep{taskshield}, and ClawGuard enforces user-confirmed
rules at tool boundaries across web, local, MCP, and skill channels
\citep{clawguard}.  IPIGuard uses a tool dependency graph \citep{ipiguard}, Conseca derives
contextual policies \citep{conseca}, and FORGE enforces formal policies over
tool operations \citep{forge2026}. These approaches share \method's concern
with action authorization and differ in the evidence used.  Their rules range over names, arguments, and
task descriptions; \method{} derives a separating cut from an execution graph and
then demands a concrete hook binding for every cut edge.

MELON detects indirect injection by comparing the original action with a
masked-prompt re-execution \citep{melon}, DataSentinel trains an adaptive
detector through a minimax objective \citep{datasentinel}, and LlamaFirewall
combines prompt, agent-alignment, and code guardrails \citep{llamafirewall}.
Structured-query and model-level defenses separate instructions from data or
train the model to respect the distinction \citep{struq,secalign}.  All of these
compose with \method{}, and our evaluation keeps a pre-filter's AUROC, TPR, and
FPR apart from end-to-end agent security.

AgentArmor reconstructs runtime traces as program graphs and applies a type
system to security metadata \citep{agentarmor}. More recent monitors operate at
argument and execution granularity: PACT supports fine-grained provenance and
scoped discharge certificates \citep{pact2026}, AuthGraph separates observed
provenance from request-derived authorization \citep{authgraph2026}, and CXI
binds executed effects to a canonical manifest and requires fresh authority
after relevant edits or repairs \citep{cxi2026}. Thus fine-grained authorization
and re-authorization alone do not distinguish \method{}. Its analysis couples a
candidate graph cut to the intervention actually installed: separation survives
exactly when that intervention blocks the terminal or severs every reachable
argument edge, alongside authorization of the final call
(\Cref{sec:guarantee,app:proof-cut}). The evaluated restoration and repair paths
make this distinction observable in the component study. IsolateGPT establishes
isolation between LLM applications and the host \citep{isolategpt}, a separate
boundary from these call-level checks.

\subsection{Formal Foundations}

Complete mediation requires every security-relevant access to pass through a
tamper-resistant, analyzable mechanism \citep{saltzer1975,anderson1972}, and
runtime policy enforcement characterizes which trace properties execution
monitors can enforce \citep{schneider2000}.  \method{} is an edit automaton rather than a truncation automaton, since its
actions rewrite and suppress arguments as well as halting the call, and that
class is strictly more expressive \citep{ligatti2005,hamlen2006}.  Lattice-based confidentiality
\citep{denning1976} and language-level information-flow control
\citep{myers1999jif,sabelfeld2003} give the basis for source and sink labels, and
TaintDroid and libdft show that labels do propagate through instrumented
execution \citep{taintdroid,libdft}.  Our opaque-boundary ablation is built to
respect that capability, not to deny it.

Proof-carrying code attaches machine-checkable evidence that code satisfies a
policy \citep{necula1997}, and model-carrying code checks a model of an untrusted
program against a safety property \citep{sekar2003}.  Both apply when the code slice is the whole security-relevant semantics
\emph{and} the model's choice of that slice's inputs is itself
secret-independent, which is the positive case of \Cref{prop:code-resolved}.  \method's runtime guarantee is deliberately weaker
and more operational.  The max-flow/min-cut theorem gives exact
minimum-capacity separation for a fixed capacitated graph \citep{fordfulkerson},
and our solver applies it directly; the registry costs are deterministic inputs,
not calibrated probabilities.

\section{Protocol and Verification}

\subsection{Measure-Theoretic Setup}
\label{app:measure}

One episode gives a trajectory
$\tau=\bigl((a_t,o_t)_t,r\bigr)$: the calls $a_t$ proposed against the registered
tool set $\gT$, the observations $o_t$ they return, and the answer $r$, which is
itself an egress site. Trajectories live in a standard Borel space $\mathcal T$
and episodes halt almost surely, so $\Egress$ is defined off a null set. Its codomain is
$\mathcal E=\bigsqcup_{n\geq0}(\Omega\times\mathcal B)^n$, for the finite site set
$\Omega$ and a standard Borel byte space $\mathcal B$, carrying the
disjoint-union $\sigma$-algebra; $\Egress$ is measurable for it. The per-site projection of \Cref{sec:motivation} maps into
$\mathcal B^{*}\sqcup\{\bot\}$, where $\mathcal B^{*}=\bigsqcup_{n\geq0}
\mathcal B^{n}$ is again standard Borel, and \Cref{lem:site-necessary} proves it
measurable. The artifact $K$ paired with a
context $c$ induces $\mu_{K,c}=\Traj(M,K,c)$ on $\mathcal T$, so the dependence on
the frozen policy and the artifact is explicit; running $M$ behind a mediator
$\gate$ gives $\mu^{\gate}_{K,c}$.

\subsection{Graph Fold and Emission Rules}
\label{app:coverage}

Parallel edge instances stay distinct in the multigraph.  Besides the nodes of
\Cref{def:graph} the fold also records the agent's stated intentions and plans,
and the control and intent edges among them stay in the audit record and never
reach a cut.  Values and effects connect
through an operation node, so the construction has no direct
\texttt{Data}-to-\texttt{Data} and no \texttt{Data}-to-\texttt{Terminal} edge;
\Cref{app:proof-transparent} uses both facts.

The host maintains a provenance map $\nu$ for live values and applies five rules,
which are the complete set of flow-edge emitters.  (E1) Before certification,
each schema-defined proposal argument emits an edge from its data or resource
node to the fresh operation node.  (E2) A valid outcome receipt emits an edge from
the operation to every schema-defined returned value, minting a fresh provenance
identifier unless the adapter certifies the value as an unchanged pass-through.
(E3) Such a pass-through keeps its identifier at the next declared position.
(E4) Every proposal creates a fresh terminal $T(q_t)$ and emits
$\mathrm{op}(q_t)\to T(q_t)$ before dispatch, and the versioned unsafe-effect
rule labels that terminal.  (E5) A confirmed outcome of a schema-declared write
mints a new resource version node $R_{v+1}$, commits
$\mathrm{op}(q)\to R_{v+1}$, and gives $R_{v+1}$ the union of the origin and
protection labels of that call's committed inputs.  Rules E2 and E5 fire at
finalization, so their edges are committed and not proposed.  These rules emit
structure, not a claim about the semantics of hidden computation.

Instance status follows from the log: a refusal expires the current proposal
edges, \texttt{OUTCOME} commits them together with the authenticated output
edges, \texttt{UNKNOWN} commits the proposal alone, and \texttt{BLOCKED} or
\texttt{CANCEL} commits no flow.  The frontier $B_t$ is folded as in
\Cref{sec:cut} from the origin-labeled live-value set, over committed flow edges
only, so $B_t\cap\terminalbad=\emptyset$ and no node enters $B_t$ through a
current proposed edge; in particular the fresh operation and terminal of $q_t$
stay out.  The disjointness $P_t\cap U_t=\emptyset$ recorded in \Cref{sec:cut}
follows for the same reason, since protection labels apply only to data and
resource nodes.

\begin{lemma}[Every unsafe path crosses a current edge]
\label{lem:cuttable}
Let $E^{\mathrm p}(q_t)$ be the proposed flow-edge instances of $a_t$. Every
directed path from $B_t\cup P_t$ to nonempty $U_t$ contains an edge in
$E^{\mathrm p}(q_t)$. If the terminal in-edge admits \texttt{block\_call}, that
edge alone is a finite-capacity cut.
\end{lemma}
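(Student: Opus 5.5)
Nonempty $U_t$ means $U_t=\{T(q_t)\}$, where $T(q_t)$ is the fresh terminal minted for attempt $q_t$ by rule E4. The plan is to look at how a path can end at that terminal.

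\emph{Step 1: the last edge is proposed.} Take any directed path $\pi$ in the flow subgraph of $G^{-}_t$ from $B_t\cup P_t$ to $T(q_t)$. Because $T(q_t)$ is fresh, its only in-edge in $G^{-}_t$ is $e_{\mathrm T}=\mathrm{op}(q_t)\to T(q_t)$. Rules E1--E5 are the complete set of flow-edge emitters, and only E4 emits an edge into a terminal, and only into the terminal of its own attempt. Moreover, no committed edge can target $T(q_t)$: E2 and E5 fire only at finalization of earlier attempts, whose terminals are distinct. Hence the last edge of $\pi$ is $e_{\mathrm T}$, which is a proposed flow-edge instance of $a_t$ and so lies in $E^{\mathrm p}(q_t)$.

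\emph{Step 2: the source is not the terminal itself.} We need $\pi$ to have positive length, so that it actually contains an edge. This follows from the disjointness facts of \Cref{app:coverage}: $B_t\cap\terminalbad=\emptyset$, because the fresh operation and terminal stay out of $B_t$, and $P_t\cap U_t=\emptyset$, because protection labels attach only to data and resource nodes. So the source of $\pi$ differs from $T(q_t)$, and $\pi$ contains at least one edge, hence contains $e_{\mathrm T}$.

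Note that this does not require showing the whole path is proposed. The corresponding statement about $\mathrm{op}(q_t)$ also holds, since its in-edges are the E1 argument edges, which are proposed. But it is not needed: the terminal in-edge alone already gives the claim.

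\emph{Step 3: finiteness of the cut.} By Step 1, removing $e_{\mathrm T}$ disconnects every $B_t\cup P_t$--$U_t$ path, so $\{e_{\mathrm T}\}$ is a cut. If $\texttt{block\_call}\in\Sev(e_{\mathrm T})\cap\Impl(e_{\mathrm T})$, then $\bind_t(e_{\mathrm T})\neq\bot$ by \Cref{eq:binding}. Since $e_{\mathrm T}$ is proposed for $a_t$, \Cref{eq:capacity} gives it the finite capacity $\mathrm{cost}(\bind_t(e_{\mathrm T}),e_{\mathrm T})$, a rational number. Therefore $\{e_{\mathrm T}\}$ is a finite-capacity cut.

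The main obstacle is rigorously excluding any committed edge into $T(q_t)$, for instance through a replayed or aliased terminal. I would discharge it using the freshness of $q_t$ (a globally unique attempt identifier), together with the fact that E2 and E5 only emit edges into data or resource nodes.
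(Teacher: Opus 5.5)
Your proof is correct and takes the same approach as the paper: the fresh terminal $T(q_t)$ is disjoint from $B_t\cup P_t$ and has the single proposed in-edge $\mathrm{op}(q_t)\to T(q_t)$ from (E4), so every unsafe path ends with that edge, and a nonempty binding makes its capacity in \Cref{eq:capacity} finite. Your extra step excluding committed edges into $T(q_t)$ (E2, E3, and E5 target only data and resource nodes, and $q_t$ is never reused) only spells out what the paper packs into the word ``fresh.''
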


\begin{proof}
Nonempty $U_t$ contains only $T(q_t)$, which is fresh, disjoint from the source
set, and by (E4) has the single proposed in-edge $\mathrm{op}(q_t)\to T(q_t)$.
Every path to it therefore ends with that current edge, and removing it separates
the terminal. When it has a nonempty binding to \texttt{block\_call},
\Cref{eq:capacity} gives it a finite value.
\end{proof}

Declared costs are strictly positive rationals with a fixed denominator, which
the implementation scales to integers.  Positivity matters: with a zero-cost
action a minimum-cost separating set could be padded arbitrarily, and the
verifier's set equality in \Cref{app:verify} would be ill posed.  The infinity sentinel exceeds the sum of every finite capacity, and acceptance
separately rejects a cut containing it.  Nodes and edge instances carry a
canonical order with the terminal edge first, which is what makes the selection in
\Cref{prop:cut-closed-form} single-valued.  Builder and verifier therefore agree
without floating-point arithmetic and without depending on which minimum cut a
flow solver happens to return.

\subsection{Deployment Hardening: Authenticated Execution Protocol}
\label{app:protocol}

The evaluated prototype mediates one in-process execution path and invalidates a
decision when the episode state it names changes.  This subsection specifies the
stronger protocol a distributed deployment needs, with durable records, a global
reservation, and one-shot token spending.  It is what discharges the freshness
condition H5 of \Cref{app:proofs} outside a single process, and it is not part of
the artifact we evaluate in \Cref{sec:experiments}.

Every record belongs to one append-only MAC chain in the style of tamper-evident
audit logs.  With a canonical, length-delimited encoding,
\begin{equation}
  z_j=\mathsf{MAC}_k\bigl(z_{j-1}\|\mathrm{enc}(\mathrm{type}_j,q_j,
  \mathrm{payload}_j)\bigr).
  \label{eq:log}
\end{equation}
A record contains its type, attempt identifier, payload, previous tag, and tag.
Types are \texttt{PROPOSE}, \texttt{DECIDE}, \texttt{DISPATCH}, \texttt{OUTCOME},
\texttt{BLOCKED}, \texttt{CANCEL}, \texttt{UNKNOWN}, and \texttt{RECONCILE}.
$\textsc{AcceptIfFresh}$ is the primitive that checks \Cref{app:verify}, moves
the reservation from $\bot$ to $q_t$, advances the verifier sequence and
authorization head, appends \texttt{DECIDE(accept)}, and returns
\begin{equation}
  \mathsf{tok}_t=\mathsf{MAC}_{k'}\bigl(q_t\|n_t\|\sigma_t\|\varepsilon_t\|
  \mathsf{h}(L^{\mathrm{p}}_t)\|\mathsf{h}(\manifest_t)\bigr),
  \label{eq:token}
\end{equation}
where $n_t$ is the verifier's next sequence number and $k'$ is shared only by the
verifier and the executor.  The log key $k$ and the token key $k'$ sit with
different pairs of components, so neither the graph builder nor the certificate
producer holds either one.  The executor spends $\mathsf{tok}_t$ once by
compare-and-swap, rechecks the versions, and holds each resource lease through
the access.  Its permissive counterpart $\textsc{ReserveIfFresh}$ checks the
same state and epoch for R2 and R3, takes the same reservation, and appends
\texttt{DECIDE(release)} without a path certificate or token; R3 still
requires C's nine obligations to pass.  A stale proposal is
refused, never silently rebuilt.  For a nonblocking manifest,
$\mathrm{Apply}_{\varepsilon_t}$ is the deterministic canonical adapter function
that applies every certified rewrite to the call bytes in $L^{\mathrm p}_t$; it
preserves $\iota_t$ and every versioned resource identity, and it is undefined for
a terminal block or an inadmissible manifest.

The remaining protocol content is the assumption list the proofs cite.  Nine
assumptions carry the protocol obligations of H5 in \Cref{app:proofs}, together
with the currency obligation A9 supplies.

\emph{Cryptography.} (A1) $\mathsf{MAC}$ is existentially unforgeable under
chosen-message attack, with $k$ held by the host and log store and $k'$ only by
the verifier and the executor, so neither the graph builder nor the certificate
producer holds either key.  (A2) $\mathsf{h}$ is collision resistant.  (A3)
Calls, manifests, and log payloads use a canonical length-delimited
serialization, so the concatenations in \Cref{eq:log,eq:token} parse uniquely.

\emph{Identity and atomicity.} (A4) Attempt identifiers $q_t$ are globally unique
and never reused, while $\iota_t$ is a separate field that may repeat; the
\texttt{PROPOSE} record stores the immutable call bytes, all resource versions,
and $\iota_t$, so $\iota_t$ is covered by $\mathsf{h}(L^{\mathrm p}_t)$ and cannot
change after certification.  (A5) Log append, $\textsc{AcceptIfFresh}$,
$\textsc{ReserveIfFresh}$, $\textsc{Finalize}$, $\textsc{Reconcile}$, epoch
updates, and the executor's compare-and-swap and $\textsc{LeaseIfCurrent}$ are
linearizable, and the appender and the four mediation primitives share one
critical section.  A successful reservation atomically appends its decision, and a
failure records refusal only while the attempt is undecided.  (A6) Five objects
are distinct and rollback-protected: the raw log tip, which moves on every append
and is therefore never the freshness anchor; the execution version $\sigma_t$,
naming committed edges, execution-derived labels, and the live-value set; the
verifier store $(n^{\ast},h^{\mathrm{auth}})$, which advances only on certificate
acceptance; the policy epoch $\varepsilon_t$, naming an immutable snapshot of
labels, schemas, effect rules, costs, contracts, and adapter code; and a live or
halted mode together with one global reservation
$\mathrm{res}\in\{\bot\}\cup\mathcal Q$, where $\mathcal Q$ is the space of
attempt identifiers.

\emph{Durability and executor.} (A7) Log appends are durable and write-ahead, so
\texttt{DISPATCH} commits to stable storage before any externally visible send of
that call and no send is reordered ahead of that commit.  A trusted adapter
returns \texttt{BLOCKED} only for a terminal \texttt{block\_call} after enforcing
non-dispatch, and an outcome receipt only when bound to
$(q_t,\iota_t,o_t,\mathsf{h}(E^{\mathrm{o}}(q_t,o_t)))$, with $E^{\mathrm{o}}$ the
canonical output edge set.  (A8) $\textsc{Prepare}$ performs the dispatch sequence of \Cref{sec:enforce-phases}
under one further discipline: it acquires the policy and resource leases and spends
the token only after checking $q_t$, $\sigma_t$, $\varepsilon_t$,
$\mathsf{h}(L^{\mathrm p}_t)$, and $\mathsf{h}(\manifest_t)$ against the current
stores and the immutable \texttt{PROPOSE} record.  It sends nothing before that
check, rolls back partial installation, and returns only \texttt{CANCEL}, a
$(q_t,\mathsf{h}(\manifest_t),\texttt{BLOCKED})$ receipt, or a
$(q_t,\mathsf{h}(\manifest_t),\mathsf{h}(\mathrm{Apply}_{\varepsilon_t}
(\manifest_t,L^{\mathrm p}_t)),\texttt{READY})$ receipt whose branch dispatches
only those bytes and holds every resource lease through the certified access.
(A9) Every event that could add a committed flow edge, change a label, or change
the live-value set either passes the mediator or advances $\sigma_t$ before the
next certificate is issued.  A resource adapter must supply a snapshot or a lease
held through the certified access, or the position is schema-undefined.  A9 is
what makes $\sigma_t$ a statement about the world and not only about the store,
and an out-of-band write through a channel the deployment never registered
violates it.

Finalization follows \Cref{sec:enforce-phases} and is atomic, with reconciliation
opening a new attempt $q_{t'}$ that carries the same $\iota_t$.  Crash recovery
uses the same durable intent: without
\texttt{DISPATCH}, write-ahead ordering proves no send occurred and the attempt
can be cancelled, while a dispatch without a receipt becomes \texttt{UNKNOWN}.
Rule (E5) fires only on a confirmed outcome, so an \texttt{UNKNOWN} write mints no
version node, and reconciliation must re-read the resource before any later step
may use it.  The protocol therefore gives one-shot token use, not exactly-once
delivery without adapter support.

\subsection{The Recomputing Verifier}
\label{app:verify}

Inside $\textsc{AcceptIfFresh}$, the verifier folds the authenticated log and the
current registry snapshot. It reconstructs $G^{-}_t$, $B_t$, $P_t$, $U_t$,
flow-edge instances, bindings, capacities, and statuses, and reads no graph,
capacity, or verdict from the certificate producer. It accepts $\Cert_t$ only if
every check below passes.

\begin{enumerate}[leftmargin=1.4em,itemsep=0.1em,topsep=2pt]
\item The mode is live and $\mathrm{res}=\bot$.
\item $n_t$ is the next verifier sequence, $\sigma_t$ is current, and
$L^{\mathrm p}_t$ is the complete immutable \texttt{PROPOSE} record of $q_t$
with no decision for that attempt.
\item $\varepsilon_t$ is the current policy epoch.
\item The operation and every argument are schema-defined. Immutable proposal
bytes and resource versions agree with $L^{\mathrm p}_t$. The verifier rebuilds
$B_t$ from the origin-labeled live-value set stored with $\sigma_t$, and rebuilds
$P_t$ from committed protection labels together with a rerun of
\Cref{eq:oracle} on the frozen proposal.
\item The verifier reruns the unsafe-effect predicate to construct $U_t$, and
every supplied cut edge is a proposed flow-edge instance of $q_t$.
\item For every proposed flow edge, it recomputes the canonical binding in
\Cref{eq:binding} and the capacity in \Cref{eq:capacity}.
\item Removing $\cut_t$ separates $U_t$ from $B_t\cup P_t$, its capacity is
finite and equals the declared cost, and an empty target carries the empty cut.
The verifier checks the cut it is given rather than re-solving the flow problem.
\Cref{prop:cut-closed-form} shows that an independent recomputation would cost
one reachability pass, which we note as an available strengthening rather than an
implemented one.
\item $\mathrm{dom}(\manifest_t)=\cut_t$, with exactly one canonically encoded
entry per cut-edge instance and no repeated edge.
\item Every cut edge has a nonempty binding, and every manifest entry contains
that binding with the registry's argument position and hook.
\item The adapter declares the complete manifest jointly admissible: rewrites
do not conflict, every hook is timely, and each severance contract is declared
preserved under the complete set.
\end{enumerate}

Condition 6 blocks substitution of a costlier action, condition 7 blocks a cut
that does not separate or that misstates its cost, and conditions 8 and 9 block
an apparently separating cut with missing bindings.  Condition 10 checks a
registry declaration and not its semantic truth, an obligation H4 carries at
execution level.  Failure returns $\bot$ without touching the reservation or
the verifier stores, except that an undecided attempt receives one refusal record.

\subsection{Details Deferred from the Main Text}
\label{app:deferred}

\tightparagraph{The untrusted frontier.}
Let $S_t$ hold the untrusted-origin entries of the live-value set at $\sigma_t$.
The verifier computes $B_t$ by closing $S_t$ over committed flow edges and
dropping terminals, which loses nothing since a terminal has no out-edge.
Proposed edges never advance the closure. If they did, one untrusted argument
would collapse every call into terminal blocking, since the proposal itself would
carry the frontier to the operation node it just created.

\tightparagraph{Severance contracts.}
An action lies in $\Sev(e)$ when its warrants guarantee that, once installed, no
execution of this call takes a real link whose represented segment contains $e$.
The contract is about the call and not about the graph, so observation and user
confirmation sever nothing: they change what the operator sees, not what the call
can do. This is also why $\Sev$ is declared by the deployment while $\Impl$ is
declared by the adapter. The adapter knows which hooks exist; only the deployment
can assert that installing one removes a real dependency.

\tightparagraph{The two redaction warrants.}
Redaction on a data edge lies in $\Sev(e)$ only under two declarations. The
adapter warrants that the emitted value is independent of the original in
content, length, and structure, so a length-preserving mask does not qualify. The
registry warrants, per operation and argument position, that the call's
occurrence does not depend on that argument. Its default is that occurrence may
depend, since no adapter knows whether the model would have proposed this call
with a different argument. Miss either warrant and the argument edge has no
severing action. The terminal edge then admits \texttt{block\_call}, and a
read-only resource view only when the registry certifies that the call's single
unsafe effect is the write. This default is why \method{} blocks at most sites
rather than editing them.

\subsection{Scope and Residual Assumptions}
\label{app:scope}

Four cases carry no structural link between a real dependence and the graph:
undeclared reads, secondary external calls made inside a tool, strings recreated
by the model, and external changes to resources that bypass versioned adapters.
R1 refuses an undefined operation or argument but does not detect a hidden
read inside an otherwise schema-defined tool. Out-of-band resource changes
violate A9.  The transform family of \Cref{eq:oracle} is
$\Psi=\{$raw text, case normalization, whitespace normalization, base64,
base64url, hex, percent encoding, the common digests, fixed-length prefixes$\}$,
which narrows the third case to deterministic re-encodings; paraphrase,
translation, arithmetic encodings, and short fragments still escape.  It also errs in the other
direction: short low-entropy values match
benign text, so $P_t$ grows spuriously and R3 inherits both behaviors, which is
one reason that rule exists only in \method{}-\textsc{p}.

The cost in \Cref{eq:capacity} is a declared per-edge surrogate and not episode
utility, so one call-level block can cover several edges while the sum charges
each.  Minimality is therefore over per-edge-groundable sets rather than over
jointly admissible ones, and a cheaper conflicting cut may coexist with a costlier
compatible one, where the certified contract fails closed. Canonical binding
followed by R6 is not a complete search over authorized interventions: a
cheapest rewrite can fail a final argument constraint while another declared,
severing rewrite satisfies it. The cut's cost optimality does not resolve that
action-selection problem.

Request compilation, effect labels, validators, source labels, and severance
contracts are semantic premises; runtime invocation counts do not establish them.  The verifier shares the
graph-fold code, the label rules, $\terminalbad$, and the registry with the
builder, so the two agree on every semantic omission and neither tests H1 to H4.
A1 to A3 must hold, or the token of \Cref{eq:token} records its fields instead of
binding them.  Every severance contract must be true and not only declared.  And
nothing here defends against a compromised mediation layer: a party holding the
log key $k$ can fabricate the observation prefix that every later decision reads.
Reversibility alone says nothing about disclosure. R2 additionally requires
empty sinks and no persistent side effects. Semantic soundness here means that
every irreversible effect is labelled irreversible and every outward effect is
either labelled irreversible or declares a nonempty sink or persistence. Under
that premise R2 releases no outward effect, but it still supplies no path
certificate and can bypass authorization or budget failure. For fixed trusted
capabilities, schema, provenance, and event state, the checks are deterministic.
An injection can change subsequent proposals and provenance, but cannot directly
mint or widen authority from the trusted request. Residual goals can fall inside
the compiled authority, use declared repairs or the R2 class, exploit
unrepresented channels, or violate these semantic premises.

\section{Proofs}
\label{app:proofs}

The runtime proofs use five implementation conditions, H1 to H5, with the
labeling condition H3 split by label kind.  In a real causal chain, a
value link means the later value semantically depends on the earlier one, and the
final link means the effect's occurrence or content depends on the last value;
both are defined without reference to the graph.  H1: the chain is emitted and
simulated in the sense of \Cref{sec:guarantee}, so every covered link emits its
directed segment before dispatch, the segments concatenate, and taking a link
entails traversing its segment.  H2: every relevant unsafe or state-changing
effect passes the mediator.  H3a: protected-source labels are complete for the
chain under consideration; H3b: untrusted-origin labels are; and H3c: the
unsafe-effect rule labels the terminal of the pending call, so $U_t\neq\emptyset$.
H4: the
severance contract of every installed manifest entry is true under the complete
manifest, for the adapter content warrant and the registry occurrence warrant
alike.  H5: the decision is
current at dispatch, so the proposal, the execution version, and the policy epoch
are rechecked, the only dispatched call is the canonical rewrite of the frozen
proposal, and each resource snapshot or version is held through the certified
access.  \Cref{def:grounded} is H4 together with the single-dispatch clause of
H5.  The prototype checks freshness process-locally; restored and repaired actions
must separately satisfy the final-action premises to instantiate H4 and H5.
Assumptions A1 to A9 of
\Cref{app:protocol} discharge it in a distributed deployment, and
\Cref{prop:recertification,prop:view-freshness} give the argument.

\subsection{Admission-Time Results}
\label{app:proof-theorem1}

\tightparagraph{The gate model.}
A gate $g$ maps a judgment $\mathcal{J}_i(K,\tmpl)$ to
$\{\mathsf{mediate},\mathsf{relax}\}$ and induces a mediator $\gate_g$ that
applies its \emph{conservative transformation} at every site it mediates and the
identity elsewhere. That transformation is assumed $\delta$-achieving, so
\Cref{eq:phi-site} holds at every mediated site for every artifact in the gate's
declared domain; refusing the call is the canonical instance, since it makes
$O_{\tmpl}$ constant. Both sides of \Cref{eq:phi-site} use the deployed measure.
We write $\gate_0$ for a reference profile that mediates every site but $\tmpl$,
and $\gate_0^{\mathsf{r}}$ for its extension that relaxes $\tmpl$. The variant
class $\gV(K,\tmpl)$ collects the artifacts in the domain sharing code,
interface, and egress site while carrying different attacker-written
specifications.

\tightparagraph{The three routes.}
$\mathcal{J}_{\mathrm{I}}$ holds facts discharged by verification conditions over
the code. It proves that the mail implementation never reads credentials and says
nothing about how $\mathit{body}$ was chosen. $\mathcal{J}_{\mathrm{II}}$ is a
sound abstraction, either of the artifact against its interface or of agent
behavior reconstructed from traces; the code-interface form inherits the same
blind spot, while an abstraction of full agent behavior can succeed, but only by
separating the benign completion from the malicious one.
$\mathcal{J}_{\mathrm{III}}$ reads dependencies off the specification and fails
whenever a semantics-changing rewrite leaves the extractor's output unchanged.
Publisher attestation replaces such reasoning with identity and stays out of
scope.

\tightparagraph{The two conditions.}
Write $\Delta_{\gate}(K,\bar c,\tmpl)$ for the inner supremum of
\Cref{eq:phi-site}, the secret-induced gap at $\tmpl$ under context $\bar c$.
C1: route $i$ assigns one judgment to a pair $K_0,K_1\in\gV(K,\tmpl)$ in the
gate's domain with
$\sup_{\bar c}\Delta_{\gate_0^{\mathsf{r}}}(K_0,\bar c,\tmpl)=0$, so $K_0$ is
safe to relax. C2: the attacker can pick $\bar c^\star$ with
$\Delta_{\gate_0^{\mathsf{r}}}(K_1,\bar c^\star,\tmpl)>\delta$, so $K_1$ is not.
They are what a route and a threat model must supply before
\Cref{thm:empty-unsound} says anything.

\begin{lemma}[The site criterion is necessary]
\label{lem:site-necessary}
Work in the setting of \Cref{app:measure}, write
$\mathcal B^{*}=\bigsqcup_{n\geq0}\mathcal B^{n}$, and let
$\mathrm{pr}_{\tmpl}:\mathcal E\to\mathcal B^{*}\sqcup\{\bot\}$ send a finite
egress sequence to the subsequence of its entries tagged $\tmpl$, or to $\bot$
when there is none, so that $O_{\tmpl}=\mathrm{pr}_{\tmpl}\circ\Egress$. Then $\mathrm{pr}_{\tmpl}$ is
measurable, and for every mediator $\gate$, artifact $K$, public context $\bar c$,
and pair $s_0,s_1$,
\[
  d\bigl((O_{\tmpl})_{\#}\mu^{\gate}_{K,\bar c[s\leftarrow s_0]},\,
         (O_{\tmpl})_{\#}\mu^{\gate}_{K,\bar c[s\leftarrow s_1]}\bigr)
  \;\leq\;
  d\bigl(\Egress_{\#}\mu^{\gate}_{K,\bar c[s\leftarrow s_0]},\,
         \Egress_{\#}\mu^{\gate}_{K,\bar c[s\leftarrow s_1]}\bigr).
\]
Consequently $\Phi_{\mathrm{NI}}(K,\gate;\delta)$ implies
$\Phi_{\mathrm{site}}(K,\gate,\tmpl;\delta)$ at every egress site.
\end{lemma}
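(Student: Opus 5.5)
The proof has two parts: establish measurability of $\mathrm{pr}_{\tmpl}$ by working piece by piece on the disjoint union, and then apply the contraction of total variation under a measurable map to the composition $O_{\tmpl}=\mathrm{pr}_{\tmpl}\circ\Egress$.

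\emph{Measurability.} The domain $\mathcal E=\bigsqcup_{n\geq0}(\Omega\times\mathcal B)^n$ carries the disjoint-union $\sigma$-algebra, so a map out of $\mathcal E$ is measurable iff its restriction to each summand $(\Omega\times\mathcal B)^n$ is measurable. Fix $n$. Because $\Omega$ is finite, $(\Omega\times\mathcal B)^n$ splits into finitely many measurable pieces $\{\tmpl_1\}\times\mathcal B\times\cdots\times\{\tmpl_n\}\times\mathcal B$, one for each tag word $(\tmpl_1,\dots,\tmpl_n)\in\Omega^n$. The tag pattern is constant on each piece, so the index set $I=\{j:\tmpl_j=\tmpl\}$ is fixed there. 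On that piece, $\mathrm{pr}_{\tmpl}$ is one of two maps:
\begin{itemize}
\item the constant $\bot$, when $I=\emptyset$;
\item the coordinate projection $(b_j)_{j\in I}$ into $\mathcal B^{|I|}$ followed by the canonical inclusion into $\mathcal B^{*}\sqcup\{\bot\}$, when $I\neq\emptyset$.
\end{itemize}
Both maps are measurable. A map that is measurable on each member of a countable measurable partition is measurable, so $\mathrm{pr}_{\tmpl}$ is measurable. Since \Cref{app:measure} takes $\Egress$ to be measurable, with halting almost sure so that it is defined off a null set, $O_{\tmpl}$ is measurable as a composition.

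\emph{Contraction.} Write $P=\Egress_{\#}\mu^{\gate}_{K,\bar c[s\leftarrow s_0]}$ and $Q$ for the corresponding measure with $s_1$. Pushforwards compose, so $(O_{\tmpl})_{\#}\mu=(\mathrm{pr}_{\tmpl})_{\#}P$, and likewise for $Q$. For any measurable $A\subseteq\mathcal B^{*}\sqcup\{\bot\}$, the set $\mathrm{pr}_{\tmpl}^{-1}(A)$ is measurable in $\mathcal E$, which gives
\[
|(\mathrm{pr}_{\tmpl})_{\#}P(A)-(\mathrm{pr}_{\tmpl})_{\#}Q(A)| = |P(\mathrm{pr}_{\tmpl}^{-1}A)-Q(\mathrm{pr}_{\tmpl}^{-1}A)| \leq d(P,Q).
\]
Taking the supremum over $A$ yields the stated inequality. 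The consequence follows by taking suprema over $\bar c$ and $s_0,s_1$ on both sides: the left side is then bounded by the left side of \Cref{eq:phi-ni}, which is at most $\delta$.

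The only step needing care is measurability, in particular confirming that the tag-pattern pieces are measurable and that the target $\mathcal B^{*}\sqcup\{\bot\}$ carries the disjoint-union $\sigma$-algebra, so that the inclusions are measurable. Both facts are immediate from finiteness of $\Omega$ and the construction in \Cref{app:measure}. The null set on which $\Egress$ is undefined does not affect pushforward measures, so it plays no role in the argument.
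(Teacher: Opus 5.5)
Your proposal is correct and follows essentially the same route as the paper: you prove measurability of $\mathrm{pr}_{\tmpl}$ summand by summand on the countable disjoint union, then apply contraction of total variation under a measurable map via preimages, then take suprema over $s_0,s_1$ and $\bar c$. Your explicit split of each $(\Omega\times\mathcal B)^n$ by tag word is just a more detailed version of the paper's ``coordinate projections with selection on the discrete tag.''
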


\begin{proof}
On each summand $(\Omega\times\mathcal B)^n$ of $\mathcal E$ the map
$\mathrm{pr}_{\tmpl}$ is a finite composition of coordinate projections with
selection on the value of the discrete tag, hence measurable; a map on a countable
disjoint union that is measurable on each summand is measurable, so
$\mathrm{pr}_{\tmpl}$ and therefore $O_{\tmpl}$ are measurable. Total variation
does not increase under a measurable map: for any measurable $f$ and any $P,Q$,
$d(f_{\#}P,f_{\#}Q)=\sup_{A}|P(f^{-1}A)-Q(f^{-1}A)|\leq\sup_{B}|P(B)-Q(B)|=d(P,Q)$,
since the preimages $f^{-1}A$ form a sub-collection of the measurable sets. Apply
this with $f=\mathrm{pr}_{\tmpl}$ to the two egress pushforwards. Taking the
supremum over $s_0,s_1$ bounds $\Delta_{\gate}(K,\bar c,\tmpl)$, and taking the
supremum over $\bar c$ gives the second claim. The converse fails and we never
use it.
\end{proof}

\begin{proof}[Proof of \Cref{thm:empty-unsound}]
Let $K_0,K_1\in\gV(K,\tmpl)$ witness C1 and let $g$ be as in the statement. Both
variants lie in $g$'s declared input domain and carry the same judgment, and $g$
has the form $g(\mathcal J_i(\cdot,\tmpl))$, so it decides them alike. Suppose
that decision is $\mathsf{relax}$. Because $\gate_g$ agrees with $\gate_0$ off
$\tmpl$ at every artifact in the domain, in particular at $K_1$, we have
$\gate_g=\gate_0^{\mathsf{r}}$ there, and C2 supplies a public context
$\bar c^\star$ with $\Delta_{\gate_g}(K_1,\bar c^\star,\tmpl)>\delta$. Hence
$\Phi_{\mathrm{site}}(K_1,\gate_g,\tmpl;\delta)$ fails at an artifact inside the
gate's domain, contradicting $\delta$-soundness. Every such gate therefore
mediates both variants, and by \Cref{lem:site-necessary} the same failure
forecloses $\Phi_{\mathrm{NI}}$ at level $\delta$. The statement has content only
because the class of $\delta$-sound gates is nonempty: the gate that mediates
everywhere is $\delta$-sound by the $\delta$-achieving assumption, not vacuously.

It remains to say when C1 holds. For Route I and a code-interface Route II
abstraction the judgment is a function of the shared code and interface alone, so
it is constant on $\gV(K,\tmpl)$. C1 then reduces to the existence inside
$\gV(K,\tmpl)$ of a safe member $K_0$ and, with C2, of a leaking member $K_1$.
That is a $\delta$-separation condition on the induced families
$\basefamily(K_0)$ and $\basefamily(K_1)$, hence a property of the frozen policy
$M$ and the context space rather than of the code. It fails whenever the
hypothesis of \Cref{prop:code-resolved} holds, and it also fails when no safe
member exists at all, so the two results do not exhaust the space between them. The argument does not
cover a sound abstraction of full agent behavior that separates the pair. For
Route III, let the extractor factor through a feature map $\phi$. If a
semantics-changing rewrite preserves $\phi$, preserves the code and interface, and carries one specification to
$\sup_{\bar c}\Delta_{\gate_0^{\mathsf r}}(\cdot,\bar c,\tmpl)=0$ while the other
satisfies C2, then the two carry the same extracted judgment and C1 holds. This is a sufficient condition for a blind spot,
not a claim about every text extractor. Identity whitelists backed by trusted
attestation, and independently certified invariants, fall outside the route
judgment considered here.
\end{proof}

\begin{proposition}[The code-resolved positive case]
\label{prop:code-resolved}
Fix $\tmpl$ and the deployed mediator $\gate$ with that site relaxed and
its policy at other sites held fixed. Let $v(\tau)$ collect the slice inputs
supplied to $\tmpl$ along $\tau$, a measurable function of the trajectory.
Write $\hat O_{\tmpl}$ for the value the code emits at $\tmpl$ before the mediator
acts on it. Suppose that for every $L\in\gV(K,\tmpl)$ there is a measurable $F$
with $\hat O_{\tmpl}(\tau)=F\bigl(v(\tau),s\bigr)$ for
$\mu^{\gate}_{L,\bar c}$-almost every $\tau$ and every $\bar c$, so that $v$ also
determines how often $\tmpl$ is reached. Suppose a sound verification condition
establishes $F(v,s_0)=F(v,s_1)$ for all $v,s_0,s_1$, including the site's
occurrence, termination, and exposed metadata, and write $F_0$ for the common
value, a measurable function of $v$ alone. Suppose $\gate$ acts at $\tmpl$ as a measurable function of the $\tmpl$-call
alone, and suppose finally that under the deployed measure the law of $v$ is
independent of the secret,
$v_{\#}\mu^{\gate}_{L,\bar c[s\leftarrow s_0]}=v_{\#}\mu^{\gate}_{L,\bar c[s\leftarrow s_1]}$
for every $\bar c$ and $s_0,s_1$. Then $\Delta_{\gate}(L,\bar c,\tmpl)=0\leq\delta$,
so a Route I gate may relax the class without violating $\Phi_{\mathrm{site}}$.
\end{proposition}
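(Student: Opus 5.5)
The plan is to show that, under the deployed measure with $\tmpl$ relaxed, the observation $O_{\tmpl}$ is almost surely a fixed measurable function of $v$ alone. The secret-independence of the law of $v$ then forces the two pushforwards in the inner supremum of \Cref{eq:phi-site} to coincide. In short: factor $O_{\tmpl}$ through $v$, then push forward equal laws.

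\textbf{Step 1 (reduce to the emitted value).} Since $\gate$ relaxes $\tmpl$, and at $\tmpl$ acts as a measurable function $\mathrm{med}$ of the $\tmpl$-call alone, the observed egress at $\tmpl$ is obtained from the emitted sequence by a measurable map. Concretely, $O_{\tmpl}=\mathrm{med}^{*}\circ\hat O_{\tmpl}$ almost surely, where $\mathrm{med}^{*}$ applies $\mathrm{med}$ entrywise on $\mathcal B^{*}$ and fixes $\bot$. It is measurable on each summand $\mathcal B^{n}$, hence on the disjoint union, by the same argument as in \Cref{lem:site-necessary}. Under relaxation $\mathrm{med}$ is the identity, but the argument only needs measurability. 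Whether $\tmpl$ is reached at all, and how many times, is part of $\hat O_{\tmpl}$, because the hypothesis says $v$ determines occurrence. So the $\bot$ case is also covered.

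\textbf{Step 2 (factor through $v$).} For $\mu^{\gate}_{L,\bar c[s\leftarrow s_b]}$-almost every $\tau$, the hypothesis gives $\hat O_{\tmpl}(\tau)=F(v(\tau),s_b)$. The verification condition then gives $F(v(\tau),s_b)=F_0(v(\tau))$. Hence $O_{\tmpl}=G\circ v$ almost surely under both measures, with $G=\mathrm{med}^{*}\circ F_0$ measurable and independent of $b$. The point to stress is that the null set may depend on $b$. This is harmless, because pushforwards are insensitive to modification on null sets.

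\textbf{Step 3 (conclude).} We have
\[
(O_{\tmpl})_{\#}\mu^{\gate}_{L,\bar c[s\leftarrow s_b]}=G_{\#}\bigl(v_{\#}\mu^{\gate}_{L,\bar c[s\leftarrow s_b]}\bigr),
\]
and the inner laws agree for $b=0,1$ by the independence hypothesis. So the total variation distance is $0$. Taking suprema over $s_0,s_1$ and $\bar c$ gives $\Delta_{\gate}(L,\bar c,\tmpl)=0\le\delta$ for every $L\in\gV(K,\tmpl)$. The relaxing Route I gate therefore satisfies $\Phi_{\mathrm{site}}$ on the class.

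\textbf{Main obstacle.} The substantive step is Step 1. One must check that relaxing $\tmpl$ introduces no dependence of the observation on anything beyond the call: no context-dependent mediator state, and no metadata outside $\hat O_{\tmpl}$. This is exactly what the hypotheses "measurable function of the $\tmpl$-call alone" and "including occurrence, termination, and exposed metadata" supply. I would state explicitly that $\hat O_{\tmpl}$ is taken to include all exposed bytes at the site, so that $O_{\tmpl}$ is a function of it. The remaining steps are routine measure theory.
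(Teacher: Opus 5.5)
Your proposal is correct and follows the paper's proof essentially step for step. Both write $O_{\tmpl}$ almost surely as $G\circ v$, with $G$ the mediator's measurable action composed with $F_0$, and then push the secret-independent law of $v$ through that single map. Your entrywise lift $\mathrm{med}^{*}$ on $\mathcal B^{*}\sqcup\{\bot\}$, and your note that the null sets may depend on the secret, are small rigor refinements of what the paper states as ``applied at each occurrence'' and ``almost surely.''
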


\begin{proof}
Fix $L$, $\bar c$, and $s_0,s_1$. The verification condition gives
$\hat O_{\tmpl}=F_0(v)$ almost surely, and $F_0$ does not read $s$. The
mediator's action at $\tmpl$ is a fixed measurable map $\mathrm{med}$ of the
call, applied at each occurrence, so $O_{\tmpl}=G(v)$ for the fixed measurable
$G=\mathrm{med}\circ F_0$. The last hypothesis makes the law of $v$ the same
under $s_0$ and $s_1$. Equal laws pushed
through one measurable map are equal, so
$(O_{\tmpl})_{\#}\mu^{\gate}_{L,\bar c[s\leftarrow s_0]}
=G_{\#}v_{\#}\mu^{\gate}_{L,\bar c[s\leftarrow s_0]}
=G_{\#}v_{\#}\mu^{\gate}_{L,\bar c[s\leftarrow s_1]}
=(O_{\tmpl})_{\#}\mu^{\gate}_{L,\bar c[s\leftarrow s_1]}$ and
$\Delta_{\gate}=0$. The conclusion concerns this site-level criterion only, and
\Cref{lem:site-necessary} does not upgrade it to $\Phi_{\mathrm{NI}}$.
\end{proof}

The last hypothesis is the load-bearing one and Route I evidence does not supply
it. It says the model's choice of the slice inputs is itself secret-independent,
which is noninterference of $M$. It holds for a deterministic library call whose
arguments are fixed by the code, and it fails for the mail skill of
\Cref{sec:motivation}, where $M$ builds $\mathit{body}$ from the context.
\Cref{thm:empty-unsound} and \Cref{prop:code-resolved} therefore bracket the
space from opposite sides without meeting in the middle.

\subsection{Certified Separation}

\begin{proof}[Proof of \Cref{thm:cert-separation}]
Clause (a) uses no semantic premise about executions. Acceptance is atomic with
respect to the record append, so the freshness tests, the condition checks, and
the store advance share one linearization point.

Conditions~1 to 3 of \Cref{app:verify} fix the objects. Live mode with
$\mathrm{res}=\bot$ lets the same operation reserve $q_t$ and append
\texttt{DECIDE(accept)}, which is the theorem's protocol clause. The certificate
names the current execution-state version $\sigma_t$ and the immutable
\texttt{PROPOSE} record of $q_t$ with no decision record, so $\widehat G^{-}_t$
is exactly the active graph named, and the current epoch fixes
$\widehat{\Sev},\widehat{\Impl},\widehat{\terminalbad}$ and the schema. Binding to
the proposal, and not to a decision field, breaks the circularity.

Conditions~4 to 7 rebuild $\widehat B_t$, $\widehat P_t$, and $\widehat U_t$ from
that fixed material, recompute every binding and capacity, and check that
removing $\cut_t$ separates $\widehat U_t$ from
$\widehat B_t\cup\widehat P_t$ at the declared cost. Conditions~8 to 10 tie the
cut to enforcement, giving $\mathrm{dom}(\manifest_t)=\cut_t$ with one entry per
edge, no $\bot$ binding, canonical hook metadata, and declared joint
admissibility. Minimality is the one property the verifier takes on declaration
rather than recomputation, which is why \Cref{thm:cert-separation} states the
declared cost and not an optimality claim. For clause (b), acceptance reruns
all nine obligations on each effect atom and its capability witness. Their
coverage, validator, declassifier, and hook premises then establish authority
for the concrete effects of the checked call.
\end{proof}

\subsection{The Cut: Closed Form}
\label{app:proof-cut}

The network is the flow subgraph of $G^{-}_t$ with a super-source joined to
$B_t\cup P_t$ and a super-sink joined from $U_t$, both at infinite capacity, so
max-flow/min-cut applies directly \citep{fordfulkerson} and the connectors never
appear in a minimum cut. Each original edge appears once and is charged once. The
emission rules make the resulting optimum explicit.

\begin{proposition}[Closed form of the certified cut]
\label{prop:cut-closed-form}
Fix a step with $U_t=\{T(q_t)\}$ and write $\kappa(C)=\sum_{e\in C}\kappa(e)$.
Let $A$ be the argument edges of $a_t$ and $e_{\mathrm T}$ its terminal edge
$\mathrm{op}(q_t)\to T(q_t)$. Let $R$ be the set of nodes reachable from
$B_t\cup P_t$ in the committed flow subgraph, \emph{including
$B_t\cup P_t$ itself}, and let $A^{\star}=\{e\in A:\src(e)\in R\}$. If
$A^{\star}=\emptyset$ then $\cut_t=\emptyset$. Otherwise, provided at least one of
$e_{\mathrm T}$ and $A^{\star}$ is bindable throughout,
\begin{equation}
  \kappa(\cut_t)=\min\bigl\{\,\kappa(\{e_{\mathrm T}\}),\ \kappa(A^{\star})
  \,\bigr\},
  \label{eq:cut-closed}
\end{equation}
and $\cut_t$ is the lexicographically least minimum-cost separating set under the
canonical edge order. Placing $e_{\mathrm T}$ before every argument edge in that
order makes $\cut_t$ equal to $\{e_{\mathrm T}\}$ on a tie. One reachability pass
over the committed flow subgraph therefore determines $\cut_t$ in time linear in
the size of $G^{-}_t$, given $B_t$ and the bindings.
\end{proposition}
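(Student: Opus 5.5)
The plan is to characterize all separating sets first and then optimize over them. By \Cref{eq:capacity}, only proposed edges of $a_t$ have finite capacity, and by emission rules E1 and E4 these are exactly $A\cup\{e_{\mathrm T}\}$. Any finite-capacity cut is therefore a subset $S\subseteq A\cup\{e_{\mathrm T}\}$. The key structural claim is that removing $S$ separates $B_t\cup P_t$ from $T(q_t)$ if and only if $e_{\mathrm T}\in S$ or $A^\star\subseteq S$. With that in hand, the cost and the canonical choice follow by comparing two candidates.

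\textbf{The characterization.} Take any directed flow path $\pi$ from $B_t\cup P_t$ to $T(q_t)$ in $G^{-}_t$.

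\begin{itemize}
\item By \Cref{lem:cuttable} and E4, the last edge of $\pi$ is $e_{\mathrm T}$, the unique in-edge of the fresh terminal.
\item The operation node $\mathrm{op}(q_t)$ is fresh. Its only in-edges are the E1 argument edges in $A$, since committed edges never enter a node minted at step $t$. So the penultimate edge of $\pi$ lies in $A$.
\item The prefix of $\pi$ before that argument edge $e$ avoids $\mathrm{op}(q_t)$ and $T(q_t)$. Every non-committed flow edge is incident to one of these fresh nodes, so the prefix uses only committed flow edges. Hence $\src(e)\in R$, i.e.\ $e\in A^\star$. If the path starts at the source itself, membership holds because $R$ includes $B_t\cup P_t$.
\end{itemize}

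Conversely, each $e\in A^\star$ yields such a path: a committed path from a source to $\src(e)$, then $e$, then $e_{\mathrm T}$. Thus the unsafe paths are exactly those ending in $e\,e_{\mathrm T}$ with $e\in A^\star$. Removing $S$ kills all of them iff $e_{\mathrm T}\in S$ or $A^\star\subseteq S$. In particular, if $A^\star=\emptyset$ there is no unsafe path and the canonical minimum cut is $\emptyset$.

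\textbf{Optimization.} Assume $A^\star\neq\emptyset$. Capacities are strictly positive, so any separating set strictly containing $\{e_{\mathrm T}\}$ or $A^\star$ costs more than that minimal member. The minimum-cost separating sets are therefore among $\{e_{\mathrm T}\}$ and $A^\star$. The bindability proviso ensures at least one of these has finite capacity, which gives \Cref{eq:cut-closed}. On a tie, the canonical order places $e_{\mathrm T}$ first, so $\{e_{\mathrm T}\}$ is lexicographically least. If only one candidate is finite, it is selected outright.

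\textbf{Complexity.} Computing $R$ is one BFS over the committed flow subgraph. Reading $A^\star$ off the argument edges and comparing two sums are linear as well.

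\textbf{Main obstacle.} The delicate step is the claim that the prefix avoids proposed edges, namely that no committed edge enters $\mathrm{op}(q_t)$ and no edge leaves $T(q_t)$. I would discharge this from the freshness of $q_t$ in A4/E4 and from the log status rules of \Cref{app:coverage}. Those rules commit an operation's edges only at finalization, which happens after certification.
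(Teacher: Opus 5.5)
Your proposal is correct and follows essentially the same route as the paper's proof. You restrict finite-capacity edges to $A\cup\{e_{\mathrm T}\}$ via E1/E4, show every unsafe path ends $\src(e)\to\mathrm{op}(q_t)\to T(q_t)$ with $e\in A^\star$ (including the reflexive case), conclude that a separating set must contain $e_{\mathrm T}$ or all of $A^\star$, and use strict positivity and the canonical order to select the minimizer, with your observation that the prefix uses only committed edges spelling out what the paper leaves implicit in its freshness argument.
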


\begin{proof}
By \Cref{eq:capacity} the only finite-capacity edges are the proposed flow
instances of $a_t$ with a nonempty binding, and by (E1) and (E4) of
\Cref{app:coverage} those lie in $A\cup\{e_{\mathrm T}\}$; rules (E2), (E3), and
(E5) fire at finalization and emit committed edges only. The node
$\mathrm{op}(q_t)$ is fresh, its only out-edge before dispatch is
$e_{\mathrm T}$, and $T(q_t)$ has no other in-edge. Every directed path from
$B_t\cup P_t$ to $T(q_t)$ therefore ends
$\src(e)\to\mathrm{op}(q_t)\to T(q_t)$ for some $e\in A$ with $\src(e)\in R$,
that is for some $e\in A^{\star}$; the reflexive case, where $\src(e)$ is itself
protected or untrusted, is the path of length two. Conversely each
$e\in A^{\star}$ carries such a path. If $A^{\star}=\emptyset$ no path exists and
$\emptyset$ separates. Otherwise let $C$ be any separating set of finite-capacity
edges. For each $e\in A^{\star}$, cutting the path through $e$ requires
$e\in C$ or $e_{\mathrm T}\in C$, so either $e_{\mathrm T}\in C$ or
$A^{\star}\subseteq C$, and by strict positivity of $\kappa$ we get
$\kappa(C)\geq\min\{\kappa(\{e_{\mathrm T}\}),\kappa(A^{\star})\}$ with equality
only for $C=\{e_{\mathrm T}\}$ or $C=A^{\star}$. Both of these separate, so they
are the only minimizers and \Cref{eq:cut-closed} follows. If some edge of
$A^{\star}$ is unbindable then $\kappa(A^{\star})=\infty$ and the minimum is
attained at $\{e_{\mathrm T}\}$ whenever $e_{\mathrm T}$ is bindable; if
$e_{\mathrm T}$ is unbindable too, the minimum is infinite, the sentinel appears
in every separating set, and R5 refuses.
\end{proof}

Only the current call's edges are severable, so the optimization reduces to a
two-way choice between blocking the call and paying for its live argument edges,
and \Cref{lem:cuttable} shows the first branch always separates when the adapter
offers \texttt{block\_call}. Optimality therefore does not imply usefulness: the
cut earns its place only where the argument branch is both cheaper and available,
which the occurrence warrant makes rare for model-chosen calls.
The preceding proof also gives the final-action criterion in
\Cref{sec:guarantee}: for any $S\subseteq A\cup\{e_{\mathrm T}\}$, every
unsafe path is hit iff $e_{\mathrm T}\in S$ or $A^\star\subseteq S$.
A restoration executes with $S=\emptyset$; when $A^\star\ne\emptyset$, at
least one path survives regardless of C's verdict. We keep the max-flow formulation because it
survives emission rules that add intermediate proposed edges, for which the
two-candidate argument no longer applies.

\subsection{Coverage and Simulation on the Transparent Fragment}
\label{app:proof-transparent}

\begin{proof}[Proof of \Cref{prop:transparent}]
Write $w_i$ for the call that realizes the link $x_i\!\to\!x_{i+1}$ for $i<m-1$,
$q_t$ for the call that realizes the final link $x_{m-1}\!\to\!\zeta_t$, and
$\nu$ for the host's provenance map, and set $\alpha_{\mathrm{D}}=\nu$ on the
values of $\gamma$ and $\alpha_{\mathrm{T}}(\zeta_t)=T(q_t)$. The argument is an
induction over the host trace using only (E1) to (E5) of \Cref{app:coverage},
which the mediation layer executes and the artifact does not.

\emph{Base and the direct case.} $x_0$ has a provenance identifier by
hypothesis. If $m=1$ it enters $q_t$ at a declared position, so E1 and E4 emit
the final segment directly and the value-link induction is empty. If $m>1$,
transparency puts $x_0$ at a declared argument position $j$ of $w_0$, coverage
gives $\mathrm{schema}(\mathrm{op}(w_0),j)\neq\bot$, and E1 emits
$\alpha_{\mathrm{D}}(x_0)\!\to\!\mathrm{op}(w_0)$ before $w_0$ leaves the host.

\emph{Value links.} Fix $i\geq0$ and suppose every earlier segment is emitted.
Transparency puts $x_{i+1}$ either at a declared return position of $w_i$ or in a
resource $w_i$ declares that it writes. In the first case (E2) mints its
identifier and emits $\mathrm{op}(w_i)\!\to\!\alpha_{\mathrm{D}}(x_{i+1})$ at the
outcome hook, giving the two-edge segment
$\pi_i=\alpha_{\mathrm{D}}(x_i)\!\to\!\mathrm{op}(w_i)\!\to\!
\alpha_{\mathrm{D}}(x_{i+1})$; this is the shortest form available, since the
construction has no \texttt{Data}-to-\texttt{Data} edge. In the second case the link is realized by the write $w_i$ together with the read
$w_i'$ that follows it. Rule (E5) mints the resource version $R_{v+1}$ and commits
$\mathrm{op}(w_i)\!\to\!R_{v+1}$ with the union of that call's committed input
labels, and $w_i'$ contributes $R_{v+1}\!\to\!\mathrm{op}(w_i')$ by (E1) and
$\mathrm{op}(w_i')\!\to\!\alpha_{\mathrm{D}}(x_{i+1})$ by (E2), so $\pi_i$ has
four edges and passes through $R_{v+1}$. The proposition's finalization hypothesis names both, $w_i$ as the call that
produced the resource version and $w_i'$ as the call that produced the value. Either way, identifier
preservation and (E3) carry the identifier through any pass-through tool, and the
transport hypothesis puts the value at the declared argument position of the next
call, so the in-edge (E1) emits starts where the out-edge of $w_i$ ended. A weaker
notion of identity fails here: byte equality would merge unrelated values, and
object identity would not survive serialization at the tool boundary.

\emph{The final effect link.} Rules (E1) to (E3) cannot supply this step, since
they emit only \texttt{Data}--\texttt{Operation}--\texttt{Data} segments and
never reach a \texttt{Terminal} node. By transparency $x_{m-1}$ enters $q_t$ at a
declared position, so (E1) emits
$\alpha_{\mathrm{D}}(x_{m-1})\!\to\!\mathrm{op}(q_t)$, and by (E4) the proposal
extension creates $T(q_t)$ fresh and emits $\mathrm{op}(q_t)\!\to\!T(q_t)$, both
at \systemhook{before\_tool\_call}. Their composition
$\pi_{m-1}=\alpha_{\mathrm{D}}(x_{m-1})\!\to\!\mathrm{op}(q_t)\!\to\!T(q_t)$
reaches the effect and not only the last value. Without (E4) the proposition
would establish value-to-value provenance and say nothing about confinement.

\emph{Concatenation, timeliness, and entailment.} Consecutive segments share the
endpoint $\alpha_{\mathrm{D}}(x_{i+1})$, so they concatenate into one flow path
from $\alpha_{\mathrm{D}}(x_0)$ to $T(q_t)$. The calls $w_0,\dots,w_{m-2}$
and every reading call $w_i'$ appended \texttt{OUTCOME} before $q_t$ was proposed,
so their proposal, output, and resource-version edges are committed and appear in
$G^{-}_t$, whether those calls were certified or permissively released. The two
edges of $\pi_{m-1}$ are \texttt{proposed} instances of $q_t$ present before
dispatch. Taking a direct value link requires its declared input and output, which
fire (E1) and (E2). Taking a resource-mediated link requires the declared write
and the declared read, which fire (E1) and (E5) for $w_i$ and (E1) and (E2) for
$w_i'$, that is all four edges of $\pi_i$. The no-secondary-call condition
excludes any other realization of a link, at value links and at the final link
alike. Rule (E4) fires before dispatch and the schema assigns traversal of the
terminal edge to crossing that mediated effect boundary. Taking a link therefore
entails its segment, and $\gamma$ is simulated.
\end{proof}

\begin{remark}[Reach of the transparent fragment]
\label{rem:transparent-scope}
Hidden input and hidden effect escape the fragment; internal computation does
not. The fragment is narrow because identifiers do not survive the model, so
\Cref{sec:experiments} reports its coverage over tools, calls, and whole chains.
Coverage is stated apart from H1 to keep the pair from being circular: coverage is
structural and audited offline, while emission and simulation are runtime claims
inside H1. A covered link whose emitter never fires is therefore a violated
premise, not a silently uncovered one. Source and terminal labeling, H3, is
measured on its own.
\end{remark}

\subsection{Freshness, Authentication, and Anti-Replay}
\label{app:proof-freshness}

\begin{proposition}[One-shot authorization and one-shot consumption]
\label{prop:recertification}
Assume A1 to A8. Every accepted certificate advances the verifier store by
exactly one. A certificate naming a stale committed state or an old sequence
number is rejected, and one accepted once cannot be accepted again. A token
issued once is admitted by the executor at most once. And after a crash, the
absence of \texttt{DISPATCH} implies no external send.
\end{proposition}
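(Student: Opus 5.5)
The proof reduces each of the four claims to a protocol primitive from \Cref{app:protocol}. It then argues by linearizability (A5) that those primitives compose safely under any interleaving of the graph builder, certificate producer, verifier, and executor. I would fix an arbitrary linearized history of log appends, $\textsc{AcceptIfFresh}$, $\textsc{ReserveIfFresh}$, $\textsc{Finalize}$, $\textsc{Reconcile}$, epoch updates, and executor compare-and-swap operations. Each claim is then proved as an invariant over that history, by induction on its length.

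\emph{Store advance and rejection of stale or replayed certificates.} Acceptance happens only inside $\textsc{AcceptIfFresh}$. Conditions 1 and 2 of \Cref{app:verify} require $n_t$ to be exactly the next sequence number and $\sigma_t$ to be current. The same atomic step sets $n^{\ast}\leftarrow n_t$, extends $h^{\mathrm{auth}}$, and appends \texttt{DECIDE(accept)}. A failing check returns $\bot$ and leaves the verifier store unchanged. Hence each acceptance advances the store by exactly one. A certificate with an old sequence number fails condition 2. A certificate naming a stale $\sigma_t$ also fails condition 2, because A6 makes $\sigma_t$ a rollback-protected object. Replay of a once-accepted certificate fails twice over. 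Its $n_t$ is now below $n^{\ast}+1$, and the \texttt{PROPOSE} record of $q_t$ already carries a decision, which condition 2 forbids. The adversary cannot bypass this by forging a log record or a stale store, because MAC unforgeability (A1), unique parsing (A3), and rollback protection (A6) rule both out.

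\emph{One-shot token consumption.} The executor admits a token only by compare-and-swap on the spent state, together with the reservation $\mathrm{res}$ moving from $q_t$ back to $\bot$, and only after the A8 recheck of $q_t,\sigma_t,\varepsilon_t,\mathsf h(L^{\mathrm p}_t),\mathsf h(\manifest_t)$. Under linearizability, at most one compare-and-swap on a given token succeeds. A second token that verifies as valid would need either a MAC forgery under $k'$ or a hash collision, which A1 and A2 exclude. Uniqueness of $q_t$ (A4) rules out reissue under the same identifier. I would state the result as "except with negligible probability" and reduce any violation to a forgery or collision adversary.

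\emph{Crash safety.} By A7, \texttt{DISPATCH} is written durably and ahead of any externally visible send, and no send is reordered before that commit. So if a send occurred, its \texttt{DISPATCH} record is on stable storage. The contrapositive gives the claim. I expect the main obstacle to be the first two claims. There I must make precise that the freshness anchor is $(\sigma_t,n^{\ast})$ and not the raw log tip, and show that no interleaving of a refusal record or an epoch update lets a stale certificate pass. That requires the single critical section of A5 to cover the appender together with all four mediation primitives.
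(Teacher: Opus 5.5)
Your proposal is correct and follows essentially the paper's route. Store advance, staleness and replay come from the A5 linearization point together with condition~2 (strict successor $n_t=n^{\ast}+1$, current $\sigma_t$, no prior decision for $q_t$); one-shot spending comes from a compare-and-swap; and the crash claim is A7's write-ahead ordering read contrapositively.

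Two small points to fix. First, the reservation is released by $\textsc{Finalize}$, not by the executor's token compare-and-swap; releasing it at spend time would break the single-flight property that \Cref{prop:view-freshness} relies on. Second, at-most-once admission must also survive an executor restart, which the paper grounds in the spent-token store keyed by $q_t$ persisting across restart (A4--A6), not in linearizability alone.
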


\begin{proof}
By (A5) acceptance is one linearization point: the reservation test, the
freshness tests, the update
$h^{\mathrm{auth}}\gets\mathsf{h}(h^{\mathrm{auth}}\|\Cert_t)$,
\texttt{DECIDE(accept)}, and token issue are inseparable, so no record or crash
leaves an accepted-but-undecided reservation, and two submissions of the same
$n_t$ cannot both succeed. Condition~2 of \Cref{app:verify} rejects stale
authorization. A certificate with $n_t\neq n^{\ast}\!+\!1$ is rejected, so the
sequence is a strict successor relation, not an equality test against a
current digest, which would reject every new certificate while admitting the most
recent one. A certificate whose $\sigma_t$ is not the current execution-state version
is rejected as well; by (A6) the raw log tip is not tested, since the step's own
records move it. Replay of an accepted certificate is caught by condition~2, which
finds a decision already recorded for that attempt, and by the sequence test; a
\texttt{BLOCKED} or \texttt{CANCEL} attempt does not advance $\sigma_t$, so the
state test alone would not reject it. By (A4) the certificate cannot be
re-presented as a fresh attempt without a different $q_t$ and hence a different
token preimage.

Verifier acceptance alone would not bound execution, since an issued token could
be replayed while the first call is still running. The executor therefore admits
$\mathsf{tok}_t$ by a compare-and-swap on its own store keyed by $q_t$, which by
(A4) to (A6) admits it at most once and survives restart. This is also why $q_t$
is never reused: a retry under the same identifier would collide with a consumed
entry.

Binding ties the token to the manifest. By (A1) no party without $k'$ produces a
valid $\mathsf{tok}_t$; by (A3) the preimage
$q_t\|n_t\|\sigma_t\|\varepsilon_t\|\mathsf{h}(L^{\mathrm{p}}_t)\|
\mathsf{h}(\manifest_t)$ parses uniquely, so a producer cannot shift bytes
between fields; and by (A2) a producer cannot show one manifest to the verifier
and a different manifest with the same digest to the executor. Without (A2) or
(A3) the executor could install a manifest the verifier never checked while the
token still verifies.

The crash claim needs (A7) and (A8). A finalized policy block has a durable
\texttt{BLOCKED} terminal and no send, and a failed preparation sends nothing and
finalizes \texttt{CANCEL}. Otherwise, were \texttt{DISPATCH} appended only in
memory, a request could leave the host while the crash erased the record, and
recovery would then classify a completed irreversible effect as never dispatched
and permit an automatic retry. Write-ahead durability with no reordering rules
that out, so absence of \texttt{DISPATCH} implies no send and permits
\texttt{CANCEL}.
\end{proof}

\begin{proposition}[The executed manifest is certified against the executed
state]
\label{prop:view-freshness}
Under A1 to A6, A8, and A9, no other mediated call is reserved, dispatched, or
committed between the moment $\textsc{AcceptIfFresh}$ reserves $q_t$ and the
moment $\textsc{Finalize}$ releases it. At dispatch the immutable proposal,
resource snapshots or leases, execution-state version, policy epoch, and installed
manifest are exactly those the certificate names, and the dispatched call is
exactly $\mathrm{Apply}_{\varepsilon_t}(\manifest_t,L^{\mathrm p}_t)$.
\end{proposition}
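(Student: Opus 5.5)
The plan splits the statement into an exclusion claim and an equality claim. The exclusion claim says no other mediated call is reserved, dispatched, or committed inside the window. The equality claim says that at dispatch, the objects named by the certificate are the objects in force, and the dispatched bytes are $\mathrm{Apply}_{\varepsilon_t}(\manifest_t,L^{\mathrm p}_t)$.

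\emph{Step 1: the exclusion claim.}
\begin{itemize}
\item By (A5), $\textsc{AcceptIfFresh}$ moves $\mathrm{res}$ from $\bot$ to $q_t$ atomically with \texttt{DECIDE(accept)}.
\item Every other mediation primitive shares the same critical section: $\textsc{ReserveIfFresh}$, a second $\textsc{AcceptIfFresh}$, and the executor's compare-and-swap. Condition~1 of \Cref{app:verify} and the symmetric test in $\textsc{ReserveIfFresh}$ require $\mathrm{res}=\bot$. So no other attempt can be reserved until $\textsc{Finalize}$ resets $\mathrm{res}$.
\item No unreserved attempt can be dispatched. By (A1), a dispatch needs either a token keyed to its own $q$, which only acceptance issues, or a release decision, which only reservation appends.
\item No unreserved attempt can commit, since commits occur only through $\textsc{Finalize}$ of a reserved attempt.
\item By (A6), the reservation is rollback-protected, so a crash cannot reopen the window.
\end{itemize}

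\emph{Step 2: the state named at acceptance equals the state at the $\textsc{Prepare}$ check.}
\begin{itemize}
\item At acceptance, conditions~2 and~3 of \Cref{app:verify} certify that $\sigma_t$ and $\varepsilon_t$ are current and that $L^{\mathrm p}_t$ is the immutable \texttt{PROPOSE} record.
\item $\sigma_t$ could change only through a committed flow edge, a label change, or a live-value change. By (A9), each such event either passes the mediator, which Step~1 excludes, or advances $\sigma_t$.
\item $\varepsilon_t$ changes only through linearizable epoch updates under (A5).
\item In either case, (A8) has $\textsc{Prepare}$ recheck $q_t$, $\sigma_t$, $\varepsilon_t$, $\mathsf{h}(L^{\mathrm p}_t)$ and $\mathsf{h}(\manifest_t)$ against the current stores before sending. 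Any change therefore yields \texttt{CANCEL}, never a dispatch.
\item The token's binding by (A1)--(A3), as argued in \Cref{prop:recertification}, ensures the manifest the executor installs is the one with digest $\mathsf{h}(\manifest_t)$ that the verifier checked. The proposal is the one with digest $\mathsf{h}(L^{\mathrm p}_t)$; by (A4) it is immutable and includes $\iota_t$ and the resource versions.
\end{itemize}

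\emph{Step 3: resource snapshots and leases.}
\begin{itemize}
\item (A9) forces every resource position to have a snapshot or a lease held through the certified access.
\item (A8) acquires these leases only after the check in Step~2 and holds them through the access.
\item An out-of-band change is then either excluded by the lease or a violation of (A9).
\end{itemize}

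\emph{Step 4: the dispatched bytes.}
\begin{itemize}
\item (A8) says the \texttt{READY} branch carries $\mathsf{h}(\mathrm{Apply}_{\varepsilon_t}(\manifest_t,L^{\mathrm p}_t))$ and dispatches only those bytes.
\item $\mathrm{Apply}$ is deterministic given the epoch.
\item Collision resistance (A2) identifies the dispatched call with $\mathrm{Apply}_{\varepsilon_t}(\manifest_t,L^{\mathrm p}_t)$.
\item The \texttt{BLOCKED} branch dispatches nothing, so the equality claim holds vacuously there.
\end{itemize}

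\emph{Main obstacle.} The hard step is Step~2's gap between acceptance and the $\textsc{Prepare}$ check, and more exactly the gap between that check and the actual send. Between those two points the version can still move through non-mediated events, such as policy epoch updates or external resource changes.

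I would try to close it by invoking (A8)'s phrase that the leases are held through the certified access. I would then argue that the recheck, lease acquisition, and \texttt{DISPATCH} append (A7) form one linearizable step under (A5). That forces any competing epoch update to serialize before the check, where it is caught, or after finalization, where it is irrelevant.

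If (A5) as stated does not cover epoch updates interleaving with an in-flight send, I would need to read the policy lease in (A8) as blocking epoch advancement. That reading should be stated explicitly in the proof.
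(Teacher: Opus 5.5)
Your proposal is correct and follows the paper's own argument: reservation exclusion from A5/A6, A9 forcing any unmediated change of committed evidence to advance $\sigma_t$, and the A8 recheck of $q_t$, $\sigma_t$, $\varepsilon_t$ and both digests. Collision resistance then ties the installed manifest to the certified one, and the \texttt{READY} branch dispatches only $\mathrm{Apply}_{\varepsilon_t}(\manifest_t,L^{\mathrm p}_t)$ under held resource leases. The obstacle you flag is closed in the paper by exactly your fallback. A5 lists $\textsc{LeaseIfCurrent}$ as a linearizable primitive; it atomically checks $\varepsilon_t$ and takes a read lease, so a changed epoch yields \texttt{CANCEL} and any later epoch writer must wait. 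You can therefore drop your first option, a single linearizable step spanning recheck, lease acquisition, and the \texttt{DISPATCH} append, which A5 does not provide.
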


\begin{proof}
By \Cref{thm:cert-separation} the token commits to $\sigma_t$, $\varepsilon_t$,
$\mathsf{h}(L^{\mathrm{p}}_t)$, and $\mathsf{h}(\manifest_t)$. A5 and A6 make the
global reservation linearizable and rollback-protected, so no other mediated call
can advance $\sigma_t$ before finalization, and by A9 no unmediated event can
change the committed evidence without advancing $\sigma_t$ either. By A8,
$\textsc{Prepare}$ spends the token only after rechecking $q_t$, $\sigma_t$,
$\varepsilon_t$, and the two digests against the immutable \texttt{PROPOSE}
record, and its \texttt{READY} branch dispatches nothing but
$\mathrm{Apply}_{\varepsilon_t}(\manifest_t,L^{\mathrm p}_t)$ while holding every
resource lease through the certified access. Its
$\textsc{LeaseIfCurrent}(\varepsilon_t)$ atomically checks the epoch and acquires
a read lease; a changed epoch finalizes \texttt{CANCEL} without dispatching, and
after success a writer must wait. By collision resistance of $\mathsf{h}$ the
installed manifest is the one whose digest the token carries, and preparation
returns \texttt{READY} only once all entries are installed. Clauses three and four
of H5 are therefore consequences of the protocol rather than assumptions, which is
why \Cref{thm:confinement} may cite this proposition. Dropping A9 leaves the
statement true of the store and false of the world.
\end{proof}

\begin{remark}[Why the reservation is required]
\label{rem:single-flight}
The epoch freezes configuration, not execution state, so without a global
reservation this interleaving is admitted. Call $q_1$ reads a shared file and
sends it to the network, and at certification the file is benign, so the cut is
empty or cheap. Before $q_1$ dispatches, $q_2$ writes a protected value into that
file and commits. No configuration changed, so $\varepsilon$ did not move and the
old token still verifies against the epoch, the certified cut never saw the new
path, and $q_1$ sends the secret. Two properties close it: the token carries
$\sigma_t$, which $q_2$'s \texttt{OUTCOME} advances, so the executor's recheck
turns the interleaving into a refusal, and the reservation prevents it from
arising at all, covering permissive releases too. Separately,
\Cref{prop:recertification} bounds how often a \emph{token} is spent, not how
often an external effect happens, which is why \texttt{DISPATCH} is committed
before the call is invoked and why recovery retries only in the provably
undispatched case.
\end{remark}

\subsection{Grounded Confinement and the Episode Invariant}
\label{app:proof-episode}

\begin{proof}[Proof of \Cref{thm:confinement}]
Let $\gamma=x_0\!\to\!\cdots\!\to\!x_{m-1}\!\to\!\zeta_t$ be a schema-covered
real causal chain from a registered protected source to the pending unsafe
effect, and suppose the verifier accepted its certificate but the chain
completes. For $i<m-1$ let $w_i$ realize $x_i\to x_{i+1}$ with simulated segment
$\pi_i$, let $\pi_{m-1}$ be the final segment for $q_t$, and let
$\alpha_{\mathrm D},\alpha_{\mathrm T}$ be the simulation maps with
$\alpha_{\mathrm T}(\zeta_t)=T(q_t)$. By H3a, registration gives
$\alpha_{\mathrm{D}}(x_0)\in P_t$. By H3c, the unsafe-effect rule gives
$T(q_t)\in U_t$. By H1 every link has a nonempty flow segment, the segments
concatenate, and taking a real link entails traversing every edge of its segment.
Write $\Pi=\pi_0\cdots\pi_{m-1}$, a flow path from $P_t$ to $T(q_t)$ in
$G^{-}_t$. Concatenation is what makes $\Pi$ a witness of reachability, since
per-link emission alone would leave segments that never meet at shared endpoints.
Because $G^{-}_t$ is folded in phase~1 from the committed state and the proposal
only, this witness exists before the call runs and uses no outcome of step $t$.

By \Cref{thm:cert-separation}, $T(q_t)$ is unreachable from $P_t$ after removing
$\cut_t$, so some edge $e$ of $\Pi$ lies in $\cut_t$. That edge lies on at least
one segment; choose a corresponding real link, which by entailment cannot be
taken unless $e$ is traversed. The same theorem gives
$\mathrm{dom}(\manifest_t)=\cut_t$ with one entry per cut edge, so $e$ has an
entry, and that entry is the canonical minimum-cost member of
$\Sev(e)\cap\Impl(e)$ on a \texttt{proposed} instance with the registry's
argument and hook. By H4 the executor installs the entry and its severance
contract holds under the complete manifest, and by the definition of $\Sev$ in
\Cref{sec:cut} that contract speaks about real execution, not about the graph. By
the theorem's freshness condition the manifest, state, and epoch at dispatch are
the certified ones, which \Cref{prop:view-freshness} implements under its protocol
assumptions. The installed entry is therefore the entry checked, the link cannot
be taken, and $\gamma$ does not complete, contradicting the supposition. The scope
is exactly coverage: a chain with an uncovered link has no declared segment,
witnesses no reachability, and is untouched.
\end{proof}

\begin{corollary}[Grounded integrity]
\label{cor:integrity}
Replace H3a by H3b and let $\gamma$ start at a live value the deployment marks
untrusted. Under the same conditions $\gamma$ cannot complete at this call.
\end{corollary}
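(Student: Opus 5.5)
The plan is to rerun the proof of \Cref{thm:confinement} with the source endpoint moved from $P_t$ to $B_t$, so only the first labeling step changes. Let $\gamma=x_0\to\cdots\to x_{m-1}\to\zeta_t$ be a covered, emitted, and simulated chain whose start $x_0$ is a live value the deployment marks untrusted. Suppose, for contradiction, that the verifiers accepted with a grounded $\manifest_t$ and that $\gamma$ completes at this call.

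\emph{Placing the start in $B_t$.} H3b makes the untrusted-origin labels complete for $\gamma$. So $\alpha_{\mathrm D}(x_0)$ carries an untrusted origin label and is an entry of the live-value set $S_t$ stored with $\sigma_t$. By \Cref{app:deferred}, $B_t$ is the closure of $S_t$ over committed flow edges and contains $S_t$ itself. Hence $\alpha_{\mathrm D}(x_0)\in B_t\subseteq B_t\cup P_t$.

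\emph{Building the path.} H3c gives $T(q_t)\in U_t$. H1 concatenates the per-link segments $\pi_0\cdots\pi_{m-1}$ into a flow path $\Pi$ in $G^{-}_t$ from $B_t\cup P_t$ to $U_t$. Two facts keep this path valid at step $t$. It is built only from committed edges and the proposed edges of $q_t$. Proposed edges never enlarge $B_t$, so $\alpha_{\mathrm D}(x_0)$ is a source node rather than an endpoint that would need the closure.

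\emph{Closing the argument.} \Cref{thm:cert-separation}(a) separates the whole set $B_t\cup P_t$ from $U_t$, not only $P_t$. So some edge $e$ of $\Pi$ lies in $\cut_t$, and by \Cref{lem:cuttable} it is a proposed instance of $q_t$. The rest follows the proof of \Cref{thm:confinement} verbatim. The manifest has exactly one entry for $e$. H4 (groundedness) makes its severance contract true under the complete manifest. Freshness, via \Cref{prop:view-freshness}, makes the installed entry the one that was checked. By entailment in H1, the real link whose segment contains $e$ cannot be taken, so $\gamma$ does not complete, which is the contradiction.

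The main obstacle is the first step. The proof must show that a value labeled untrusted is actually represented in $B_t$ at $\sigma_t$, and is not an untrusted value that appears only through the current proposal. H3b, stated as completeness of untrusted-origin labels over live values, together with the liveness hypothesis in the corollary, is exactly what is needed here. If H3b were read as applying only to committed nodes, I would add the observation that a live value has a committed node under the provenance map $\nu$.
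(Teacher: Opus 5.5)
Your proposal is correct and follows the paper's proof. H3b places $\alpha_{\mathrm D}(x_0)$ in $B_t$ instead of $P_t$, \Cref{thm:cert-separation} separates all of $B_t\cup P_t$ from $U_t$, and the rest of the argument for \Cref{thm:confinement} carries over unchanged; one small citation fix is that the cut edge $e$ being a proposed instance of $q_t$ comes from \Cref{thm:cert-separation}(a) (verifier condition~5), not from \Cref{lem:cuttable}, which only says every unsafe path contains some proposed edge.
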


\begin{proof}
The proof above uses $\alpha_{\mathrm{D}}(x_0)\in P_t$ only to place the path's
source in the cut's source set. H3b places it in $B_t$ instead, and
\Cref{thm:cert-separation} separates $U_t$ from $B_t\cup P_t$, so the rest of the
argument is unchanged. The integrity statement is weaker in practice because the
occurrence warrant is unavailable for model-chosen calls, so the cut that
discharges it is almost always $\{e_{\mathrm T}\}$.
\end{proof}

\begin{corollary}[Monitor invariant over an episode]
\label{cor:episode}
Let the certified steps of an episode be exactly $t_1<\cdots<t_N$, with H1, H3a,
H3c, H4, and H5 at each of them and A1 to A9 throughout. Then
\Cref{thm:confinement} holds at every $t_i$. Suppose in addition that the episode
runs under \method{}-\textsc{c} and that H2 holds for the class of effects at
which a covered chain could complete, so the deployment registers every one of
them as a mediated unsafe boundary. Then no covered chain from a registered
protected source completes at any certified step.
\end{corollary}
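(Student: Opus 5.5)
The plan is to prove the corollary in two layers. The first layer is a per-step instantiation of \Cref{thm:confinement}. The second lifts it to the episode by showing that every covered chain's completion must land on a certified step.

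\emph{First clause.} Fix a certified step $t_i$. The premises of \Cref{thm:confinement} there are: both verifiers accept with a grounded manifest; the chain is covered, emitted and simulated (H1); source and terminal labels are complete (H3a, H3c); and the proposal, state and epoch are current through dispatch. All of these except currency are assumed directly at $t_i$. Groundedness is H4 together with the single-dispatch clause of H5, as noted after H5. For currency I invoke \Cref{prop:view-freshness} under A1--A6, A8, A9. It gives that no other mediated call intervenes between reservation and finalization, and that the dispatched call is $\mathrm{Apply}_{\varepsilon_t}(\manifest_t,L^{\mathrm p}_t)$ under the certified $\sigma_t,\varepsilon_t$. \Cref{prop:recertification} supplies that each token is spent once, so a certificate cannot govern a second dispatch. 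Applying \Cref{thm:confinement} at each $t_i$ independently then gives the first claim. The only subtlety is that the history differs across steps, but the theorem is stated per step against $G^{-}_{t_i}$. The append-only growth of the committed graph (\Cref{def:graph}) means no earlier evidence a later step relies on is rewritten.

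\emph{Second clause.} Suppose, for contradiction, that some covered chain $\gamma$ from a registered protected source completes at a step $t$. Its final link is an effect at which a covered chain completes. By the strengthened H2, that effect passes the mediator as a registered unsafe boundary, so $t$ is a mediated step with $U_t=\{T(q_t)\}$.

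Under \method{}-\textsc{c}, R2, R3 and restoration of blocks are disabled. So every mediated step either refuses (R1, R4--R6), dispatches nothing (terminal block), or dispatches through an accepted path certificate, with R6 rechecking the final bytes. In the refuse and terminal-block cases no external effect occurs, so $\gamma$ cannot complete there. In the remaining case $t$ is a certified step, hence $t=t_i$ for some $i$, and the first clause contradicts completion.

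\emph{Main obstacle.} The step I expect to need the most care is this case split. I must argue that \method{}-\textsc{c} has no release path bypassing the certificate, including declared repairs. In \textsc{c} a repair is just a nonblocking manifest that R6 rechecks, so it is certified. I must also argue that a nonblocking manifest installed under the cut still counts as the certified action, which follows from the final-action criterion after \Cref{prop:cut-closed-form}: the installed set contains $e_{\mathrm T}$ or $A^\star$.
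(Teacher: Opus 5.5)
Your proposal is correct and follows essentially the paper's proof. You instantiate \Cref{thm:confinement} at each certified step, then use H2, the one-shot acceptance and token discipline of \Cref{prop:recertification}, and the absence of uncertified executing branches under \method{}-\textsc{c} to force any completion onto a certified step; the only difference is that the paper treats the empty-cut certified step explicitly through \Cref{prop:cut-closed-form} (an empty cut forces $A^\star=\emptyset$, which a covered chain contradicts), whereas you get that case from the separation clause inside \Cref{thm:confinement}, which is equally valid.
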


\begin{proof}
The first part applies \Cref{thm:confinement} at each $t_i$. For the second, H2
routes every effect of the registered class through the mediator, so such an
effect executes only under an accepted decision. A decision is accepted only
inside an accepting $\textsc{AcceptIfFresh}$, and by
\Cref{prop:recertification} a certificate is accepted at most once and its
authorization admitted at most once. Under \method{}-\textsc{c} there is no
uncertified executing branch: R1, R4, R5, and R6 refuse, and the remaining branch
certifies. A
certified step with an empty cut installs nothing. But H3c makes
$U_t\neq\emptyset$ wherever a covered chain could complete, so
\Cref{prop:cut-closed-form} applies and an empty cut means
$A^{\star}=\emptyset$. A covered chain would put $\alpha_{\mathrm D}(x_0)\in
P_t\subseteq R$, carry its prefix over committed edges so that
$\alpha_{\mathrm D}(x_{m-1})\in R$, and place its last value at a declared
argument position of $q_t$, which forces that argument edge into $A^{\star}$. So
no covered chain completes at an empty-cut step either. Persistence of previously installed entries carries
nothing, since after the committed state changes an old cut may be bypassed by a
new edge. Execution is prevented by token discipline instead, which makes this a
reference-monitor invariant and not a further confinement argument. H2 is a
completeness assumption on the deployment's effect registry, and without it the
statement says only that nothing leaks through boundaries the deployment labelled
correctly.

The corollary is stated for \method{}-\textsc{c} because
\method{}-\textsc{p} has executing branches without a grounded path
certificate. Under the semantically sound labels of \Cref{app:scope}, an
outward or irreversible effect cannot satisfy R2's three conjuncts. Outside R2,
an unrepaired dispatch, including R3 and a restored P block, requires all nine
effect obligations to pass; otherwise the only executing route is a
registry-declared repair of a repairable verdict. This proves the
payload-independent claim in \Cref{sec:cut}: an out-of-authority outward or
irreversible effect can execute only through such a repair, given sound request
compilation, schema coverage and semantics, validators and declassifiers,
complete mediation of the effects, and application of every required hook.
Because repairs are not rechecked by R6,
the argument does not certify their final effects. R2 and restoration likewise
do not establish the episode confinement invariant. A leaking R2 call would
require an unsound label or a missing sink/persistence declaration;
reversibility alone is not its admission test. The per-call theorem applies
to \method{}-\textsc{p} only when its actual installed action satisfies all the
theorem's premises.
\end{proof}

\section{Experimental Details}
\label{app:experiments}

\subsection{Frozen Hyperparameters}
\label{app:hyperparameters}

Every value below is fixed before the primary run and is not tuned per benchmark,
model, or seed.

\begin{table}[t]
\caption{Frozen evaluation and policy configuration.}
\centering
\small
\begin{tabularx}{\linewidth}{@{}p{0.33\linewidth}X@{}}
\toprule
Field & Value \\
\midrule
Target models &
\texttt{gpt-5.6-luna}; \texttt{DeepSeek-V4-Flash};
local \texttt{Qwen-3.8-27B} served by vLLM \\
Decoding & Temperature $0$; maximum completion length $1024$ tokens \\
Ablation seed & \texttt{20260828}; single deterministic paired run \\
Concurrency & 8 workers; throughput only, no effect on the statistical
population \\
Defense lanes &
\texttt{none}, \texttt{pace}, and the external baselines of
\Cref{tab:baselines-app} \\
Layer switches & Path confinement, capability and effect verification, and
execution-boundary adaptation, toggled independently for the ablation arms
of \Cref{tab:arms-app} \\
Cut solver & $\src$--$\snk$ max-flow/min-cut over the flow subgraph of $G^-_t$ \\
Effect obligations & The nine checks of \Cref{sec:cut}, evaluated in a fixed
order \\
Judge & Frozen prompt and aggregation; evidence required; no manual
correction \\
Technical-failure ceiling & $5\%$ per reported cell \\
\bottomrule
\end{tabularx}
\end{table}

\subsection{Benchmarks, Adapters, and Denominators}
\label{app:benchmark-details}

\begin{table}[h]
\caption{Benchmarks, native evaluators, and the paired-subset size used in the
ablation.  Commit and data revisions are recorded in the release manifest of
\Cref{app:reproducibility}.}
\label{tab:benchmarks-app}
\centering
\small
\begin{tabularx}{\linewidth}{@{}p{0.20\linewidth}Xr@{}}
\toprule
Benchmark & Scaffold and native evaluator & Paired cases \\
\midrule
AgentDojo & Official task, security, and utility functions & 198 \\
AgentDyn & Official three-domain runner and utility scorer & 60 \\
WASP & Official end-to-end evaluator; plain-text and URL groups & 24 \\
ASB & Official agents and metric scripts; seven run variants & 280 \\
PASB & Official personalized-agent workflow; IPI and four memory subsets & 58 \\
InjecAgent & Official tools and cases; Base and Enhanced splits & 212 \\
MCPTox & Official templates with the native refusal metric & 135 \\
MSB & Official stage scorer with PUA and NRP & 200 \\
\midrule
Total & & 1167 \\
\bottomrule
\end{tabularx}
\end{table}

Each benchmark keeps its own success and utility functions.  AgentDojo and
AgentDyn use executable security and utility; InjecAgent reports direct-harm and
data-stealing rates; ASB its five attack groups with utility; WASP the plain-text
and URL-injection groups with end-to-end user utility; PASB the IPI group and
four memory subsets; MCPTox per-template attack success with a refusal rate; and
MSB stage scores with PUA and NRP. Capabilities are compiled deterministically
from the authenticated request, intersecting request-derived effect families,
trusted argument constraints, organizational policy, and tool schemas. The
compiler reads no injected artifact, attack label, or judge output. Policies and
schemas were developed from public interfaces, task formats, and development
tests, then frozen before the final runs and completion of the factorial arms.

For every adapter the harness stores an episode record containing the user task,
the injected artifact, the target-model revision, the tool schema, every call to
the defense hooks, tool results, selected actions, certificate-verifier decisions, native
evaluator output, judge evidence, latency, token count, and terminal status.  For a defended run, an episode lacking a required defense-hook call is invalid
rather than a sample. The additional adaptive search uses the same AgentDojo
version and configuration. An authority audit of all 629 (user, injection) pairs
checks ground-truth target calls without dispatch: 375 (59.6\%) fail AUTH,
comprising 60/144 Banking, 39/105 Slack, 93/140 Travel, and 183/240 Workspace
pairs. With seed 20260922, ten out-of-authority pairs are selected from each of
Slack, Travel, and Workspace; Banking is excluded. The undefended agent and
\method{}-\textsc{p} are attacked separately on
\texttt{Qwen/Qwen-3.8-27B-FP8} at temperature 0. Starting from the static
injection, a mutator using that checkpoint at temperature 1.0 sees six candidates
sampled from the eight best and four most recent, the agent's tool calls, guard
decisions and first failed obligations where applicable, and the security
verdict; it proposes four new payloads per round. Each pair stops at first
success or 50 episodes. The protocol, sample, and code were fixed before an
isolated five-pair pilot. Twenty of the 21 undefended successes, selected with a
fixed seed, were reviewed manually.

Each reported rate carries four quantities: the numerator, the valid denominator,
the invalid count, and the technical-failure count.  Three conventions matter for
reading the tables.  InjecAgent DS-S2 is a conditional metric, so a cell with no
eligible sample is reported as \texttt{N/A (0 eligible)} and never as a measured
zero.  MCPTox refusal and ASB refusal use the valid denominator of their own
arm, which differs across arms whenever the invalid count differs.  For MSB, group NRP is $N_g=U_g(1-A_g)$ on rates in $[0,1]$.
Full-table companion metrics pool their eligible samples, using
$M_{\rm pool}=\sum_g n_{g,M}M_g/\sum_g n_{g,M}$; undefined PUA/NRP groups
contribute no samples. Thus a weighted mean of group NRPs need not equal the
product of separately displayed PUA and ASR means. NRP is a composite score,
not a counted joint safe-and-successful outcome. The official type-macro output
instead averages ASR across valid attack types and PUA across types where it is
defined, then computes
$\mathrm{PUA}_{\rm macro}(1-\mathrm{ASR}_{\rm macro})$. Full type coverage has 12 ASR and 8 PUA types;
\texttt{false\_error}, \texttt{simulated\_user}, and their injection
combinations lack native PUA. The ablation's four-stage macro is a separate
summary and is not interchangeable with the full-table pooling.

\subsection{Ablation Design}
\label{app:ablation-design}

\begin{table}[h]
\caption{Ablation arms.  P is path confinement, C is capability and effect
verification, B is execution-boundary adaptation.  All eight arms run on the same
paired subset, so component contrasts are paired; zero-ASR floors can prevent identifying
a security benefit.}
\label{tab:arms-app}
\centering
\small
\begin{tabularx}{\linewidth}{@{}llllX@{}}
\toprule
Arm & P & C & B & Role \\
\midrule
A0 & & & & Wrapper-matched no-defense corner \\
A1 & \checkmark & & & Path confinement alone \\
A2 & & \checkmark & & Capability and effect verification alone \\
A3 & & & \checkmark & Execution-boundary adaptation alone \\
A4 & \checkmark & \checkmark & & Path confinement with verification \\
A5 & \checkmark & & \checkmark & Path confinement with boundary adaptation \\
A6 & & \checkmark & \checkmark & Verification with boundary adaptation \\
A7 & \checkmark & \checkmark & \checkmark & Complete \method{}, paired anchor \\
\bottomrule
\end{tabularx}
\end{table}

Every arm keeps the runner, the transport, the case selector, and the evaluation
chain of the complete system and changes only the layer switches, so a reduced
arm is wrapper-matched to complete \method{} rather than compared against a
historical no-defense run.  The complete-method column is re-aggregated offline
from the run records of the full system on the frozen case identifiers, which
issues no new model call and leaves the main tables untouched.

Two data-handling notes belong here rather than in the main text.  Ten MSB cases
had failed for a dependency reason in an earlier run and were re-executed under
the same configuration and merged back by case identifier, which brings both
affected arms to 200 scored cases with no technical failure.  On ASB the paired
selection is made by matching a stable trajectory identity, namely the agent
name, the attack tool, and the initial user message; an earlier aggregation that
took the first forty rows of each run file has been withdrawn, and the audited
selection resolves all seven run variants one to one with no missing or ambiguous
match. The completed factorial contains all 64 benchmark--arm cells.
C toggles capability installation, effect expansion, verification, and the
combiner together, so its contrasts concern that bundle. Metric-specific
eligibility is applied after matching logical case IDs. ASB and
InjecAgent-Enhanced have zero ASR already at A0 and identify no security
improvement.

\subsection{Statistical Analysis}
\label{app:statistics}

The main and per-group tables report point estimates on native denominators;
they do not display confidence intervals. For the adaptive case-success result
we report $x/n$ and a two-sided 95\% Wilson interval.  For correlated actions we resample
complete episodes, and where several attacks share a task we resample the task
cluster.  Paired comparisons use identical case identifiers, target revisions,
and decoding settings, with an exact McNemar test and Holm correction specified for
inferential comparisons inside each family; no such significance claim is made
from the displayed point estimates.  The ablation is one deterministic paired run under a single
seed, so we report differences on that population and make no multi-seed
stability claim. The displayed arm rates remain descriptive comparisons;
we make no claim of a statistically significant component effect.  We do not pool attack success across benchmarks, since the
native success functions are not commensurable, and we do not describe any
difference as statistically significant without the corresponding test.

Technical failure includes endpoint timeout after the frozen retry policy,
malformed official evaluator state, missing hook invocation, unrecoverable tool
sandbox failure, or an absent native denominator.  It is reported as its own
rate.  A cell above the $5\%$ ceiling is removed from the primary table and
retained, with failure causes, in \Cref{app:additional-results}.

The frozen judge returns an unsafe or safe label, the decisive trajectory span,
the affected resource, and the native-metric mapping.  The manifest records official evaluator outputs and judge evidence; the current
tables do not provide a numerical judge-agreement or parse-failure analysis.  Judge labels are never edited by hand.

\subsection{Baseline Port Fidelity}
\label{app:port-fidelity}

\begin{table}[h]
\caption{External baselines by family, with the benchmarks each is reported on
and the fidelity of the port.}
\label{tab:baselines-app}
\centering
\small
\begin{tabularx}{\linewidth}{@{}p{0.20\linewidth}p{0.30\linewidth}X@{}}
\toprule
Family & Methods & Benchmarks and fidelity \\
\midrule
No defense & Baseline, ReAct & All eight; native \\
Trusted flow and IFC &
CaMeL \citep{camel}, FIDES \citep{fides}, DRIFT \citep{drift} &
AgentDojo, AgentDyn, WASP, PASB, MSB; native on AgentDojo, validated port
elsewhere \\
Tool-use safeguards &
Progent \citep{progent}, Tool Filter \citep{agentdojo},
Tool Allowlist \citep{mcptox}, MCIP Guardian \citep{mcip},
ToolShield \citep{toolshield} &
AgentDojo, ASB, AgentDyn, MCPTox, MSB, PASB; native or validated port \\
Re-execution & MELON \citep{melon} & AgentDojo; native \\
Detection &
DataSentinel \citep{datasentinel},
PI-Detector \citep{agentdojo}, PIGuard \citep{piguard},
PromptGuard-2 \citep{promptguard2},
Metadata Sanitization \citep{mcptox} &
AgentDojo, InjecAgent, ASB, AgentDyn, MCPTox, PASB, MSB; pre-filter plus
end-to-end \\
Prompt-level &
Delimiters, Sandwich, Spotlighting, Instructional Prevention,
Direct and PoT Paraphrase, PoT Shuffle, Repeat User Prompt &
WASP, ASB, InjecAgent, AgentDyn, PASB; native \\
Diagnostic reference & DTA \citep{taintdroid,libdft} & AgentDojo; faithful
instrumentation \\
\bottomrule
\end{tabularx}
\end{table}

DTA propagates labels through instrumented hashes and persistent sessions and
serves as an instrumented diagnostic reference, excluded from main-table
selection. DataSentinel is evaluated end to end as a pre-filter. All reported
comparisons are local executions with pinned method and benchmark revisions;
manifests retain model identities, configurations, intervention positions,
evaluators, and output provenance. Ports adapt APIs, tool schemas, and endpoints
while preserving the method's intervention. Within a benchmark and target model,
methods share the case universe, attack collection, and native evaluation chain.
On AgentDojo, the undefended agent, \method{}, CaMeL, DTA, Progent, MELON,
DataSentinel, Spotlighting, and PI-Detector each completed 12/12 jobs across
three attacks and four suites. Unsupported method--task combinations remain
outside their subgroup averages; execution completion alone is not a valid
denominator for every native metric.

\subsection{Reproducibility Manifest and Replay}
\label{app:reproducibility}

The artifact contains a machine-readable manifest with the following fields:
\begin{verbatim}
paper_protocol_revision
repository_commit
container_digest
benchmark_name, benchmark_commit, data_revision
agent_scaffold_revision, adapter_revision
target_provider, target_model, target_revision
attack_family, generator_model, generator_prompt_hash
judge_model, judge_prompt_hash, judge_temperature
seed, defense_lane, policy_config_hash
episode_id, graph_version
native_metric_numerators, native_metric_denominators
technical_failure_code, latency_ms, token_counts
\end{verbatim}
The manifest is the authoritative record of every revision referenced in
\Cref{app:benchmark-details}. For each
episode, replay reconstructs the initial graph, consumes the stored
sequence of proposals and tool events, and checks every graph version,
cut, binding, policy action, certificate-verifier
decision, and executor receipt.  A replay is successful only if all
deterministic fields match.  Model generations are stored as inputs to this
replay rather than re-sampled.  A separate live replay mode reissues model
calls and is reported as a non-deterministic replication.

\subsection{Cost, Release Checklist, and Artifact Safety}

The harness records target-agent tokens, judge tokens, wall-clock latency,
flow-solver time, peak graph size, certificate-check time, and tool sandbox CPU
and memory.  A numerical mediation-overhead summary is not reported in this manuscript.  The solver is deterministic for a fixed graph, so model-call variance is
reported separately.  Hardware, vLLM revision, quantization status, tensor
parallelism, container digest, and the API price snapshot are frozen in the
reproducibility manifest.

\begin{enumerate}[leftmargin=1.5em,itemsep=0.15em]
\item Freeze all eight benchmark and scaffold revisions.
\item Publish the complete policy configuration, the effect schema, and the
capability compiler.
\item Publish the agent prompt and the frozen judge prompt.
\item Release per-episode native evaluator inputs and outputs, subject to
benchmark licenses and removal of operational secrets.
\item Release verifier tests for stale state, stale prefix, non-separating cut,
misstated cut cost, unbindable edge, and each of the nine effect obligations.
\item Release the notebook that regenerates every table and interval from the
result manifest, including the frozen case selectors of the paired subset.
\end{enumerate}

The public artifact contains no live credentials, production endpoints, or
working exfiltration targets.  Harmful payloads are represented by inert local
sinks and deterministic fixtures that preserve the path structure the evaluation
needs.

\section{Extra Experiments}
\label{app:additional-results}

\subsection{Per-Group Benchmark Results}
\label{app:full-tables}

\Cref{tab:main} report, for each benchmark and model, the worst
eligible attack group with its companion metric. Each block lists the undefended
agent, \method{}, and the three comparators with lowest three-model mean
worst-group ASR; ties use the companion metric. DTA is excluded as a diagnostic
reference. For InjecAgent the summary uses Direct Harm and DS-S1 for every
method, excluding conditional DS-S2 and the derived data-stealing aggregate.
Across the eight tables there are 84 elementary ASR columns; removing the three
DS-S2 columns and two with zero undefended ASR leaves 79. \method{} is strictly
lowest in 62, tied in 14, and higher in three: Qwen WASP plain text (4.76\%
versus FIDES 2.38\%, with 92.86\% versus 0.0\% utility), Qwen MSB-Call
(2.50\% versus Tool Filter 0.0\%, with 60.14\% versus 47.34\% PUA), and
DeepSeek AgentDyn GitHub (0.06\% versus CaMeL 0.0\%, with 63.57\% versus
3.75\% utility; PI-Detector and DRIFT also attain 0.0\% ASR). The tables
retain every group and ported baseline on the same runs as the main summaries;
metric-specific eligibility follows \Cref{app:benchmark-details}.

\begin{table*}[htbp]
\centering
\caption{Results on WASP under the official end-to-end evaluator. Attack success is
reported for the plain-text and URL-injection formats; utility is the native user-task
accuracy. Lower is better for ASR, higher for utility; best ASR per block in bold.}
\resizebox{\textwidth}{!}
{
\begin{tabular}{l|ccc|ccc|ccc}
\toprule
 \multirow{5}{*}{\textbf{Method}} & \multicolumn{3}{c|}{\textbf{Qwen-3.8-27B}} & \multicolumn{3}{c|}{\textbf{DeepSeek-V4-Flash}} & \multicolumn{3}{c}{\textbf{gpt-5.6-luna}} \\
\midrule
&\textbf{Plain-text} & \textbf{URL Injection} & \textbf{Utility} & \textbf{Plain-text} & \textbf{URL Injection}& \textbf{Utility} & \textbf{Plain-text} & \textbf{URL Injection}&\textbf{Utility}\\
 &\textbf{ASR} $\downarrow$ & \textbf{ASR} $\downarrow$ & \textbf{Accuracy} $\uparrow$ & \textbf{ASR} $\downarrow$ & \textbf{ASR} $\downarrow$ & \textbf{Accuracy} $\uparrow$ & \textbf{ASR} $\downarrow$ & \textbf{ASR} $\downarrow$ & \textbf{Accuracy} $\uparrow$\\
\midrule 
Baseline      &19.05         & 14.29      & 78.57         & 9.52  &7.14   & 79.76 & 4.76  &4.76   & 86.90  \\
Prompt Filter &9.52          & 14.29      & 82.14         & 7.14  &4.76   & 75.00 & 11.90 &9.52   & 83.33  \\
Simple Static &20.37         & 15.32      & 72.64         & 16.67 &14.29  & 64.29 & 19.04 &14.29  & 84.52 \\
FIDES         &2.38          & 40.48      & 0.0           & 19.05 &30.95  & 32.14 & 4.76  &23.81  & 59.52 \\
\midrule
PACE(Ours)    &4.76 &\textbf{7.14}&\textbf{92.86} & \textbf{0.0} &\textbf{0.0}&\textbf{91.67} & \textbf{0.0} &\textbf{0.0}&\textbf{95.24}\\
\bottomrule 
\end{tabular}
}
\label{tab:wasp}
\end{table*}

\begin{table*}[htbp]
\centering
\caption{Results on AgentDojo under the official security and utility functions,
by attack type. Lower is better for ASR, higher for utility; best ASR per block in bold.}
\resizebox{\textwidth}{!}
{
\begin{tabular}{l|cccc|cccc|cccc}
\toprule
 \multirow{5}{*}{\textbf{Method}} & \multicolumn{4}{c|}{\textbf{Qwen-3.8-27B}} & \multicolumn{4}{c|}{\textbf{DeepSeek-V4-Flash}} & \multicolumn{4}{c|}{\textbf{gpt-5.6-luna}} \\
\midrule
&\textbf{Direct} & \textbf{Import Instructions}&\textbf{Tool Knowledge} & \textbf{Utility} & \textbf{Direct} & \textbf{Import Instructions}&\textbf{Tool Knowledge}& \textbf{Utility} & \textbf{Direct} & \textbf{Import Instructions}&\textbf{Tool Knowledge}&\textbf{Utility} \\
 &\textbf{ASR} $\downarrow$ & \textbf{ASR} $\downarrow$& \textbf{ASR} $\downarrow$ & \textbf{Accuracy} $\uparrow$ & \textbf{ASR} $\downarrow$ & \textbf{ASR} $\downarrow$& \textbf{ASR} $\downarrow$ & \textbf{Accuracy} $\uparrow$ & \textbf{ASR} $\downarrow$ & \textbf{ASR} $\downarrow$ & \textbf{ASR} $\downarrow$& \textbf{Accuracy} $\uparrow$ \\
\midrule 
Baseline     &48.00        &30.21       & 30.52     &62.59  & 97.62 & 99.68 & 99.84 &75.68 & 99.05&99.84 &100   &73.77 \\
CaMeL        &0.32         & 0.32       & 0.32      &35.14  & 20.51 & 24.48 & 28.14 &60.04 & 31.00&36.09 &34.02 &57.02 \\
DTA          &0.48         & 0.95       & 0.95      &29.84  & 24.17 & 28.46 & 33.33 &54.00 & 42.13&48.01 &55.01 &56.01 \\
Spotlighting &98.40        & 86.30      & 80.00     &62.47  & 100   & 92.05 & 90.14 &75.42 & 100  &100   &100   &70.01 \\
PI Detector  &99.50        & 96.00      & 90.30     &44.40  & 100   & 100   & 92.05 &67.99 & 100  &100   &99.68 &71.01\\
MELON        &0.0          & 12.70      & 1.59      &49.76  & 16.06 & 50.08 & 20.03 &57.98 & 61.05&43.08 &54.05 &52.52\\
Progent      &0.64         & 2.54       & 3.02      &57.76  & 35.14 & 42.13 & 61.53 &64.02 & 48.01&53.10 &69.00 &62.94\\
DataSentinel &3.02         & 35.29      & 30.37     &73.77  & 62.00 & 78.06 & 74.09 &85.00 & 84.10&80.92 &75.99 &82.95\\
\midrule
PACE(Ours)  &\textbf{0.0}&\textbf{0.0}&\textbf{0.0}&59.67&\textbf{5.09}&\textbf{2.70}&\textbf{3.02}&75.25&\textbf{8.11}&\textbf{6.83}&\textbf{7.79}&72.05 \\
\bottomrule 

\end{tabular}
}
\label{tab:agentdojo}
\end{table*}

\begin{table*}[htbp]
\centering
\caption{Results on the InjecAgent-Enhanced split. Direct Harm and the two
data-stealing stages DS-S1 and DS-S2 are native attack-success rates, as is the aggregate
data-stealing column. DS-S2 is conditional and excluded, together with the
derived aggregate, from the main-table summary and 79-column comparison.
Lower is better; best per block in bold.}
\resizebox{\textwidth}{!}
{
\begin{tabular}{l|cccc|cccc|cccc}
\toprule
 \multirow{5}{*}{\textbf{Method}} & \multicolumn{4}{c|}{\textbf{Qwen-3.8-27B}} & \multicolumn{4}{c|}{\textbf{DeepSeek-V4-Flash}} & \multicolumn{4}{c|}{\textbf{gpt-5.6-luna}} \\
\midrule
&\textbf{Direct Harm} & \textbf{DS-S1}&\textbf{DS-S2} & \textbf{Data-stealing} &\textbf{Direct Harm} & \textbf{DS-S1}&\textbf{DS-S2} & \textbf{Data-stealing} &\textbf{Direct Harm} & \textbf{DS-S1}&\textbf{DS-S2} & \textbf{Data-stealing}\\
 &\textbf{ASR(\%)} $\downarrow$ & \textbf{ASR(\%)} $\downarrow$& \textbf{ASR(\%)} $\downarrow$ & \textbf{ASR(\%)} $\downarrow$ & \textbf{ASR(\%)} $\downarrow$ & \textbf{ASR(\%)} $\downarrow$& \textbf{ASR(\%)} $\downarrow$ & \textbf{ASR(\%)} $\downarrow$ & \textbf{ASR(\%)} $\downarrow$ & \textbf{ASR(\%)} $\downarrow$ & \textbf{ASR(\%)} $\downarrow$& \textbf{ASR(\%)} $\downarrow$ \\
\midrule 
Baseline                 &12.35 & 23.71 & 79.07 &18.75  & 19.61 &28.13 &52.29 &14.71 &12.55 &18.57&50.50&9.38 \\
ReAct                    &24.31 & 27.94 & 92.11 &25.74  & 27.84 &37.68 &79.02 &29.77 &16.86 &23.35&66.14&15.44 \\
PromptGuard              &13.14 & 24.45 & 80.45 &19.67  & 21.57 &31.80 &64.74 &20.59 &10.00 &17.28&56.38&9.74 \\
Prompt-Sandwich          &0.59  &  0.18 & 0.15  & 0.12  & 4.71  &14.89 &50.62 &7.54  &2.16  &17.10&73.12&12.50 \\
Instructional Prevention &1.37  &  3.13 & 58.82 & 1.83  & 9.61  &19.67 &67.29 &13.24 &3.73  &15.26&56.63&8.64 \\
PI Detector              &3.13  & 11.03 & 20.00 & 2.21  & 13.92 &27.21 &56.08 &15.26 &5.29  &23.71&57.36&13.60 \\
DataSentinel             &13.14 & 24.45 & 80.45 &19.67  & 22.16 &25.18 &74.45 &18.75 &17.06 &12.13&65.15&7.90\\
PromptGuard-2            &12.75 & 23.89 & 80.77 &19.30  & 20.59 &26.29 &87.41 &22.98 &11.57 &19.49&82.06&15.99\\
\midrule
PACE(Ours)& \textbf{0.0}&\textbf{0.0}&\textbf{0.0}&\textbf{0.0}&\textbf{0.98}&\textbf{3.12}&\textbf{0.0} &\textbf{0.0} &\textbf{0.59} &\textbf{2.39} &\textbf{30.77}&\textbf{0.74} \\
\bottomrule 

\end{tabular}
}
\label{tab:injectagent}
\end{table*}

\begin{table*}[htbp]
\centering
\caption{Results on ASB across its five attack settings, with the native utility score.
Lower is better for ASR, higher for utility; best ASR per block in bold.}
\resizebox{\textwidth}{!}
{
\begin{tabular}{l|cccccc|cccccc|cccccc|}
\toprule
 \multirow{5}{*}{\textbf{Method}} & \multicolumn{6}{c|}{\textbf{Qwen-3.8-27B}}& \multicolumn{6}{c|}{\textbf{DeepSeek-V4-Flash}} & \multicolumn{6}{c|}{\textbf{gpt-5.6-luna}} \\
\midrule
&\textbf{DPI} & \textbf{OPI}&\textbf{MP} & \textbf{Mixed} & \textbf{PoT} & \textbf{Utility}&\textbf{DPI} & \textbf{OPI}&\textbf{MP} & \textbf{Mixed} & \textbf{PoT} & \textbf{Utility} & \textbf{DPI} & \textbf{OPI}&\textbf{MP} & \textbf{Mixed} & \textbf{PoT} & \textbf{Utility}\\
 &\textbf{ASR(\%)} $\downarrow$ & \textbf{ASR(\%)} $\downarrow$& \textbf{ASR(\%)} $\downarrow$ & \textbf{ASR(\%)} $\downarrow$ & \textbf{ASR(\%)} $\downarrow$ & \textbf{Accuracy(\%)} $\uparrow$& \textbf{ASR(\%)} $\downarrow$ & \textbf{ASR(\%)} $\downarrow$& \textbf{ASR(\%)} $\downarrow$ & \textbf{ASR(\%)} $\downarrow$ & \textbf{ASR(\%)} $\downarrow$ & \textbf{Accuracy(\%)} $\uparrow$& \textbf{ASR(\%)} $\downarrow$ & \textbf{ASR(\%)} $\downarrow$& \textbf{ASR(\%)} $\downarrow$ & \textbf{ASR(\%)} $\downarrow$ & \textbf{ASR(\%)} $\downarrow$ & \textbf{Accuracy(\%)} $\uparrow$\\
\midrule 
Baseline                  & 84.92        &66.00      & 13.25      & 88.50       & 56.25        & 30.29 & 74.08 &58.25 &36.50 & 70.75 & 26.00 & 64.00    &81.08      & 31.50      & 27.00       & 72.75        & 73.50 & 80.25  \\
Delimiters                & 88.08        &73.75      & 13.50      & 86.75       & 23.75        & 24.61 & 88.42 &85.50 &51.75 &93.00 &38.00   &54.39     &81.63      & 68.25      & 48.25       & 88.00        & 76.25 & 66.57 \\
Sandwich                  & 85.33        &61.75      & 11.75      & 88.50       & 68.75        & 32.50 & 81.83 &76.25 &46.00 &59.75 &25.75 & 69.07      &71.21      & 61.50      & 52.00       & 63.25        & 57.25 & 87.86 \\
Instructional Prevention  & 71.92        &60.75      & 17.00      & 82.00       & 52.75        & 33.93 & 58.75 &61.75 &40.75 &78.50 &20.75  &72.50      &54.57      & 40.75      & 43.25       & 83.75        & 46.00 & 87.25 \\
Direct Paraphrase         & 60.08        &71.50      & 13.00      & 49.50       & 55.50        & 62.00 & 68.08 &79.00 &49.50 &51.50 &43.25 &81.64       &51.07      & 60.75      & 40.50       & 70.75        & 81.75 & 81.57 \\
PoT Paraphrase            & 84.50        &70.00      & 13.75      & 89.00       & 59.00       & 30.82 & 78.58 &87.75 &53.75 &66.25 &56.75 &62.96              &61.50      & 82.00      & 50.25       & 43.75        &92.00 & 82.43 \\
PoT Shuffle               & 84.92        &71.75      & 13.00      & 88.50       & 99.00       & 28.57 & 65.42 &43.25 &20.75 &45.75 &23.50 &57.43              &57.43      & 51.00      & 34.50       & 59.25       & 43.00 & 67.04\\
\midrule
PACE(Ours)                &\textbf{0.42}&\textbf{0.00}&\textbf{0.00}&\textbf{0.00}&\textbf{0.00}& 29.57 & \textbf{7.08}  &\textbf{4.75} &\textbf{7.25} &\textbf{10.5}&\textbf{3.25} &64.64 &\textbf{5.07}&\textbf{3.75}&\textbf{5.50}&\textbf{8.75}&\textbf{6.75}&81.63 \\
\bottomrule 
\end{tabular}
}
\label{tab:asb}
\end{table*}

\begin{table*}[htbp]
\centering
\caption{Results on AgentDyn across its three task domains, with the native utility
score. Lower is better for ASR, higher for utility; best ASR per block in bold.}
\resizebox{\textwidth}{!}
{
\begin{tabular}{l|cccc|cccc|cccc|}
\toprule
 \multirow{5}{*}{\textbf{Method}} & \multicolumn{4}{c|}{\textbf{Qwen-3.8-27B}}& \multicolumn{4}{c|}{\textbf{DeepSeek-V4-Flash}} & \multicolumn{4}{c|}{\textbf{gpt-5.6-luna}} \\
\midrule
&\textbf{Shopping} & \textbf{Github}&\textbf{DailyLife} & \textbf{Utility} & \textbf{Shopping} & \textbf{Github}&\textbf{DailyLife} & \textbf{Utility}&\textbf{Shopping} & \textbf{Github}&\textbf{DailyLife} & \textbf{Utility}\\
 &\textbf{ASR(\%)} $\downarrow$ & \textbf{ASR(\%)} $\downarrow$& \textbf{ASR(\%)} $\downarrow$ & \textbf{Utility(\%)} $\uparrow$ & \textbf{ASR(\%)} $\downarrow$ & \textbf{ASR(\%)} $\downarrow$& \textbf{ASR(\%)} $\downarrow$ & \textbf{Utility(\%)} $\uparrow$ & \textbf{ASR(\%)} $\downarrow$ & \textbf{ASR(\%)} $\downarrow$& \textbf{ASR(\%)} $\downarrow$ & \textbf{Utility(\%)} $\uparrow$ \\
\midrule 
Baseline                  & 11.11 & 17.22 & 32.50 & 63.39 & 12.78 & 17.22 & 19.50 & 65.00 & 7.78 &10.56 & 19.50 & 74.82   \\
Repeat User Prompt        & 6.67  & 15.56 & 34.50 & 61.25 & 22.78 & 10.00 & 31.50 & 61.43 & 12.28&10.00 & 31.50 & 70.54 \\
Spotlighting              & 3.33  & 13.89 & 5.50  & 65.00 & 13.33 & 7.22  & 3.00  & 58.21 & 17.22&15.00 & 27.50 & 64.46  \\
Tool Filter               & 0.0   & 0.0   & 0.05  & 7.86  & 1.67  & 2.78  & 2.00  & 8.04  & 2.78 &3.89  & 6.50  & 9.11   \\
PI-Detector               & 0.0   & 0.0   & 0.05  & 1.25  & 0.0   & 0.0   & 3.50  & 0.08  & 7.22 &1.67  & 2.50  & 2.14   \\
PI-Guard                  & 0.0   & 0.0   & 5.50  & 6.79  & 2.78  & 6.67  & 13.50 & 6.60  & 3.89 &8.33  & 1.00  & 3.93  \\
PromptGuard-2             & 6.11  & 15.56 & 33.00 & 58.42 & 9.44  & 17.22 & 23.50 & 60.36 & 3.33 &12.78 & 25.50 & 68.57   \\
CaMeL                     & 0.0   & 0.0   & 0.0   & 0.0   & 1.11  & 0.0   & 5.50  & 3.75  & 0.0  & 0.0  &  0.0  & 0.0 \\
Progent                   & 1.67  & 1.67  & 11.50 & 26.96 & 5.56  & 1.67  & 8.50  & 9.82  & 2.22 &1.67  & 5.50  & 11.25 \\
DRIFT                     & 0.0   & 0.56  & 0.50  & 29.82 & 2.78  & 0.0   & 1.50  & 13.39 & 0.0  &3.33  & 4.50  & 14.64 \\
\midrule
PACE(Ours)& \textbf{0.0} & \textbf{0.0}&\textbf{0.0}&60.54&  \textbf{0.0} & 0.06&\textbf{0.05}&63.57 &\textbf{0.0} & \textbf{0.0}&\textbf{0.0}&73.57 \\
\bottomrule 
\end{tabular}
}
\label{tab:AgentDyn}
\end{table*}

\begin{table*}[htbp]
\centering
\caption{Results on PASB for indirect prompt injection and the four memory subsets, with
the native IPI utility. Lower is better for ASR, higher for utility; best ASR per block in bold.}
\resizebox{\textwidth}{!}
{
\begin{tabular}{l|cccccc|cccccc|cccccc|}
\toprule
 \multirow{5}{*}{\textbf{Method}} & \multicolumn{6}{c|}{\textbf{Qwen-3.8-27B}}& \multicolumn{6}{c|}{\textbf{DeepSeek-V4-Flash}} & \multicolumn{6}{c|}{\textbf{gpt-5.6-luna}} \\
\midrule
&\textbf{IPI}&\textbf{Short Read} & \textbf{Short Modify}&\textbf{Long Read} & \textbf{Long Modify}& \textbf{Utility}&\textbf{IPI}&\textbf{Short Read} & \textbf{Short Modify}&\textbf{Long Read} & \textbf{Long Modify}& \textbf{Utility}&\textbf{IPI} & \textbf{Short Read} & \textbf{Short Modify}&\textbf{Long Read} & \textbf{Long Modify}& \textbf{Utility}\\
 &\textbf{ASR(\%)} $\downarrow$ & \textbf{ASR(\%)} $\downarrow$& \textbf{ASR(\%)} $\downarrow$ & \textbf{ASR(\%)} $\downarrow$& \textbf{ASR(\%)} $\downarrow$ & \textbf{Accuracy(\%)} $\uparrow$ & \textbf{ASR(\%)} $\downarrow$& \textbf{ASR(\%)} $\downarrow$ & \textbf{ASR(\%)} $\downarrow$ & \textbf{ASR(\%)} $\downarrow$ & \textbf{ASR(\%)} $\downarrow$& \textbf{Accuracy(\%)} $\uparrow$& \textbf{ASR(\%)} $\downarrow$& \textbf{ASR(\%)} $\downarrow$& \textbf{ASR(\%)} $\downarrow$ & \textbf{ASR(\%)} $\downarrow$& \textbf{ASR(\%)} $\downarrow$ & \textbf{Accuracy(\%)} $\uparrow$\\
\midrule 
Baseline                  & 0.0 & 85.00 & 62.50 & 62.50 & 75.00 & 98.47 & 0.0 &67.50 &72.50 & 60.00 & 92.50 &99.24 & 2.29 & 55.00&42.50 &50.00 &97.5 &100.0  \\
Delimiters                & 0.0 & 25.00 & 0.0   & 45.00 & 20.00 & 99.23 & 0.0 &30.00 &55.00 & 42.50 & 32.50 &98.47 & 1.53 & 22.50 & 35.00 &30.00 & 35.00 & 99.24 \\
Sandwich                  & 0.0 & 85.00 & 70.00 & 65.00 & 72.50 & 99.24 & 0.0 &72.50 &85.00 & 57.50 & 87.50 &98.47 & 3.82 & 40.00 & 25.00 &42.50 &82.50 &100.0 \\
Instructional Prevention  & 0.0 & 17.50 & 0.0   & 17.50 & 0.0   & 97.67 & 0.0 & 20.00&17.50 &27.50 &17.50 &96.95 &6.87 & 17.50 &7.50 & 27.50 &25.00 &98.47 \\
PI-Detector               & 0.0 & 85.00 & 50.00 & 70.00 & 67.50 & 99.24 & 0.0 & 77.50&65.00 &60.00 &82.50 &100.0 &2.29 & 67.50 & 35.00 & 40.00 & 70.00 & 98.47 \\
PI-Guard                  & 0.0 & 87.50 & 65.00 & 70.00 & 67.50 & 100.0 & 0.0 &82.50 &77.50 &55.00 &87.50 &99.27 &3.05 & 77.50 & 45.00 & 50.00 & 80.00 & 96.95 \\
PromptGuard-2             & 0.0 & 85.00 & 45.00 & 62.50 & 57.50 & 100.0 & 0.0 &72.50 &55.00 &50.00 &72.50 &97.71 & 83.97 & 70.00 & 40.00 & 32.50 & 62.50 & 99.24 \\
Progent                   & 0.0 & 80.00 & 5.00  & 65.00 & 12.50 & 100.0 & 0.0 &77.50 &12.50 &67.50 &22.50 &100.0 &12.98 &62.50 & 20.00 &62.50 &35.00 &98.47  \\
FIDES                     & 0.0 & 0.0   & 27.50 & 0.0   & 40.00 & 99.24 & 0.0 &5.00 & 15.00& 2.50&35.00 &98.47 &14.50 & 22.50 &32.50&12.50&65.00&99.27 \\
ToolShield                & 0.0 & 47.50 & 42.50 & 57.50 & 55.00 & 98.44 & 0.0 &50.00 &35.00 &55.00 &60.00 &95.42 &9.16 & 62.50&42.50 &70.00 &77.50&96.18  \\
CaMeL                     & 0.0 & 71.75 & 13.00 & 88.50 & 99.00 & 28.57 & 0.0 &7.50 &15.00 &37.50 &52.50 &98.47 & 5.34 &35.00 &62.50 &32.50 &67.50&99.24  \\
\midrule
PACE(Ours)& \textbf{0.0} & \textbf{0.0}&\textbf{0.0} & \textbf{0.0}& \textbf{0.0}&97.69&\textbf{0.0}  &\textbf{0.0} & \textbf{0.0}&\textbf{2.50} & \textbf{5.0}&99.24&\textbf{0.0}&\textbf{0.0} & \textbf{2.50}&\textbf{0.0} & \textbf{2.50}&100.0 \\
\bottomrule 

\end{tabular}
}
\label{tab:PASB}
\end{table*}

\begin{table*}[htbp]
\centering
\caption{Results on MCPTox for the three poisoning templates, with the native refusal
rate. Lower is better for both; refusal is reported so that a low attack-success rate cannot
be read without its cost. Best ASR per block in bold.}
\resizebox{\textwidth}{!}
{
\begin{tabular}{l|cccc|cccc|cccc|}
\toprule
 \multirow{5}{*}{\textbf{Method}} & \multicolumn{4}{c|}{\textbf{Qwen-3.8-27B}}& \multicolumn{4}{c|}{\textbf{DeepSeek-V4-Flash}} & \multicolumn{4}{c|}{\textbf{gpt-5.6-luna}} \\
\midrule
&\textbf{Template-1} & \textbf{Template-2}&\textbf{Template-3}&\textbf{Refusal} & \textbf{Template-1} & \textbf{Template-2}&\textbf{Template-3}&\textbf{Refusal} & \textbf{Template-1} & \textbf{Template-2}&\textbf{Template-3}&\textbf{Refusal} \\
 &\textbf{ASR(\%)} $\downarrow$ & \textbf{ASR(\%)} $\downarrow$& \textbf{ASR(\%)} $\downarrow$& \textbf{Refusal Rate(\%)} $\downarrow$ & \textbf{ASR(\%)} $\downarrow$ & \textbf{ASR(\%)} $\downarrow$ & \textbf{ASR(\%)} $\downarrow$& \textbf{Refusal Rate(\%)} $\downarrow$  & \textbf{ASR(\%)} $\downarrow$& \textbf{ASR(\%)} $\downarrow$ & \textbf{ASR(\%)} $\downarrow$& \textbf{Refusal Rate(\%)} $\downarrow$  \\
\midrule 
Baseline                   & 22.39 & 8.84 & 32.52 & 0.16 & 56.54 & 52.61&56.06 &0.42 &23.04 &26.10 &86.93 &27.96  \\
Metadata Sanitization      & 5.94  & 1.38 & 17.41 & 0.15 & 13.09 & 9.91 &28.60 &0.37 &5.76  &5.43  &91.48 &26.67  \\
PI-Detector                & 20.53 & 7.37 & 25.22 & 0.40 & 85.34 & 4.59 &50.19 &1.63 &48.17 &3.97  &95.45 &30.14  \\
PI-Guard                   & 11.59 & 3.09 & 11.07 & 0.91 & 50.79 & 27.56&30.68 &3.15 &34.03 &9.19  &63.64 &46.32  \\
PromptGuard-2              & 21.43 & 7.68 & 27.08 & 0.63 & 63.35 & 55.95&53.98 &1.68 &60.73 &37.16 &97.72 &31.02  \\
Tool Allowlist             & 0.72  & 0.0  & 0.18  & 2.46 & 15.18 & 13.57& 2.84 &7.66 &6.81  &5.85  &16.29 &55.78  \\
MCIP Guardian              & 0.51  & 0.40 & 0.19  & 54.23& 9.95  & 5.01 & 1.89 &76.72&7.33  &3.55  &10.04 &81.22  \\
Progent                    & 1.52  & 1.98 & 8.80  & 45.88& 20.42 & 15.24&16.48 &66.85&19.90 &11.27 &46.40 &78.43 \\
\midrule
PACE(Ours)& \textbf{0.0} & \textbf{0.0}&\textbf{0.0} & 1.97& \textbf{2.62} & \textbf{1.46}&\textbf{0.76} & 0.58& \textbf{0.52} & \textbf{0.0}&\textbf{2.65} & 28.55 \\
\bottomrule 

\end{tabular}
}
\label{tab:MCPTox}
\end{table*}

\begin{table*}[htbp]
\centering
\caption{Results on MSB by attack stage, with the native PUA and NRP scores. Attack
success is lower better; PUA and NRP are higher better. Cells marked \texttt{--} were not
produced by the official scorer for that configuration.}
\resizebox{\textwidth}{!}
{
\begin{tabular}{l|cccccc|cccccc|cccccc|}
\toprule
 \multirow{5}{*}{\textbf{Method}} & \multicolumn{6}{c|}{\textbf{Qwen-3.8-27B}}& \multicolumn{6}{c|}{\textbf{DeepSeek-V4-Flash}} & \multicolumn{6}{c|}{\textbf{gpt-5.6-luna}} \\
\midrule
&\textbf{Plan} & \textbf{Call}&\textbf{Response} & \textbf{Multi-Stage} & \textbf{PUA} & \textbf{NRP}&\textbf{Plan} & \textbf{Call}&\textbf{Response} & \textbf{Multi-Stage} & \textbf{PUA} & \textbf{NRP} & \textbf{Plan} & \textbf{Call}&\textbf{Response} & \textbf{Multi-Stage} & \textbf{PUA} & \textbf{NRP}\\
 &\textbf{ASR(\%)} $\downarrow$ & \textbf{ASR(\%)} $\downarrow$& \textbf{ASR(\%)} $\downarrow$ & \textbf{ASR(\%)} $\downarrow$ & \textbf{Accuracy(\%)} $\uparrow$ & \textbf{Accuracy(\%)} $\uparrow$ & \textbf{ASR(\%)} $\downarrow$ & \textbf{ASR(\%)} $\downarrow$& \textbf{ASR(\%)} $\downarrow$ & \textbf{ASR(\%)} $\downarrow$ & \textbf{Accuracy(\%)} $\uparrow$ & \textbf{Accuracy(\%)} $\uparrow$ & \textbf{ASR(\%)} $\downarrow$ & \textbf{ASR(\%)} $\downarrow$& \textbf{ASR(\%)} $\downarrow$ & \textbf{ASR(\%)} $\downarrow$ & \textbf{Accuracy(\%)} $\uparrow$ & \textbf{Accuracy(\%)} $\uparrow$\\
\midrule 
Baseline        & 5.67 & 38.75 & 26.35 & 18.70 & 46.34 & 36.29 &61.33&100.0 &29.68 &52.00 &91.24 &49.29 &67.00 &98.75 &9.03 &50.20 &89.08 &54.51 \\
MCIP Guardian   & 13.33& 50.00 & 14.03 & 8.40  & 59.15 & 55.70 &8.00 &77.50 &26.13 &37.30 &96.55 &51.32 &29.33 &76.25 &8.55 &36.90 &92.34 &56.87 \\
PI-Detector     & 16.67& 38.75 & 7.02  & 5.20  & 54.36 & 49.83 &55.33&53.75 &18.55 &48.40 &94.81 &51.18 &65.00 &70.00 &7.10 &47.10 &90.85 &56.22 \\
PI-Guard        & 14.00& 37.50 & 13.87 & 9.00  & 56.42 & 53.26 &32.00&57.50 &36.94 &36.90 &91.98 &49.95 &40.67 &66.25 &10.00 &38.50&89.97 &54.25 \\
PromptGuard-2   & 14.67& 38.75 & 25.00 & 17.60 & 47.85 & 40.87 &61.67&93.75 &68.06 &55.20 &93.27 &45.57 &78.00 &87.50 &16.29 &53.60&91.66 &50.58 \\
Tool Filter     & 15.00& 0.0   & 30.81 & 14.20 & 47.34 & 43.48 &9.67 &42.50 &59.35 &51.70 &93.65 &47.46 &22.33 &52.50 &13.71 &50.60&91.04 &49.75 \\
Progent         & 11.00& 38.75 & 5.65  & 4.30  & 50.98 & 47.37 &47.67&97.50 &26.45 &30.60 &97.08 &46.67 &58.33 &96.25 &7.90  &34.40&90.51 &47.64 \\
FIDES           & 1.33 & 52.50 & 10.16 & 10.60 & 45.88 & 40.40 &29.00&100.0 &23.71 &49.80 &90.66 &52.62 &34.67 &100.0 &8.87  &48.20&87.73 &56.45 \\
ToolShield      & 2.71 & 38.75 & 7.02  & 5.20  & 53.94 & 45.89 &4.33 &100.0 &20.16 &56.00 &96.24 &53.58 &9.00 &100.0 &6.77  &65.20 & -- & -- \\
\midrule
PACE(Ours)& \textbf{0.0}&\textbf{2.50}& \textbf{0.0}& \textbf{0.0}&\textbf{60.14}&\textbf{60.01}&\textbf{0.0}&\textbf{3.75}& \textbf{0.16}& \textbf{1.40}&\textbf{98.83}&\textbf{59.84}&\textbf{0.33} &\textbf{3.75}&\textbf{0.0}&\textbf{0.7}&\textbf{94.87} &\textbf{60.61} \\
\bottomrule 

\end{tabular}
}
\label{tab:MSB}
\end{table*}

\subsection{Ablation Notes}

\tightparagraph{The ASB utility row.}
\Cref{tab:ablation} reports 0/280 ASB utility for six of the eight arms, with
only P{+}C and complete \method{} at 84/280.  The exact-pair audit confirms that
these zeros come from the official \texttt{Original Task Successful} column and
are not an aggregator artifact that turned missing values into zeros.  The audit
settles the question of sample selection and does not by itself explain why the
official scorer returns zero for those six arms.  We therefore read that row
as an implementation-sensitive diagnostic and interpret it only together with
trajectory completion, the refusal column, and the native scorer semantics.  The
same audit withdraws the earlier paired values of 65/280 utility and 106/280
refusal, which corresponded to the first forty rows of each run file rather than
to the forty trajectories the frozen selector names.

\tightparagraph{Coverage and certificate health.}
The quantities in \Cref{tab:cert-health-app} are instrumentation fields,
not numerical estimates of semantic coverage. The A7 records contain 32867
PACE decisions across eight benchmarks, 26272 with full witnesses across six;
MCPTox and MSB retain compact records. All nine obligations are evaluated on
25972 schema-covered decisions, yielding 233748 predicate evaluations,
including failures. There are 22685 nonempty candidate cuts, 5911 terminal
blocks, 365 rewrite-bearing decision records, and 22191 named adapter
activations. These count different objects and cannot be summed; a rewrite field
is not proof that a rewrite executed. AgentDyn has 6647 guarded-call records,
including failures and recovery attempts: every candidate path certificate
passes, C certifies 6190, and 5842 have nonempty cuts containing 6080
edges. No nonterminal rewrite executes, so restored utility there comes from
the combiner overriding P's terminal blocks. Existing tests include positive and
negative cases for each obligation, forged witnesses and capabilities, and 20
mutations each for token forgery, schema drift, stale certificates, parameter
smuggling, and omitted adapters. These check interface behavior, not semantic
completeness or distributed durability.

\tightparagraph{Reporting conventions.}
The displayed benchmark cells are rounded point estimates; native numerators
and denominators belong to the run records.  Provider refusal, tool
failure, policy intervention, and native task failure are separate terminal
states, so a safe model refusal is never attributed to \method{} and a broken
tool episode never counts as a defense success.  Cells excluded from a primary
table under the technical-failure ceiling of \Cref{app:statistics} are listed
with endpoint timeout, evaluator error, missing hook, sandbox failure, and
malformed certificate as mutually exclusive causes, with retried and non-retried
denominators both recorded. In the reduced-scale adaptive experiment,
undefended successes are 10/10 Slack, 7/10 Travel, and 4/10 Workspace,
totalling 21/30 (70.0\%; two-sided 95\% Wilson interval [52.1, 83.3]\%).
Twelve first succeed with the static seed and nine with search-generated
payloads at iterations 3--39, using 632 episodes in total.
\method{}-\textsc{p} yields 0/30 (interval [0, 11.4]\%) after 1500 episodes.
Its boundary trace records 157 exact out-of-authority target calls in 126
episodes across 14 pairs, all blocked before execution: 149 fail AUTH and
BUDGET, and eight additionally fail BIND and CTRL. None uses R2.
Two controlled checks are excluded from ASR: direct boundary dispatch blocks
all 37 frozen target-call patterns, and submitting a target call in fresh
contexts carrying payloads from 1003 selected target-free episodes yields
the same refusal. These do not increase the 30-case attack sample. The sample
covers only out-of-authority goals in three suites, uses the target checkpoint
as mutator, and permits 50 episodes per case; stronger mutators and attacks on
parsing or unrepresented transformations remain untested.

\begin{table}[h]
\caption{Certificate-health quantities recorded per episode.}
\label{tab:cert-health-app}
\centering
\small
\begin{tabularx}{\linewidth}{@{}p{0.37\linewidth}p{0.23\linewidth}X@{}}
\toprule
Quantity & Statistic & Interpretation \\
\midrule
\texttt{forward\_simulation\_}\newline\texttt{observed=true} & Rate + CI &
Last event required no graph expansion; not proof of H1 \\
\texttt{op.unknown} expansions & Histogram/quantiles &
Observed abstraction incompleteness \\
Unbindable minimum cuts & Rate + CI &
Graph separation lacking an executor action \\
Stale-state rejection & Rate + CI &
Freshness check exercised before dispatch \\
Effect-obligation failures & Histogram by obligation &
Which of the nine checks fails, and how often \\
Bound action executed & Rate + CI &
Observable executor agreement relevant to H4 \\
Verifier reconstruction mismatch & Rate + cause &
Certificate construction or serialization defect \\
\bottomrule
\end{tabularx}
\end{table}

\end{document}